\title[Jang's equation and marginal surfaces]{Jang's equation and its applications to marginally trapped surfaces}
\author[L. Andersson]{Lars Andersson}%${}^\dagger$} 
\email{laan@aei.mpg.de}
\address{Albert Einstein Institute, Am M\"uhlenberg 1, D-14476 Potsdam,
  Germany}
\author[M. Eichmair]{Michael Eichmair${}^\dagger$}
\email{eichmair@math.mit.edu}
\thanks{${}^\dagger$ Research partially supported by Australian Research Council
    Discovery Grant \#DP0987650 and by the NSF grant DMS-0906038.}
\address{Department of Mathematics, MIT, 77 
Massachusetts Avenue, Cambridge, MA 02139, USA}
\author[J. Metzger]{Jan Metzger} 
\email{jan.metzger@uni-potsdam.de}
\address{Universit\"at Potsdam, Institut für Mathematik, Am Neuen
  Palais 10, 14469 Potsdam, Germany}
\date{June 23, 2010}
\newcommand{\mnote}[1]{}        %no Marginal Note
\newcommand \R      {\operatorname{R}}
\newcommand \inj    {\operatorname{inj}}
\newcommand \tr     {\operatorname{tr}}
\renewcommand \div  {\operatorname{div}}
\newcommand \dist   {\operatorname{dist}}
\newcommand \mc     {\operatorname{H}}
\renewcommand \L      {\operatorname{L}}
\newcommand \Rad      {\operatorname{Rad}}
\newcommand \vol {\operatorname {vol}}
\newcommand{\IR}{\mathbb{R}}
\renewcommand{\div}{\operatorname{div}}
\newcommand{\la}{\langle}
\newcommand{\ra}{\rangle}
\newcommand{\del}{\partial}
\newcommand{\Vol}{\operatorname{Vol}}
\newcommand{\Area}{\operatorname{Area}}
\newcommand {\Ric}  {\operatorname{Ric}}
\newcommand {\Scal}  {\operatorname{R}}
\newcommand {\graph}{\operatorname{graph}}
\newcommand \T {\operatorname {T}}
\newcommand{\eps}{\varepsilon}
\newcommand{\JM}{{\hat M}}
\newcommand{\Jg}{{\hat g}}
\newcommand{\Jk}{{\hat k}}
\newcommand{\Jnu}{{\hat\nu}}
\newcommand{\CT}{\mathcal{T}}
\newtheorem {theorem}    {Theorem}    [section]
\newtheorem {lemma}      {Lemma}      [section]
\newtheorem {definition} {Definition}   [section]
\newtheorem* {conjecture*}       {Conjecture}
\newtheorem* {theorem*}          {Theorem}
\newtheorem* {acknowledgements*} {Acknowledgements}
\theoremstyle {definition} 
\theoremstyle {definition}
\begin{document}
\begin{abstract} 
  In this paper we survey some recent advances in the analysis of
  marginally outer trapped surfaces (MOTS). We begin with a systematic
  review of results by Schoen and Yau on Jang's equation
  and its relationship with MOTS. We then explain recent work on
  the existence, regularity, and properties of MOTS and discuss the
  consequences for the trapped region. We include an outlook with some
  directions for future research.
\end{abstract}

\maketitle

\tableofcontents

\section{Introduction}
\label{sec:introduction}
Given a Riemannian $3$-manifold $(M,g)$ and a symmetric $(0, 2)$-tensor $k$ on $M$,
the triple $(M,g,k)$ is an initial data set for flat Minkowski spacetime if
and only if the overdetermined 
system of equations
\begin{equation}\label{eq:jang-thm1}  
\begin{split} 
g_{ab} &= g^{\text{flat}}_{ab} - D_a u D_b u \\
k^{ab} &= \frac{D^a D^b u}{(1+D^c u D_c u)^{1/2}} 
\end{split} 
\end{equation} 
has a solution for some flat metric $g^{\text{flat}}$. 
Here indices are raised using the
metric $g_{ab}$. 

This statement and its proof appear in the paper \cite{Jang:1978} by Pong-Soo
Jang, who attributes it to Robert Geroch. 
The details of the calculation leading to \eqref{eq:jang-thm1} can be found in
\cite[Appendix]{Jang:1978}, see also \cite[Appendix C]{Bray-Khuri:2009A}. 
In his paper Jang 
sets out to generalize Geroch's approach to proving the positive mass theorem
(based on the inverse mean curvature flow) from the case of time-symmetric
initial data to the general case.  

Recall that the Geroch mass
 for a two-surface $\Sigma \subset M$ with scalar curvature $\R_\Sigma$ 
and mean curvature $\mc$ is defined
by 
\begin{figure}[!hbt]
\centering 
\resizebox{!}{2in}{\input{pic/jang-lor}} 
\caption{}
\end{figure}
\begin{equation}
16 \pi m_{\text{Geroch}}(\Sigma) = \sqrt{\frac{|\Sigma|}{16\pi}} 
 \int_\Sigma (2\R_\Sigma - \mc_\Sigma^2 )d\mu_\Sigma. 
\end{equation}
The Geroch mass is a specialization of
the Hawking mass \cite{hawking:1968} to the time-symmetric case. The 
explicit form given here appeared in \cite{jang:wald:1977}.
Although phrased slightly differently in \cite{Geroch:1973}, 
Geroch's argument for the positive mass theorem is 
based on the observation that $m_{\text{Geroch}}(\Sigma_s)$ is monotone
increasing 
for a smooth family $\Sigma_s$ moving in the normal direction with speed
given by the inverse mean curvature, in a $3$-manifold of non-negative scalar
curvature. This \emph{inverse mean curvature flow} 
was later analyzed by Huisken and Illmanen \cite{Huisken-Ilmanen:2001}, 
who were able to prove
monotonicity of the Geroch mass for a  weak
version of the flow, and use this to  give a proof of the Riemannian Penrose inequality. 

In generalizing Geroch's argument, Jang introduces the defects
\begin{equation}
  \label{eq:2-intro}
  \begin{split}
    \Jg_{ab}(u)
    &=
    g_{ab} + D_a u D_b u  
    \\    
    \Jk_{ab}(u)
    &=
    k_{ab} - \frac{D_a D_b u}{(1 + D^c u D_c u)^{1/2}} .
  \end{split}
\end{equation}
The condition that $(M,g,k)$ forms initial data for Minkowski space is
therefore equivalent to the condition that $\Jk(u) = 0$ and $\Jg(u)$
be flat for some function $u$.  Taking the trace of $\Jk(u)$ with
respect to the metric $\Jg(u)$ yields the quasilinear equation
\begin{equation}
  \label{eq:jang}
  \left(g^{ab} - \frac{D^a u D^b u}{1 + D^c u D_c u}\right)
  \left(k_{ab} - \frac{D_a D_b u}{\sqrt{1 + D^c u D_c u}}\right) = 0,
\end{equation}
which in particular must be satisfied for the height function 
of any spacelike
asympotically flat hypersurface in Minkowski space. 
This is Jang's equation. 

At this point it is convenient to note that $\Jg_{ab}(u)$ is precisely the
metric induced on the graph $\JM = (x, u(x))$ in the Riemannian product
space $(M \times \mathbb{R}, g + dt^2)$.
\begin{figure}[!hbt]
\centering 
\resizebox{!}{2in}{\input{pic/jang-riem}} 
\caption{}
\end{figure}
Let 
\begin{align*} 
\mc_{\hat M} &=  \Jg^{ab} \frac{D_a D_b u}{(1+D^c u D_c u)^{1/2}} \\
\tr_{\hat M} (k) &= \Jg^{ab} k_{ab} .
\end{align*} 
Then $\mc_{\hat M}$ is the mean curvature of the graph $\hat M$, 
with respect to the downward pointing normal, 
and 
$\tr_{\hat M} (k)$ is the
trace of the restriction to $\hat M$ of the pullback of $k_{ab}$ to the product $M \times
\mathbb{R}$ via the canonical projection $\pi:M\times\IR \to M$. Now
we can write Jang's equation in the form 
\begin{equation}\label{eq:jang-prime}
  \tag{\ref{eq:jang}$\prime$}
\mc_{\hat M} - \tr_{\hat M}(k) = 0
\end{equation} 
Assuming that the triple $(M,g,k)$ is the induced geometric
data for a hypersurface in a spacetime satisfying the dominant energy
condition, the induced scalar curvature is non-negative modulo a divergence
term (which of course can be large). Jang then, following the approach taken
by Geroch in the case of non-negative scalar curvature,
introduces a modified inverse mean curvature flow depending on a solution 
of Jang's equation, as well as an adapted Geroch mass that he shows to be 
formally monotone along his flow. If these steps outlined by Jang can be 
made rigorous, then his arguments lead to a proof of the positive energy 
theorem in this general situation.

Jang's work has not been developed further due to the fact that an
effective theory for existence and regularity of solutions of Jang's
equation~\eqref{eq:jang} was lacking until the work of Schoen and Yau, who applied Jang's
equation differently from the original intention by using it to
reduce the space-time positive mass theorem to the time symmetric
case. Further, it is not clear how to define an appropriate weak
solution of the modified IMCF introduced by Jang. 

\subsection{Jang's equation and positivity of mass} 
A complete proof of the positive mass theorem was first given by
Schoen and Yau \cite{Schoen-Yau:1979:PMTI}, for the special case of
time-symmetric initial data. They then extended their result to
general, asympotically flat initial data satisfying the dominant
energy condition by using Jang's equation to ``improve'' the
properties of the initial data in \cite{Schoen-Yau:1981:PMTII}. We
describe here several aspects of Jang's equation which
play a fundamental role in their work.

Firstly, Jang's equation is closely analogous to the equation 
\begin{equation}\label{eq:mots} 
\mc_\Sigma + \tr_\Sigma (k) = 0 
\end{equation} 
defining marginally outer trapped surfaces $\Sigma \subset M$, where as above $\mc_\Sigma$, $\tr_\Sigma(k)$ are the mean curvature of $\Sigma$
and the trace of $k$ restricted to $\Sigma$, respectively. 

Equations of minimal surface type 
may have blow-up solutions on general
domains and an important step in \cite{Schoen-Yau:1981:PMTII}  is 
the analysis of the blow-up sets
for the solutions of Jang's equation. 
At the boundaries of the blow-up sets, the graph of $u$ is
asymptotically vertical, asymptotic to cylinders over 
marginally outer (or inner)
trapped surfaces -- here the above mentioned relation of Jang's equation to
the MOTS equation comes into play.  

Secondly, the induced geometry of the graph $\hat M$ of a solution of
Jang's equation can be confomally changed to a metric with zero scalar
curvature without increasing the mass. 

The fundamental reason for this is that the analogue of the stability
operator for $\hat M$, i.e., the linearization of Jang's equation, has,
in a certain sense, non-negative spectrum. 
Equation (\ref{eq:jang}) is translation invariant in the vertical direction. One of
the consequences of this fact is that the non-negative  
lapse for the foliation of $M \times \mathbb{R}$ arising by
this translation can be viewed as 
a principal eigenfunction of the linearization 
of equation (\ref{eq:jang}), with eigenvalue zero. 

This spectral property allows one to prove the inequality 
\begin{equation} \label{eqn:stability_inequality_intro}
\int_{\hat {M}}  \R_{\JM} \phi^2 + 2 |D_{\JM} \phi|^2 \geq 0
\end{equation}
valid for any compactly supported Lipschitz function $\phi$
on $\hat M$, which in turn implies that the Yamabe invariant of $\JM$
is non-negative. This means that $(\hat M, \hat g)$ is conformal to a metric
of non-negative scalar curvature. In fact,
equation~\eqref{eqn:stability_inequality_intro} is stronger, since the
Yamabe operator has the factor $8$ instead of the factor $2$ in front
of the $|D_{\JM} \phi|^2$. The extra term
in~\eqref{eqn:stability_inequality_intro} with respect to the Yamabe
operator allows to control the change of the mass under this conformal
deformation. We refer to section~\ref{sec:posit-mass-theor} for the
actual calculation.  In performing the conformal transformation to
zero scalar curvature, the cylinders of the marginal boundary
components are conformally blown down to (singular) points.

The relationship between existence of solutions to Jang's equation,
the existence of MOTS, and concentration of matter was observed and
exploited in \cite{Schoen-Yau:1983:cond}.

This problem has been revisited more recently by Yau \cite{Yau:2001},
and by Galloway and O'Murchadha \cite{Galloway-OMurchadha:2008}.

In spite of a great deal of activity related to the positive mass
theorem and minimal surfaces in the years following the Schoen-Yau and
Witten proofs around 1980, little attention has been paid to Jang's
equation and MOTS from an analytical point of view until relatively
recently.

\subsection{Existence and regularity of MOTS} 
As mentioned above, the close analogy between Jang's equation, the
MOTS equation, and the minimal surface equation was exploited in
the work of Schoen and
Yau \cite{Schoen-Yau:1981:PMTII}. In particular, in that paper  
ideas from the regularity theory for minimal surfaces were applied 
to Jang's equation. The positivity of the analogue of the 
stability operator, as discussed above, plays a
central role here, completely analogous to the situation in minimal
surface theory. 
The analogy between minimal surfaces and MOTS was further developed in 
\cite{Andersson-Mars-Simon:2005}, 
where stability for MOTS was stated in
terms of non-negativity of 
the (real part of the)  
spectrum of the stability operator. The positivity property of
the stability operator for a strictly stable MOTS was used there to prove local
existence of apparent horizons. 
Further, in 
\cite{Andersson-Metzger:2005}, the curvature estimates for stable MOTS were
developed along the same lines as the regularity estimates for stable minimal
surfaces and for Jang's equation.

MOTS are not known to be stationary for an elliptic variational
problem on $(M,g,k)$. This means that the direct method of the
calculus of variation is not available to approach existence theory in
parallel with minimal or constant mean curvature surfaces. The results
in \cite{Schoen-Yau:1981:PMTII} lead Schoen \cite{Schoen:2004}
to suggest to prove existence of MOTS between a trapped and an
untrapped surface by forcing a blow-up of solutions of the
(regularized) Jang's equation. In order to carry out this program one
would like to construct a sequence of solutions to the Jang's
equations whose boundary values diverge in the limit. The physically
suggestive one-sided trapping assumptions proposed by Schoen are not
sufficient to accomplish this directly.

These technical difficulties were first overcome in
\cite{Andersson-Metzger:2009} using a bending procedure for the data
to convert the one-sided trapping assumption into the two-sided
boundary curvature conditions necessary to solve the relevant
Dirichlet problems, hence leading to a satisfying existence theory for
closed MOTS, and subsequently in an independent approach using the
Perron method in \cite{Plateau}. These two constructions have
established further features of MOTS related to their stability
\cite{Andersson-Metzger:2005,Andersson-Metzger:2009}, outward
injectivity \cite{Andersson-Metzger:2009} in low dimensions, and
almost-minimizing property \cite{Plateau}.  These properties confirm that
MOTS are in many ways very similar to minimal surfaces and surfaces
with prescribed mean curvature, which they generalize, even though
they do not arise variationally except in special cases.

The Perron method was used to solve the Plateau problem for MOTS in
\cite{Plateau} and also to extend the existence theory for closed MOTS
to more general prescribed mean curvature surfaces that do not arise
from a variational principle, including generalized apparent
horizons (see \cite{Bray-Khuri:2009A})
in \cite{GAH}. The combination of the almost minimizing property of
MOTS and the Schoen-Simon stability theory \cite{Schoen-Simon:1981}
introduced to this context in \cite{GAH} provide a convenient
framework for the analysis of MOTS in arbitrary dimension, allowing
techniques from geometric measure theory to enter despite the lack of
a variational principle.

\subsection{Overview of this paper} 
In section \ref{sec:preliminaries} we introduce notation and give some
technical preliminaries.  Section \ref{sec:theory} provides a
systematic and detailed overview of the analysis of Jang's equation
and the MOTS equation. As mentioned above, the solutions to Jang's
equation in general exhibit blow-up, and boundaries of the blow-up
regions are marginally outer (or inner) trapped. Section
\ref{sec:existence_of_MOTS_due_to_blow_up} explains how this fact can
be exploited for proving the existence of MOTS in regions whose
boundaries are trapped in an appropriate sense.  Stability of MOTS is
discussed in subsection \ref{sec:stability_of_MOTS}, where we also
describe a new result on stability of solutions to the Plateau problem for
the MOTS equation.  Section \ref{sec:applications} discusses in detail
some of the main applications of Jang's equation in general
relativity, including the positive mass theorem, formation of black
holes due to condensation of matter and the existence of outermost MOTS. Finally, section
\ref{sec:outlook} gives an overview of some open problems and
potential new applications of Jang's equation and generalizations
thereof.

%%% Local Variables: 
%%% mode: latex
%%% TeX-master: "master"
%%% ispell-dictionary: en_US
%%% End: 

\section{Preliminaries}
\label{sec:preliminaries}

\subsection{Initial data sets and MOTS}
\label{sec:initial_data_sets_and_MOTS} In this section we introduce
the notation, sign conventions, and terminology used in this
survey. Classical references for this material are \cite{HawEll} and
\cite{Wald}.

An {\it initial data set} is a triple $(M, g, k)$ where $M$
is a complete $3$-dimensional manifold, possibly with boundary,
together with a positive definite metric $g$ and a symmetric $(0, 2)$
tensor $k$. In the context of general relativity, such triples arise
as embedded spacelike hypersurfaces of time-orientable Lorentzian
manifolds $(\bar{M}, \bar g)$, referred to as the spacetime, with
induced metric $g$ and (future directed) second fundamental form
$k$. Hence, if $\eta$ is a future directed normal vector field of $M
\subset \bar{M}$ such that $\bar g(\eta, \eta) \equiv -1$, and if
$\xi, \zeta \in T_p M \subset T_p \bar M$ where $p \in M$, then
$k(\xi, \zeta) = \bar{g} (\bar D_\xi \eta, \zeta)$. Here, $\bar D$ is
the Levi-Civita connection of the spacetime.

Now let $\Sigma \subset M$ be an embedded two-sided
$2$-surface and let $\nu$ be a unit normal vector field of $\Sigma
\subset M$. We write $h$ for the second fundamental form of $\Sigma$
with respect to $\nu$ so that $h(\xi, \zeta) = g(D_\xi \nu, \zeta)$
for tangent vectors $\xi, \zeta \in T_p\Sigma$ for any $p \in \Sigma$,
where $D$ is the Levi-Civita connection of $(M, g)$. We may think of
$\nu$ as a vector field along $\Sigma \subset \bar{M}$ so that $l =\nu
+ \eta$ is a future directed null vector field of $\Sigma$ when viewed
as a spacelike $2$-surface of the spacetime. Note that then $\chi
(\xi, \zeta) := (h + k)(\xi, \zeta) = \bar g (\bar D_\xi l, \zeta)$
for tangent vectors $\xi, \zeta \in T_p \Sigma$. This symmetric $(0,
2)$-tensor $\chi$ is the null second fundamental form of $\Sigma
\subset M$ with respect to $l$. Its trace $\theta_\Sigma = \tr_\Sigma
\chi$ is called the expansion of $\Sigma$ with respect to the
null-vector field $l$. Note that $\theta_\Sigma = \mc_\Sigma +
\tr_\Sigma k$ where $\mc_\Sigma$ is the mean curvature of $\Sigma
\subset M$ with respect to the unit normal $\nu$. 

In many places in this survey, $\Sigma \subset M$ has a clearly
designated {\it outward} unit normal. If we compute the null-second
fundamental form and expansion of such a surface $\Sigma$ with respect
to the corresponding future-directed \emph{outward} null-normal, we
will write $\chi^+_\Sigma$ and $\theta^+_\Sigma$ for emphasis. If
$\Sigma$ is such a $2$-surface, with $\theta_\Sigma^+ \equiv 0$ on
$\Sigma$, then we say that $\Sigma$ is a {\it marginally outer trapped
  surface} or {\it MOTS} for short.

If we use the future-directed \emph{inward} unit normal to compute the
expansion we write $\chi^-_\Sigma$ and $\theta^-_\Sigma$.  If
$\theta^-_\Sigma \equiv 0$ we say that $\Sigma$ is a {\it marginally
  inner trapped surface} or {\it MITS}. For ease of exposition, we say
that a two-sided surface $\Sigma \subset M$ is an \emph{apparent horizon} if
it is either a MOTS or a MITS, i.e., if $\mc_\Sigma + \tr_\Sigma(k)
\equiv 0$ holds for one of the two possible consistent choices to
compute the mean curvature scalar.

The following spacetime analogue of the Bonnet-Myers theorem, whose
proof uses the Raychaudhuri rather than the Riccati equation, lies at
the heart of the Penrose-Hawking singularity theorems of general
relativity (cf. \cite[Proposition 4.4.3]{HawEll}): if the \emph{null
  energy condition} holds in $\bar{M}$, i.e. if for the spacetime
Ricci tensor $\overline \Ric(\upsilon, \upsilon) \geq 0$ holds for all
points $q \in \bar{M}$ and null vectors $\upsilon \in T_q \bar{M}$,
and if the expansion $\theta_\Sigma$ of $\Sigma \subset \bar{M}$ with
respect to the null vector field $l$ along $\Sigma$ is negative at $p
\in \Sigma$, then a null geodesic emanating from $p$ in direction
$l(p)$ has conjugate points within a finite affine
distance. Physically, this means that the surface area of a shell of
light emanating from $\Sigma$ near $p$ will go to zero before it has a
chance to `escape to infinity'.

Frequently, additional assumptions are included in the
definition of an initial data set in the literature. In this survey we
will always state such extra hypotheses explicitly when needed. Two
such extra assumptions will be particularly relevant for us. First,
recall \cite[p. 219]{Wald} that a spacetime $(\bar{M}, \bar g)$ is
said to satisfy the {\it dominant energy condition} if its
stress-energy tensor $\T := \overline{\Ric} - \frac{1}{2}\bar \R \bar
g$ has the property that for every $p \in \bar{M}$ the vector dual to
the one form $ - \T(\eta, \cdot)$ with respect to $\bar {g}$ is a
future directed causal vector in $T_p \bar{M}$ for every future
directed causal vector $\eta \in T_p \bar{M}$. Here $\overline \Ric$
and $\overline \R$ respectively denote the spacetime Ricci tensor and
spacetime scalar curvature. Note that this dominant energy condition
implies the null energy condition used above
in connection with the formation of caustics along light-like
geodesics in the spacetime. If $\eta$ arises as above as the future
directed normal vector field of a spacelike hypersurface $M \subset
\bar{M}$, then one can use the Gauss and the Codazzi equations to
express the normal-normal component (abbreviated by $\mu$) and the
normal-tangential part of $\T$ (written as a one form ${J})$ along $M$
entirely in terms of the initial data $(M, g, k)$. Explicitly,
\begin{align}
  \label{eqn:mu}
  \frac{1}{2} \left( \R_M - |k|_M^2 + (\tr_M(k))^2 \right) &=: \mu
  \\
  \div_M \left( k - \tr_M(k)g \right) &=: J
  \label{eqn:J}
\end{align}
where $\R_M$ is the scalar curvature of $(M, g)$, $D$ is its
Levi-Civita connection, and where $|k|_M^2$ and $\tr_M(k)$ are the
square length and trace of $k$ with respect to $g$. Physically, $\mu$
and $ J$ are, respectively, the energy density 
and the current
density of an observer traveling with $4$-velocity $\eta$. A
consequence of the dominant energy condition for the spacetime
$(\bar{M}, \bar g)$ is that $\mu \geq | J|$ must hold on $M$. By
slight abuse of language one says that an initial data $(M, g, p)$
satisfies the {\it dominant energy condition} if $\mu \geq | J|$
holds. An important special case of the dominant energy condition is
when $M$ is {\it a maximal slice} of the spacetime, i.e. when
$\tr_M(k) \equiv 0$, so that the dominant energy condition implies
that the scalar curvature $\R_M \geq 0$ is non-negative. In the case
where $k \equiv 0$, $M \subset \bar M$ is called a {\it time symmetric
  slice} or maybe more precisely a {\it totally geodesic slice}. In
the case of time-symmetric data $(M, g, k \equiv 0)$, the dominant
energy condition is equivalent to $R_M \geq 0$, and apparent horizons
$\Sigma \subset M$ are precisely minimal surfaces.

For our discussion of the positive mass and positive energy theorem in
subsection \ref{sec:posit-mass-theor} we will also need the following
definition (stated here as in \cite{Schoen-Yau:1981:PMTII}): An
initial data set $(M, g, k)$ is said to be \emph{asymptotically flat}
if there is a compact set $K \subset M$ so that $M \setminus K$ is
diffeomorphic to a finite number of copies of $\mathbb{R}^3 \setminus
\bar B(0, 1)$ (each corresponding to an \emph{end}), and such that
under these diffeomorphisms
\begin{align*}
  &|g_{ij} - \delta_{ij}|
  + |x| |\partial_p g_{ij}|
  + |x|^2 | \partial^2_{pq} g_{ij}|
  = O(|x|^{-1})
  \text { and }
  |\R_M| + |\partial_p \R_M|
  =
  |x|^{-4}
  \intertext { as well as }
  &|k_{ij}|
  + |x| |\partial_p k_{ij}|
  + |x|^2 |\partial^2_{pq} k_{ij}|
  = O (|x|^{-2})
  \text { and }
  \left|\sum_{i=1}^3 k_{ii}\right|
  = O(|x|^{-3})
\end{align*}
as $|x| := \sqrt{(x^1)^2 + (x^2)^2 + (x^3)^2} \to \infty$ on
each end. For the rigidity part of the positive mass theorem it will
also be necessary to assume that the third and forth order derivatives
of the metric are $O(|x|^{-4})$. 
These conditions guarantee that the ADM energy
\begin{align*}
  E_\text{ADM}
  &=
  \frac{1}{16 \pi} \lim_{r \to \infty } \sum_{i, j=1}^3
  \int_{|x| = r} \left( \partial_i g_{ij} - \partial_j g_{ii}\right) \frac{x^j}{|x|} d \mathcal{H}^2
  \intertext{and the ADM linear momentum}
  P_{\text{ADM}}^l
  &=
  \frac{1}{ 8 \pi} \lim_{r \to \infty} \sum_{j=1}^3
  \int_{|x| = r} |x|^{-1}\left( x^j k^l_j - x^l k^j_{j} \right) d\mathcal{H}^2
\end{align*}
are well-defined, see \cite{ADM:1961,
Bartnik:1986,omurchadha:1986}. 
The spacetime coordinate transformations which leave the asympototic
conditions invariant are asymptotically Lorentz transformations and the ADM
4-momentum vector $P_\text{ADM}^\mu = (E_\text{ADM},
P_\text{ADM}^l)$ is Lorentz covariant under such transformations. 
In particular, the ADM mass $m_\text{ADM} = \sqrt {-
  P_\text{ADM}^\mu (P_\text{ADM})_\mu}$ is a coordinate independent
quantity. 
Note that we will be sloppy
in the sequel and refer to the positive energy theorem, i.e. the
question whether $E_\text{ADM}\geq 0$, by the term \emph{positive mass
  theorem}, which would rather be appropriate for the statement
$m_\text{ADM}\geq 0$.

\subsection{Linearization of the expansion}
\label{sec:second_variation_formulae}
Marginally outer trapped surfaces $\Sigma \subset M \subset \bar{M}$
in general initial data sets are not known to arise as critical points
(or indeed to occur as minimizers) of a standard variational problem
described in terms of the data $(M, g, k)$. (However, note that {by
definition} a MOTS $\Sigma \subset M$ is a critical point for the area
functional inside the future-directed null-cone of $\Sigma \subset
\bar{M}$.) In recent years, properties of MOTS akin to those of
minimal surfaces have been introduced and investigated. In this
subsection we discuss the linearization of the expansion
$\theta_\Sigma$ with respect to variations in $M$. This sets the
ground for a discussion of the natural notion of stability of MOTS in
subsection \ref{sec:stability_of_MOTS}.

Let $\Sigma \subset M$ be a two-sided hypersurface and let $\varphi_s$
be a smooth family of diffeomorphisms of $M$ parametrized by $s \in
(-\delta, \delta)$ so that $\varphi_0$ is the identity. Assume that
$\frac{d}{ds}|_{s =0} \varphi_s = \sigma + f \nu$ on $\Sigma$, where
$\sigma \in \Gamma (T \Sigma)$ is a tangential vector field, where $f$
is a smooth scalar function on $\Sigma$, and where $\nu$ is a smooth unit
normal vector field of $\Sigma$. The mean curvature scalar and
tangential trace of $k$ of $\Sigma_s := \varphi_s (\Sigma)$ can be
viewed as functions on $\Sigma$ via pullback by $\varphi_s$, and they
are smooth functions of $s$. Here we agree that the mean curvature
scalar of $\Sigma$ is computed as the tangential divergence
$\div_\Sigma (\nu)$ of $\nu$. It follows that
\begin{equation}
  \label{eqn:variation_mean_curvature}
  \begin{split}
    - \la D_\Sigma \mc_\Sigma, \sigma \ra + \frac{d}{ds}|_{s = 0}
    \varphi_s^* \mc_{\Sigma_s} 
    &=
    - \Delta_\Sigma f - \left(|h|_\Sigma^2 + \Ric (\nu, \nu) \right)
    f, 
    \text{ and} 
    \\
    - \la D_\Sigma \tr_\Sigma(k), \sigma \ra + \frac{d}{ds}|_{s=0}
    \varphi_s^* \tr_{\Sigma_s} (k)
    &=
    2 k (\nu, D_{\Sigma } f) + D_\nu (\tr_M (k)) f - (D_\nu k)(\nu, \nu) f.
  \end{split}
\end{equation}
Here, $\Delta_\Sigma$ and $D_\Sigma$ denote respectively the
non-positive Laplacian and the gradient operator of $\Sigma$ with
respect to the induced metric, $|h|_\Sigma$ is the length of the
second fundamental form of $\Sigma$, and $\Ric$ and $D$ are the
ambient Ricci curvature tensor and covariant derivative operator. The
literature often refers to the first identity as the Riccati or Jacobi
equation, in particular when the tangential part $\sigma$  of the variation
 vanishes identically. If $X \in \Gamma (T \Sigma)$ denotes the
tangential part of the vector field dual to $k(\nu, \cdot)$ one can
compute that
\begin{equation*}
  (D_\nu k)(\nu, \nu)
  =
  - \mc_\Sigma k(\nu, \nu) + \la h, k \ra_\Sigma + (\div_M k) (\nu) - \div_\Sigma X.
\end{equation*}
Using the Gauss equation and the expressions for the
 mass density $\mu$ in (\ref{eqn:mu}) to substitute for
$\Ric(\nu, \nu)$, and the  current density $ J$ in (\ref{eqn:J})
to re-write $\div_M k$, it follows that
\begin{equation}
  \begin{split}
    &- \la D_\Sigma \theta_\Sigma, \sigma \ra
    + \frac{d}{ds}|_{s=0} \varphi_s^* \theta_{\Sigma_s}
    \\
    &\quad=
    - \Delta_\Sigma f + 2 \la X, D_\Sigma f \ra
    \\
    &\qquad + \left( \tfrac{1}{2} \R_\Sigma - \tfrac{1}{2} |h + k|_\Sigma^2 - J (\nu)
      - \mu + \div_\Sigma X - |X|^2 + \tfrac{1}{2} \theta_\Sigma(2 \tr_M (k)
      - \theta_\Sigma) \right) f 
    \\
    &\quad =: \L_\Sigma f 
  \end{split}
\end{equation}
where $\theta_{\Sigma_s} := \mc_{\Sigma_s} + \tr_{\Sigma_s} (k)$ is
the expansion of $\Sigma_s$. See \cite[(1)]{Andersson-Mars-Simon:2005}
and \cite[(2.3)]{Galloway-Schoen:2006}, and also
\cite[(2.25)]{Schoen-Yau:1981:PMTII} for an important special case of
this formula that we will come back to in subsection \ref{sec:stability_of_MOTS}. Following \cite{Galloway-Schoen:2006} we point
out that when $\Sigma$ is a MOTS
and $f>0$, one can rearrange and express $\frac{1}{f} \L_\Sigma f$
more compactly as
\begin{equation}
  \label{eqn:Galloway-Schoen_rearrangement}
  \div_\Sigma (X - D_\Sigma \log f) - |X - D_\Sigma \log f|_\Sigma^2 +
  \frac{1}{2} \R_\Sigma - |h + k|_\Sigma^2 - {J}(\nu) - \mu.
\end{equation}
Note that if the dominant energy condition $\mu \geq | J|$ is assumed,
then the last three terms make a non-positive contribution to this
expression.

%%% Local Variables: 
%%% mode: latex
%%% TeX-master: "master"
%%% ispell-dictionary: en_US
%%% End: 

\section{Analytical aspects of Jang's equation and MOTS}
\label{sec:theory}

\subsection {Jang's equation} \label{sec:Jangs_equation} Given a
compact subset $\Omega \subset M$ with smooth boundary, $F \in
\mathcal{C}^1 (\bar \Omega)$, and a function $\phi$ defined on
$\partial \Omega$ we consider the quasi-linear elliptic partial
differential equation expressed in local coordinates $x = (x^1, x^2,
x^3)$ as
\begin{equation}
  \label{eqn:Jangs_equation}
  \begin{split}
    \left(g^{ij} - \frac{u^i u^j}{1+|Du|^2} \right) \left(\frac {D_i D_j u}{\sqrt{1 + |Du|^2} } -
      k_{ij} \right)
    &=
    F (x) \text{ on } \Omega
    \\
    u
    &=
    \phi \text{ on } \partial \Omega 
  \end{split}
\end{equation}
where $k = k_{ij} dx^i \otimes dx^j$, $g = g_{ij} dx^i
\otimes dx^j$, $g_{ik} g^{kj} = \delta_i^j$, $u^i = g^{ij} \partial_j
u$ are the components of the gradient $D u = u^i \partial_i$ of $u$,
$|D u|^2 = {g^{ij} \partial_i u \partial_j u}$ is its length squared,
and $D_i D_j u = \partial_i \partial_j u - \Gamma^k_{ij} u_k$ are the
components of the second covariant derivative (the Hessian) of $u$ so
that $\nabla du = (D_i D_j u) dx^i \otimes dx^j$, where
$\{\Gamma_{ij}^k\}$ are the Christoffel symbols of the $g$ metric in
these coordinates. The left-hand side of (\ref{eqn:Jangs_equation}) is
independent of the choice of coordinate system and is easily seen
to be of minimal surface type \cite[Chapter 14]{GT}. Indeed, the graph
of $u$ viewed as a submanifold $\graph(u)$ of $(M \times \mathbb{R}, g
+ dt^2)$ (where $t$ is the vertical coordinate) and parametrized by an
atlas of $M$ via the defining map $x \to (x, u(x))$, has, in `base
coordinates', metric tensor $g_{ij} + \partial_i u \partial_j u$ with
inverse $g^{ij} - \frac{u^i u^j}{1 + |Du|^2}$, downward pointing unit
normal vector ${\nu} = \frac{u^i \partial_{i} - \partial_t}{\sqrt{1 +
    |Du|^2}}$, and second fundamental form $h_{ij} = \frac{D_i D_j
  u}{\sqrt{1 + |Du|^2}}$, so that the term
\begin{equation*}
  \mc(u) := \left(g^{ij} - \frac{u^i u^j }{1+|Du|^2}
  \right) \frac {D_i D_j u}{\sqrt{1 + |Du|^2} }
\end{equation*}
in (\ref{eqn:Jangs_equation}) can be interpreted
geometrically as the mean curvature of $\graph(u)$, while the
remainder
\begin{equation*}
  \tr(k)(u):=\left(g^{ij} - \frac{u^i u^j}{1+|Du|^2}
  \right) k_{ij}
\end{equation*}
computes the trace of $k$ (viewed as a tensor on the
product manifold trivially extended, so that $k (\partial_t, \cdot)
\equiv 0$) over the tangent space of $\graph(u)$. We extend these
geometric definitions in the obvious way to two-sided hypersurfaces
$\Sigma \subset M \times \mathbb{R}$ and write $\mc_\Sigma$ and
$\tr_\Sigma(k)$ (they are functions on $\Sigma$). Note that in order
to interpret the mean curvature scalar unambiguously we need to
specify a normal vector field of $\Sigma$, while $\tr_\Sigma(k)$ makes
sense independently. If $\Sigma = \graph (u)$ arises as above,
$\mc_\Sigma$ will always be computed with respect to the downward unit
normal (i.e. as its tangential divergence) so that $\mc_\Sigma (x,
u(x)) = \mc(u)(x)$. When $\Sigma$ is a hypersurface in the base $M$,
then clearly $\tr_\Sigma (k) = \tr_{\Sigma \times \mathbb{R}} (k)$ and
also $\mc_\Sigma = \mc_{\Sigma \times \mathbb{R}}$ provided the
orientations match. It is clear that, with appropriate
identifications, the geometric operators $\mc$ and $\tr(k)$ are
continuous with respect to $\mathcal{C}^2$ and $\mathcal{C}^1$
convergence of hypersurfaces respectively.

\subsection {Schoen and Yau's regularization of Jang's equation}
\label{sec:regularized_equation}

In \cite{Schoen-Yau:1981:PMTII}, R. Schoen and S.-T. Yau introduced
the geometric perspective on solutions of Jang's equation discussed in
subsection \ref{sec:Jangs_equation} and showed that solutions should
only be expected to exist in a certain sense that we will explain in
detail in subsection \ref{sec:regularization_limit}. They used their
existence theory for Jang's equation to reduce the spacetime version
of the positive energy theory (with general $k$ satisfying the
dominant energy condition) to the special case
\cite{Schoen-Yau:1979:PMTI} of the positive energy theorem where the
scalar curvature of the initial data set is non-negative. The analytic
difficulty with Jang's equation $\mc(u) - \tr(k)(u) =0$ is the lack of
an a priori estimate for $\sup_{\Omega} |u|$ due to its zero order
term $\tr_M(k) = g^{ij}k_{ij}$. That this is not just a technical
difficulty but a fundamental aspect of Jang's equation will become
apparent below in Theorems \ref{thm:eardley},
\ref{thm:condensation}. The approach of \cite{Schoen-Yau:1981:PMTII}
to bypass this issue is a positive capillarity regularization term: 
\begin{theorem} [Schoen and Yau]
  \label{thm:existence_for_regularized_equation} 
  Let $(M, g, k)$ be an initial data set, $\Omega \subset M$ a bounded
  subset with $\mathcal{C}^{2, \alpha}$ boundary, $\phi \in
  \mathcal{C}^{2, \alpha}(\partial \Omega)$, $0 < \tau \leq 1$, and
  assume that $\mc_{\partial \Omega} - |\tr_{\partial \Omega} (k)| >
  \tau \phi$ along $\partial\Omega$. Then there exists a unique solution
  $u_\tau \in \mathcal{C}^{2, \alpha} (\bar \Omega)$ of
  \begin{equation}
    \label{eqn:Dirichlet_regularized_Jangs_equation_Bdry}
    \begin{split}
      \mc(u_\tau) - \tr(k)(u_\tau)
      &=
      \tau u_\tau \text { on } \Omega
      \\
      u_\tau
      &=
      \phi \text { on } \partial \Omega.
    \end{split}
  \end{equation}
  If $(M, g, k)$ is asymptotically flat (cf. subsection
  \ref{sec:initial_data_sets_and_MOTS}), then there exists a unique
  solution $u_\tau \in \mathcal{C}^{2, \alpha}(M)$ of
  \begin{equation}
    \label{eqn:Dirichlet_regularized_Jangs_equation_AF}
    \begin{split}
      \mc(u_\tau) - \tr(k)(u_\tau)
      =
      \tau u_\tau
      \text { on } M \text{ with }
      \\
      u_\tau \to 0 \text { as } |x| \to \infty \text { on each
        asymptotically flat end.}
    \end{split}
  \end{equation}
\end{theorem}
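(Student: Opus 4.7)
The plan is to apply the standard continuity method (or, equivalently, Leray--Schauder degree theory) for quasi-linear elliptic Dirichlet problems, as developed in \cite{GT}, Chapters 10--11. The essential observation is that the regularized operator $u \mapsto \mc(u) - \tr(k)(u) - \tau u$ is \emph{proper} in $u$ thanks to the capillarity term, so that a maximum principle applies and one can reduce existence to uniform a priori estimates in $\mathcal{C}^{1,\alpha}$ for solutions of a suitable homotopy of equations.

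First, I would establish the a priori $\mathcal{C}^0$ bound on $\Omega$: at an interior extremum of $u_\tau$ the gradient vanishes and the Hessian is appropriately signed, so the equation reduces to $\tau u_\tau = -\tr_M(k)$ at that point; combined with the boundary data $\phi$ one obtains
\begin{equation*}
\|u_\tau\|_{\mathcal{C}^0(\bar\Omega)} \leq \max\bigl(\|\phi\|_{\mathcal{C}^0(\partial\Omega)},\, \tau^{-1} \|\tr_M(k)\|_{\mathcal{C}^0(\Omega)}\bigr).
\end{equation*}
Next I would construct barriers at $\partial\Omega$: the hypothesis $\mc_{\partial\Omega} - |\tr_{\partial\Omega}(k)| > \tau \phi$ is precisely what is needed to make functions of the form $\phi + \psi(d(\cdot,\partial\Omega))$, with $\psi$ a convex profile of the distance function to $\partial\Omega$, into upper and lower barriers in a one-sided tubular neighbourhood of the boundary. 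Comparison via maximum principle then yields the boundary gradient estimate $|Du_\tau| \leq C$ on $\partial\Omega$.

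The main obstacle is the \emph{interior} gradient estimate. Here I would follow the approach of Schoen--Yau and exploit the geometric interpretation of $\mc(u) - \tr(k)(u)$: the vertical translation invariance of the unregularized equation means that the function $v := (1+|Du_\tau|^2)^{-1/2}$, i.e.\ the vertical component of the downward normal to $\graph(u_\tau)$ in $(M\times\mathbb{R}, g + dt^2)$, satisfies on the graph a linear equation of Jacobi type with bounded coefficients, modulo terms of size $O(\tau)$ coming from the regularization. A test function argument on the graph, using the already established $\mathcal{C}^0$ bound to localize and to control the $k$-dependent zero-order terms, then gives a positive lower bound on $v$ and hence an upper bound on $|Du_\tau|$ in the interior. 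With gradient and $\mathcal{C}^0$ bounds in hand, the equation becomes uniformly elliptic with bounded measurable coefficients; Krylov--Safonov (or classical DeGiorgi--Nash--Moser) upgrades to $\mathcal{C}^{1,\alpha}$, and Schauder theory then yields $\mathcal{C}^{2,\alpha}$ estimates. Plugging these into the continuity method produces the required solution $u_\tau$, and uniqueness follows from applying the maximum principle to the linear elliptic equation satisfied by the difference of two solutions, whose zero-order coefficient is $-\tau < 0$.

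For the asymptotically flat case, I would exhaust $M$ by domains $\Omega_R = M \cap \{|x| \leq R\}$ and solve \eqref{eqn:Dirichlet_regularized_Jangs_equation_Bdry} on each with $\phi \equiv 0$; the boundary condition $\mc_{\partial\Omega_R} > |\tr_{\partial\Omega_R}(k)|$ is satisfied for $R$ large thanks to the asymptotic flatness of $(g,k)$. The global $\mathcal{C}^0$ bound via the maximum principle is \emph{uniform} in $R$ because $\|\tr_M(k)\|_{\mathcal{C}^0(M)}$ is finite under the asymptotic fall-off. Barriers built from the decay of $\tr_M(k) = O(|x|^{-3})$ force $u_\tau \to 0$ at infinity, and the interior gradient and Schauder estimates are all local, hence uniform in $R$. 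A standard diagonal / Arzelà--Ascoli argument extracts a subsequential $\mathcal{C}^{2,\alpha}_{\loc}$ limit which solves \eqref{eqn:Dirichlet_regularized_Jangs_equation_AF}; uniqueness again follows from the maximum principle applied to the difference.
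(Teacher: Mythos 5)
Your proposal is correct and follows essentially the same route as the paper's (sketched) argument: the $\mathcal{C}^0$ bound from the maximum principle using the capillarity term, Serrin-type boundary barriers built from the hypothesis $\mc_{\partial\Omega} - |\tr_{\partial\Omega}(k)| > \tau\phi$, standard quasilinear elliptic machinery (interior gradient estimate, Schauder, Leray--Schauder/continuity method) for existence, the maximum principle with the strictly negative zero-order coefficient $-\tau$ for uniqueness, and exhaustion by large domains with decay barriers (correctly keyed to the $O(|x|^{-3})$ decay of $\tr_M(k)$) for the asymptotically flat case, which is exactly how Schoen and Yau handle it in the reference the paper defers to.
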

Note that the estimates
\begin{equation*}
  \sup_\Omega \tau |u_\tau| \leq \max
  \{ 3 |k|_{\mathcal{C}(\bar \Omega)}, \tau |\phi|_{\mathcal{C}(\partial
    \Omega)} \}
\end{equation*}
respectively
\begin{equation*}
  \sup_M \tau |u_\tau| \leq 3
  |k|_{\mathcal{C}(M)}
\end{equation*}
are immediate from the maximum principle for
solutions $u_\tau$ of these regularized equations
(\ref{eqn:Dirichlet_regularized_Jangs_equation_Bdry}),
(\ref{eqn:Dirichlet_regularized_Jangs_equation_AF}). The second part
of this theorem is proven in detail in
\cite{Schoen-Yau:1981:PMTII}. The boundary gradient estimates
necessary to establish existence in the first part can be derived from
the boundary curvature condition $\mc_{\partial \Omega} -
|\tr_{\partial \Omega} (k)| > \tau \phi$ from a classical barrier
construction due to J. Serrin (as described in \cite[\S 14]{GT}): a
sufficiently ($\mathcal{C}^2$-) small monotone inward perturbation of
the boundary cylinder $\partial \Omega \times \mathbb{R}$ below the
rim $\{ (x, \phi(x)) : x \in \partial \Omega\}$ will be the graph of a
function $\underline {u}_\tau$ defined near $\partial \Omega$ which
satisfies $\mc (\underline{u}_\tau) - \tr(k) (\underline{u}_\tau) >
\tau \underline{u}_\tau$ (because $\tau>0$) with $\underline{u}_\tau =
\phi$ on $\partial \Omega$ and hence is a sub solution.  The condition
that $\mc_{\partial \Omega} + \tr_{\partial \Omega}(k) > \tau \phi$
along $\partial \Omega$ can be used to construct a super solution
$\overline{u}_\tau$ near $\partial \Omega$ by perturbing inward above
the boundary rim. For concise references and details see \cite[\S
2]{Plateau}.

\subsection{Classical results on minimal graphs and their limits}
\label{sec:minimal_graphs}

In order to motivate and explain the analysis
\cite{Schoen-Yau:1981:PMTII} of the `limits' of solutions as $\tau
\searrow 0$ of the regularized equations $u_\tau$ in Theorem
\ref{thm:existence_for_regularized_equation}, we are going to review
some classical results from minimal surface theory. First, recall that
if $u \in \mathcal{C}^2(\Omega)$ satisfies the minimal surface
equation, then $\Sigma:= \graph(u) \subset \Omega \times \mathbb{R}$
is an \emph{area minimizing boundary} in $\Omega \times
\mathbb{R}$. This means that if we denote by $E = \{ (x, t) : x \in
\Omega \text{ and } t \geq u(x)\}$ the super graph of $u$ in $\Omega
\times \mathbb{R}$, then
\begin{equation*}
  \mathcal{P} (E, W ) \leq \mathcal{P} (F, W) \text{
    for every } F \subset \Omega \times \mathbb{R} \text{ with } E \Delta
  F \subset \subset W \subset \subset \Omega \times \mathbb{R}.
\end{equation*}
Here, $\mathcal{P}(F, W)$ denotes the perimeter of the
set $F$ in $W$ (cf. \cite[p. 5]{Giusti}). This follows from a
classical calibration argument using the closed $3$-form $\eta := (d
\vol_{g + dt^2})\lfloor \nu$ where $ \nu = \frac{1}{\sqrt{1 + |Du|^2}}
(u^i \partial_i - \partial_t)$ denotes the downward normal to $\Sigma$
thought of as a unit normal vector field on $\Omega \times
\mathbb{R}$. Being an area minimizing hypersurface, $\Sigma$ is a
stable critical point of the area functional, and hence
\begin{equation}
  \label{eqn:stable_minimal_graphs}
  \int_\Sigma \left(|h|_\Sigma^2 + \Ric ( \nu, \nu) \right) \phi^2
  \leq
  \int_{\Sigma} |D_\Sigma \phi|^2 \text{ for all } \phi \in
  \mathcal{C}^1_c (\Sigma).
\end{equation}
Here $D_\Sigma \phi$ denotes the tangential gradient of
$\phi$ along $\Sigma$, $|h|_\Sigma$ is the length of the second
fundamental form, and $\Ric$ is the Ricci tensor of the ambient $M
\times \mathbb{R}$. Using the Rayleigh quotient characterization of
the first eigenvalue of an elliptic operator, one sees that the
\emph{stability inequality} (\ref{eqn:stable_minimal_graphs}) is
equivalent to the non-negativity of the Dirichlet spectrum of
$\L_\Sigma$ on compact domains of $\Sigma$, where $\L_\Sigma f := -
\left( \Delta_\Sigma + \Ric ( \nu, \nu) + |h|^2_\Sigma \right) f$ is
the linearization of the mean curvature operator, cf. subsection
\ref{sec:second_variation_formulae} (with $k \equiv 0$). We also
remind the reader that stability is implied by the existence of a
positive function $f > 0$ on $\Sigma$ such that $\L_\Sigma f \geq
0$. This can be seen either analytically, by use of the maximum
principle, or more directly by integrating the pointwise inequality
\begin{equation*}
  0
  \leq \frac{\L_\Sigma f}{f} \phi^2 = \left (\Delta_\Sigma \log f +
    |D_\Sigma \log f|^2 + |h|_\Sigma^2 + \Ric(\nu, \nu)\right) \phi^2
\end{equation*}
over $\Sigma$. Then
\begin{equation}
  \label{eq:integration_by_parts}
  \begin{split}
    \int_\Sigma \left( |h|_\Sigma^2 + \Ric (\nu, \nu)\right) \phi^2
    &=
    \int_\Sigma 2 \la  D_\Sigma \phi, \phi D_\Sigma \log f \ra - \phi^2 |D_\Sigma \log f|^2
    \\
    &
    \leq
    \int_\Sigma |D_\Sigma \phi|^2
  \end{split}
\end{equation}
follows from an integration by parts and an application
of Young's inequality. The vector field $- \partial_t$ generates a
family of diffeomorphisms $\varphi_s$ that act by downward
translation. Clearly, the variations of $\Sigma$ by $\varphi_s$ also
have zero mean curvature, and hence it follows from subsection
\ref{sec:second_variation_formulae} that $\L_\Sigma \left( \frac{1}{v}
\right) = 0$, where as usual $v = \sqrt{1 + |Du|^2}$. Hence
$\frac{1}{v}$ is a positive Jacobi field for $\L_\Sigma$ on
$\Sigma$. Our point here is that the stability property of minimal graphs as
expressed by (\ref{eqn:stable_minimal_graphs}) can be recovered
without recourse to their stronger minimizing property. We emphasize
that this Jacobi field $\frac{1}{v}$ measures the vertical component
of the (upward) unit normal of $\Sigma$. 

Minimizing boundaries are smooth in ambient dimension $\leq
7$ and they form a closed subclass $\mathcal{F}$ of boundaries of
locally finite perimeter in $\Omega \times \mathbb{R}$ with respect to
current convergence (see e.g. \cite{Giusti}). It follows that every
sequence of minimal graphs has a smooth subsequential limit $\Sigma_ i
\to \Sigma$ in $\mathcal{F}$. If $\frac{1}{v_i}$ denotes as above the
vertical part of the (upward) unit normal of $\Sigma_i$, then
$\L_{\Sigma_i} v_i^{-1}= 0$, and hence $\Delta_{\Sigma_{i}}
\frac{1}{v_i} \leq \frac{\beta}{v_i}$ holds where $\beta =
|\Ric|_{\mathcal{C}(\Omega)}$ is a constant independent of $i$. This
differential inequality has a non-parametric interpretation: the
vertical part of the unit normal of the hypersurface $\Sigma_i$ is a
non-negative super solution of a geometric homogeneous elliptic
equation on $\Sigma_i$. This aspect of minimal graphs passes to their
subsequential limit $\Sigma$. The Hopf maximum principle then implies
that on every connected component of $\Sigma$ the vertical part of the
unit normal either has a sign or vanishes identically. Put
differently: subsequential limits of minimal graphs are minimizing
boundaries in $\Omega \times \mathbb{R}$ whose components are either
graphical over open subsets of $\Omega$ or vertical cylinders whose
cross-sections are minimizing boundaries in the base $\Omega$. This
analysis of the limiting behavior of minimal graphs is carried out in
detail in \cite{MM}, where concise references can be found. 

\subsection{\texorpdfstring{Behavior of $\graph(u_\tau)$ in the regularization limit}{Behavior in the regularization limit}}
\label{sec:regularization_limit}

We consider a sequence of solutions $\{u_\tau\}_{0 < \tau \leq 1}
\subset \mathcal{C}^{2, \alpha}(\Omega)$ of $\mc(u_\tau) -
\tr(k)(u_\tau) = \tau u_\tau$ as in Theorem
\ref{thm:existence_for_regularized_equation}.

For general $k$, their graphs $\Sigma_\tau$ do not satisfy an
apparent, useful variational principle. However, using the Jacobi
equation to compute the variation of the mean curvature of
$\Sigma_\tau$ with respect to vertical translation, one obtains that
\begin{equation*}
  \left( \Delta_{\Sigma_\tau} + \Ric ( \nu_\tau,\nu_\tau) +
    |h_\tau|_{\Sigma_\tau}^2 \right) \frac{1}{v_\tau} 
  =
  -\nu_\tau \left( \mc (u_\tau) \right) \frac{1}{v_\tau}
\end{equation*}
where as before $v_\tau:= \sqrt{1 + |Du_\tau|^2}$, $h_\tau$
denotes the second fundamental form of $\Sigma_\tau$ and
$|h_\tau|_{\Sigma_\tau}$ is its length, $\nu_\tau = \frac{1}{v_\tau}
(u_\tau^i \partial_i - \partial_t)$ is the downward unit normal of
$\Sigma_\tau$, $\Ric$ is the Ricci tensor of $(M \times \mathbb{R}, g
+ dt^2)$, and where we differentiate $\mc(u_\tau)$ on the right
as a function on the base $\Omega$. Expanding the right-hand side
using the equation for $u_\tau$, one obtains
\begin{equation*}
  \left( \Delta _{\Sigma_\tau} + \Ric(\nu_\tau,
    \nu_\tau) + |h_\tau|_{\Sigma_\tau}^2 \right) \frac{1}{v_\tau} \leq 10
  \left( |k|_M |h_\tau|_{\Sigma_\tau} + |D k|_M \right)
  \frac{1}{v_\tau}.
\end{equation*}
Multiply this pointwise inequality
by $v_\tau \phi^2$ where $\phi \in \mathcal{C}^1_c(\Sigma_\tau)$,
integrate over $\Sigma_\tau$, and integrate by parts as in
(\ref{eq:integration_by_parts}) to see that there exists a constant
$\beta = \beta (|k|_{\mathcal{C}^1(\Omega)})$ so
that
\begin{equation*}
  \begin{split}
    \int_{\Sigma_\tau} \left( \Ric (\nu_\tau, \nu_\tau)
      + |h_\tau|_{\Sigma_\tau}^2 \right) \phi^2
    &\leq
    \int_{\Sigma_\tau} |D_{\Sigma_\tau} \phi|^2 + \beta \int_{\Sigma_\tau}
    (|h_\tau|_{\Sigma_\tau} + 1) \phi^2
    \\
    &\qquad\qquad\text{ for all } \phi \in
    \mathcal{C}^1_c(\Sigma_\tau).
  \end{split}
\end{equation*}
This almost looks like the stability inequality
(\ref{eqn:stable_minimal_graphs}) for minimal graphs. Schoen and Yau
observed that this inequality, together with the a priori estimate for
$\tau |u_\tau|$ that follows from the maximum principle in the
situation of Theorem \ref{thm:existence_for_regularized_equation}, the
bound for $\mc(u_\tau)$ hence resulting from the defining equation for
$u_\tau$, and a local area bound for $\Sigma_\tau$ that follows from
this bound on the mean curvature and a calibration argument as in
subsection \ref{sec:minimal_graphs}, can be used to derive pointwise
curvature estimates for $\Sigma_\tau$ as in
\cite{Schoen-Simon-Yau}. Moreover, they derived a geometric Harnack
inequality for the vertical part $\frac{1}{v_\tau}$ of the unit normal
of $\Sigma_\tau$ in the effective form $\sup_{B_\rho \cap \Sigma_\tau}
\frac{1}{v_\tau} \leq \gamma \inf_{B_\rho \cap \Sigma_\tau}
\frac{1}{v_\tau}$ for any extrinsic ball $B_\rho$ centered on
$\Sigma_\tau$ and such that $B_{2 \rho} \subset \Omega \times
\mathbb{R}$. 

These results of \cite{Schoen-Yau:1981:PMTII} replace the
variational arguments for minimal graphs in subsection
\ref{sec:minimal_graphs}. Hence, given a sequence $\{ u_\tau \}_{0 <
  \tau \leq 1}$ of solutions of
(\ref{eqn:Dirichlet_regularized_Jangs_equation_Bdry}) so that
$\sup_{\Omega, \tau} \tau |u_\tau| < \infty$, their graphs
$\Sigma_\tau$ have a smooth embedded subsequential limit $\Sigma$ as
$\tau \searrow 0$.  (The curvature estimates depend on the distance to
the boundary of $\Omega$. Boundary barriers as in the proof of Theorem
\ref{thm:existence_for_regularized_equation} can be used to show that
$\Sigma_\tau$ must remain bounded near $\partial \Omega \times
\mathbb{R}$ so that standard PDE techniques can be used to analyze the
limit of $u_\tau $ near $\partial \Omega$.) From the Harnack estimate
above it is then not difficult to see that the components of the limit
$\Sigma$ are complete, embedded, and separated from one another by a
positive distance, and that each is either graphical or
cylindrical. 

Obviously, $\mc_\Sigma - \tr_\Sigma(k) = 0$ must hold and
it follows that the graphical components in the limit are solutions of
Jang's equation, and that the cross-sections of the cylindrical
components are apparent horizons of the base. These cross-sections
inherit an orientation from the surfaces $\Sigma_\tau$ of the
regularization limit, depending on whether these graphs `blow-up' or
`blow-down' along them. The union of the graphical components of
$\Sigma$ is the graph of a function $u$ defined on an open set
$\Omega_0 \subset \Omega$, and this $u$ must be unbounded on approach
to $\partial \Omega_0 \setminus \partial \Omega$. Moreover, $\Sigma_0$
smoothly asymptotes the boundary cylinder above the respective
component of $\partial \Omega_0 \setminus \partial \Omega$, diverging
either to positive or negative infinity, and it follows that $\partial
\Omega_0 \setminus \partial \Omega$ is made up of apparent
horizons.

In summary, one has the following general existence result:
\begin{theorem}[\cite{Schoen-Yau:1981:PMTII}]
  \label{thm:existence_for_Jangs_equation}
  Let $(M, g, k)$ be a complete $3$-dimensional initial data set, let
  $\Omega \subset M$ be a bounded open subset with $\mathcal{C}^{2,
    \alpha}$ boundary such that $\mc_{\partial \Omega} > |\tr_{\partial
    \Omega}(k)|$ holds along $\partial \Omega$ where $\mc_{\partial
    \Omega}$ is the mean curvature of $\partial \Omega$ computed as the
  tangential divergence of the unit normal pointing out of $\Omega$, and
  where $\tr_{\partial \Omega} (k)$ is the trace of $k$ over the tangent
  space of $\partial \Omega$, and let $\phi \in \mathcal{C}^{2, \alpha}
  (\partial \Omega)$.

  Then there exists an open subset $\Omega_0 \subset
  \Omega$ with embedded boundary consisting of $\partial\Omega$ and of
 the union of two finite, possibly empty collections of smooth disjoint connected
  closed apparent horizons $\{\Sigma^+_i\}$ and $\{\Sigma_j^-\}$, as
  well as a function $u \in \mathcal{C}^{2, \alpha} (\partial \Omega
  \cup \Omega_0)$ so that
  \begin{align*}
    \mc (u) - \tr(k)(u)
    &= 0
    \text { on } \Omega_0
    \\
    u &= \phi \text{ on } \partial \Omega
    \\
    u(x) &\to + \infty
    \text { uniformly as } \dist(x, \Sigma^+_i) \to 0,
    \text { and }
    \\
    u(x) &\to - \infty
    \text { uniformly as } \dist (x, \Sigma^-_j) \to 0.
  \end{align*}
  Computing the mean curvature
  scalar with respect to the unit normal that points into $\Omega_0$,
  the surfaces $\Sigma^+_i$ satisfy $\mc_{\Sigma^+_i} +
  \tr_{\Sigma^+_i}(k) = 0$ while $\mc_{\Sigma^-_j} - \tr_{\Sigma^-_j}(k)
  = 0$. The same conclusion holds if $(M, g, k)$ is asymptotically flat,
  $\Omega = M$, and where $u(x) \to 0$ on each of the asymptotically
  flat ends.
\end{theorem}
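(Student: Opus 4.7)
The plan is to combine Theorem~\ref{thm:existence_for_regularized_equation} with the compactness analysis developed in subsection~\ref{sec:regularization_limit}. Observe first that the hypothesis $\mc_{\partial\Omega} > |\tr_{\partial\Omega}(k)|$ implies, for all $\tau$ below a threshold $\tau_0 > 0$ depending on $|\phi|_{\mathcal{C}(\partial\Omega)}$, the stronger inequality $\mc_{\partial\Omega} - |\tr_{\partial\Omega}(k)| > \tau\phi$. Theorem~\ref{thm:existence_for_regularized_equation} then furnishes, for every $\tau\in(0,\tau_0]$, a unique $u_\tau \in \mathcal{C}^{2,\alpha}(\bar\Omega)$ solving the regularized Dirichlet problem $\mc(u_\tau) - \tr(k)(u_\tau) = \tau u_\tau$ with $u_\tau = \phi$ on $\partial\Omega$. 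The maximum principle yields the a priori estimate $\sup_\Omega \tau|u_\tau| \leq C$ with $C$ independent of $\tau$, and the Serrin-type barrier construction underlying Theorem~\ref{thm:existence_for_regularized_equation} controls $u_\tau$ in $\mathcal{C}^1$ in a fixed neighborhood of $\partial\Omega$ uniformly in $\tau$; standard Schauder estimates upgrade this to uniform $\mathcal{C}^{2,\alpha}$ bounds near $\partial\Omega$.

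Second, the graphs $\Sigma_\tau := \graph(u_\tau)$ satisfy the almost-stability inequality recorded in subsection~\ref{sec:regularization_limit}. Combined with the $L^\infty$ bound on $\tau u_\tau$, the defining equation gives a uniform bound on $\mc(u_\tau)$, hence a local area bound for $\Sigma_\tau$ via the calibration three-form $(d\vol_{g+dt^2})\lfloor \nu_\tau$. These inputs feed into the Schoen--Simon--Yau curvature estimates to produce uniform interior bounds on $|h_\tau|_{\Sigma_\tau}$ on compact subsets of $\Omega\times\IR$. Arzel\`a--Ascoli then extracts a smooth embedded subsequential limit $\Sigma$ as $\tau\searrow 0$. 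The geometric Harnack inequality for $v_\tau^{-1}=(1+|Du_\tau|^2)^{-1/2}$ passes to the limit, so on each connected component of $\Sigma$ the vertical part $v^{-1}$ of the downward unit normal is either strictly positive (the component is graphical) or identically zero (the component is a vertical cylinder over a surface in the base). Since $\mc_\Sigma - \tr_\Sigma(k) = 0$ on the limit, the graphical components assemble into the graph of a function $u$ solving Jang's equation on an open set $\Omega_0\subset\Omega$, while each cylindrical component projects to a closed surface $\Sigma_0\subset\Omega$ satisfying $\mc_{\Sigma_0} \pm \tr_{\Sigma_0}(k) = 0$, with the sign determined by the orientation inherited from the approximating $\Sigma_\tau$.

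Third, the uniform estimates near $\partial\Omega$ force $u_\tau\to\phi$ smoothly on $\partial\Omega$, and the graphical part of $\Sigma$ meets $\partial\Omega\times\IR$ precisely along the rim $\{(x,\phi(x)):x\in\partial\Omega\}$. Points $x\in\Omega\setminus\Omega_0$ are swept out by the (necessarily finite, by the area bound and curvature estimates) collection of cylindrical components, and along each such cylinder the blow-up direction of $u_\tau$ is determined by its orientation, yielding the two families $\{\Sigma_i^+\}$ and $\{\Sigma_j^-\}$ on which $u\to +\infty$ or $-\infty$ uniformly on approach. The asymptotically flat case is reduced to the bounded case by exhausting $M$ with large coordinate balls whose boundaries carry small Dirichlet data, using radial barriers built from the decay of $k$ to force $u_\tau\to 0$ uniformly at infinity independently of $\tau$. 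The main obstacle is the \emph{smooth} asymptotic description of $\Sigma_\tau$ near the cylindrical components: concluding that the graph vertically asymptotes the cylinder, and that $\partial\Omega_0\setminus\partial\Omega$ is a smooth apparent horizon rather than a singular accumulation set, requires the curvature estimates in a neighborhood of the cylinder together with the vertical translation invariance of Jang's operator to upgrade Hausdorff convergence to smooth graphical convergence over the cylinder.
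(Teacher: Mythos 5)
Your proposal follows essentially the same route as the paper: solve the capillarity-regularized Dirichlet problems of Theorem~\ref{thm:existence_for_regularized_equation} (with the correct observation that the boundary condition $\mc_{\partial\Omega}>|\tr_{\partial\Omega}(k)|$ suffices once $\tau$ is small), use the almost-stability inequality, the mean curvature and calibration-based area bounds, the Schoen--Simon--Yau curvature estimates and the Harnack inequality for $v_\tau^{-1}$ to extract a smooth limit, and then split it into graphical and cylindrical components with blow-up direction fixed by orientation, exactly as in subsection~\ref{sec:regularization_limit}. This matches the Schoen--Yau argument the paper summarizes, including the delicate point you correctly flag about upgrading convergence near the cylinders to smooth asymptotics.
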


\subsection {Existence of MOTS due to blow-up}
\label{sec:existence_of_MOTS_due_to_blow_up}

The analysis of \cite[Proposition 4]{Schoen-Yau:1981:PMTII} outlined
in the previous subsection contains the following, general result:
given a sequence of functions $\{u_\tau\}_{0 < \tau \leq 1} \subset
\mathcal{C}^{2, \alpha}(\Omega)$ with $\mc(u_\tau) - \tr(k)(u_\tau) =
\tau u_\tau$ and $\sup_{\Omega, \tau} \tau |u_\tau| < \infty$, then
for every $\Omega' \subset \subset \Omega$ there exists a sequence
$\tau_i \searrow 0$ so that the graphs $\Sigma_{\tau_i}$ converge to a
smooth hypersurface $\Sigma \subset \Omega' \times \mathbb{R}$. This
limit $\Sigma$ is an apparent horizon in the initial data set $(M
\times \mathbb{R}, g + dt^2, k)$, and each component of $\Sigma$ is
either graphical or is a vertical cylinder whose cross-section is an
apparent horizon in the base $\Omega'$. The sets $\Omega_{\pm} := \{x
\in \Omega: \lim \sup_{i \to \infty} \pm u_{\tau_i} (x) = \pm
\infty\}$ are disjoint and relatively open in $\Omega'$, and their relative boundaries in
$\Omega'$ are smooth, embedded apparent horizons. Let
$\Omega_0 := \Omega \setminus (\bar \Omega_+ \cup \bar \Omega _-)$. The
union of the graphical components of $\Sigma$ is a graph $u : \Omega'
\cap \Omega_0 \to \mathbb{R}$ solving Jang's equation $\mc(u) -
\tr(k)(u) = 0$. The union of the cylindrical components of $\Sigma$ is
given by $\Omega' \cap (\partial \Omega_+ \cap \partial \Omega_-)
\times \mathbb{R}$. The function $u$ tends to $\pm \infty$ near
$\Omega' \cap (\partial \Omega_0 \cap \partial \Omega_\pm)$ and its
graph is smoothly asymptotic as a submanifold to the vertical cylinder
based on this set. In particular, if the sequence $u_\tau(x)$ diverges
to $+ \infty$ for some $x \in \Omega$ as $\tau \searrow 0$ while
staying finite or diverging to $- \infty$ at other points of $\Omega$,
then there must be (part of) an apparent horizon in $\Omega$.

Using this blow-up analysis of Schoen and Yau and Theorem
\ref{thm:existence_for_regularized_equation} for the existence of
solutions for the regularized Jang's equation, it is not difficult to
see that (non-empty!) closed apparent horizons $\Sigma \subset \Omega$
exist in every bounded subset $\Omega \subset M$, provided that
$\partial \Omega$ has at least two boundary components and satisfies
$\mc _{\partial \Omega} - |\tr_{\partial \Omega}(k)| > 0$ (say greater
than some small $\varepsilon > 0$). Simply consider solutions $u_\tau
: \bar \Omega \to \mathbb{R}$ of $\mc(u_\tau) - \tr(k)(u_\tau) = \tau
u_\tau$ with $u_\tau = \frac{\varepsilon}{\tau}$ on one of the
boundary components and $u_\tau = - \frac{\varepsilon}{\tau}$ on the
others. From the maximum principle one has that $\sup_{\bar \Omega,
  \tau} \tau |u_\tau| \leq \max\{ \varepsilon, 3 |k|_{\mathcal{C}(\bar
  \Omega)}\}$. The barriers used in the proof of Theorem
\ref{thm:existence_for_regularized_equation} can be used to show that
these solutions $u_\tau$ diverge in a fixed neighborhood of the
boundary.

Schoen \cite{Schoen:2004} 
suggested that the existence of MOTS should even
follow if one only assumes that some boundary components of $\Omega$,
whose union we denote by $\partial_+ \Omega$, satisfy $\mc_{\partial_+
  \Omega} - \tr_{\partial_+ \Omega} (k) > 0$, while $\mc_{\partial_-
  \Omega} + \tr_{\partial_- \Omega} (k) > 0$ holds for the union
$\partial_- \Omega$ of all other boundary components, because these
conditions give rise to upper barriers for `blow-up' near $\partial_+
\Omega$ and to lower barriers for `blow-down' near $\partial_-
\Omega$. There is an important technical difficulty in implementing
this approach: these one-sided barriers do not guarantee that
solutions of the boundary value problems
(\ref{eqn:Dirichlet_regularized_Jangs_equation_Bdry}) exist
classically. 

That these \emph{technical} difficulties can be overcome has
been shown first by \cite{Andersson-Metzger:2009} and then by
\cite{Plateau} using two independent methods which exhibit different
features for the MOTS that are shown to exist.

The `bending of the boundary data' approach of
\cite{Andersson-Metzger:2009} shows that the metric $g$ and the second
fundamental form tensor $k$ of the initial data set $(M, g, k)$ can be
modified to $\tilde g$ and $\tilde k$ in a neighborhood of $\partial_+
\Omega$ so that the region where the changes takes place is foliated
by surfaces $\Sigma_s$ with positive expansion $\tilde \mc_{\Sigma_s}
- \tr_{\Sigma_s}(\tilde k)>0$ but so that $\tilde k$ vanishes
identically near $\partial_+ \Omega$. A similar modification can be
made near $\partial_-\Omega$. The above argument can then be used to
show that a MOTS $\Sigma$ exists in the modified data set. The maximum
principle shows that this $\Sigma$ cannot intersect the region where
the data has been modified so that $\Sigma$ is also a MOTS with
respect to the original data set. The change of the geometry and the
data are local but large. In particular, the change of the quantity
$|\tilde k|_{\mathcal{C}(\bar \Omega)}$ that many of the geometric
estimates for $\Sigma$ in subsection \ref{sec:regularization_limit}
depend on is hard to control explicitly. Importantly though,
\cite[\S4]{Andersson-Metzger:2009} develop a delicate barrier argument
that shows that the MOTS $\Sigma$ arising in the blow-up is stable in
the sense of MOTS as discussed in subsection
\ref{sec:stability_of_MOTS}. The curvature estimates of
\cite{Andersson-Metzger:2005}, which do not require a priori area
bounds and which depend only on the original data, are then available
for $\Sigma$.

An alternative line of proof is given in \cite{Plateau},
where the Perron method was introduced to the analysis of Jang's
equation (cf. \cite{Serrin} for its classical application to minimal
and constant mean curvature graphs) and used to generate maximal
interior solutions $u_\tau$ for the boundary value problems
(\ref{eqn:Dirichlet_regularized_Jangs_equation_Bdry}) with estimates
$\sup_{\bar \Omega, \tau} \tau |u_\tau| \leq \max\{ \varepsilon, 3
|k|_{\mathcal{C}(\bar \Omega)}\}$. These solutions won't assume
particular boundary values in general, but they will lie above
(respectively below) the lower (upper) barrier constructed from the
boundary curvature conditions, which is all that is needed in the
argument to force divergence near the boundary. Instead of using
stability-based curvature estimates, a variant of the calibration
argument mentioned in subsection \ref{sec:minimal_graphs} is applied to show that the surface $\Sigma$ constructed in
the process is a $C$-almost minimizing boundary in $\Omega$. By this
we mean that $\Sigma$ is the boundary of a set $E$ in $\Omega$ such
that
\begin{equation}
  \begin{split}
    \mathcal{P} (E, W ) \leq \mathcal{P} (F, W) + C |E \Delta F|
    \\
    \text{ for every } F \subset \Omega \text{ such that } E \Delta F \subset \subset W \subset \subset \Omega
  \end{split}
\end{equation}
where $C := 6 |k|_{\mathcal{C}(\bar \Omega)}$. (See \cite{DuzaarSteffen} for a
systematic study of such almost minimizing boundaries, and
\cite[Appendix A]{Plateau} for further concise references on the
relevant geometric measure theory.) So
$\Sigma$ minimizes area in $\Omega$ modulo a lower order bulk term
that is controlled explicitly. This feature of $\Sigma \subset \Omega$
is inherited from an analogous property of $\graph(u_\tau)$, see
\cite[Example A.1]{Plateau} for details. The stability-based curvature
estimates in \cite{Schoen-Yau:1981:PMTII} are replaced by techniques
from geometric measure theory which allow transitioning of the
argument to high dimensional initial data sets if one accepts a thin
singular set, as with minimal surfaces. 

In conclusion, we have the following existence theorem combining
\cite{Andersson-Metzger:2009} and \cite{Plateau}:
\begin{theorem}
  \label{thm:Schoens_theorem}
  Let $(M, g, k)$ be a 3-dimensional
  initial data set and let $\Omega \subset M$ be a connected bounded
  open subset with smooth embedded boundary $\partial \Omega$. Assume
  this boundary consists of two non-empty closed hypersurfaces
  $\partial_+ \Omega$ and $\partial_- \Omega$, possibly consisting of
  several components, so that
  \begin{equation}
    \label{eq:trapping_assumption_for_Schoens_theorem}
    \mc_{\partial_+ \Omega} - \tr_{\partial_+ \Omega} k > 0 \text{ and }
    \mc_{\partial_- \Omega} + \tr_{\partial_- \Omega} k > 0,
  \end{equation}
  where the mean curvature scalar is computed as the
  tangential divergence of the unit normal vector field that is pointing
  out of $\Omega$. Then there exists a smooth closed embedded
  hypersurface $\Sigma \subset \Omega$ homologous to $\partial_- \Omega$
  such that $\mc_\Sigma + \tr_\Sigma (k) = 0$ (where $\mc_\Sigma$ is
  computed with respect to the unit normal pointing towards
  $\partial_-\Omega$). $\Sigma$ is stable in the sense of MOTS and it is
  $C$-almost minimizing in $\Omega$ for an explicit constant $C =
  C(|k|_{\mathcal{C}(\bar \Omega)})$. This existence result and all the
  properties listed above carry over to initial data sets of dimensions
  $\leq 7$. In dimensions greater than $7$ we have the existence of a
  $C$-almost minimizing boundaries $\Sigma$ in $\Omega$ with a singular
  set of Hausdorff codimension at most $7$ that satisfy the marginally
  outer trapped surface equation distributionally.
\end{theorem}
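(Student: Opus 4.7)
The plan is to realize $\Sigma$ as a cylindrical component of the blow-up limit of solutions of the regularized Jang equation $\mc(u_\tau) - \tr(k)(u_\tau) = \tau u_\tau$, exactly as in the Schoen-Yau program recorded in Theorem \ref{thm:existence_for_Jangs_equation} and subsection \ref{sec:existence_of_MOTS_due_to_blow_up}. I would prescribe boundary values $+\varepsilon/\tau$ on $\partial_+\Omega$ and $-\varepsilon/\tau$ on $\partial_-\Omega$ for a small $\varepsilon>0$, so that the blow-up set $\Omega_+$ and the blow-down set $\Omega_-$ defined from $\{u_\tau\}$ are both forced to be non-empty. Connectedness of $\Omega$ then guarantees that their common relative boundary inside $\Omega$ is non-empty, and the cylindrical components of the limit based over this frontier will be the desired MOTS.

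The technical heart of the argument is getting such a sequence $u_\tau$ at all. The Serrin-type perturbation of the boundary cylinder outlined after Theorem \ref{thm:existence_for_regularized_equation} turns the one-sided hypothesis $\mc_{\partial_+\Omega} - \tr_{\partial_+\Omega}(k) > 0$ into a local subsolution near $\partial_+\Omega$ attaining arbitrarily large positive values, and symmetrically turns $\mc_{\partial_-\Omega} + \tr_{\partial_-\Omega}(k) > 0$ into a local supersolution near $\partial_-\Omega$ with arbitrarily negative values. However, these are only one-sided barriers, so Theorem \ref{thm:existence_for_regularized_equation} does not immediately apply to produce a classical solution of the Dirichlet problem (\ref{eqn:Dirichlet_regularized_Jangs_equation_Bdry}) with divergent data. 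I would circumvent this either by the bending method of \cite{Andersson-Metzger:2009}, where $(g,k)$ is modified to $(\tilde g,\tilde k)$ in thin collars of $\partial_\pm\Omega$ so that $\tilde k$ vanishes at $\partial\Omega$ while the collars are foliated by surfaces of positive expansion; this restores the two-sided condition $\tilde\mc - |\tilde\tr\,\tilde k| > 0$ and lets Theorem \ref{thm:existence_for_regularized_equation} apply directly, after which the maximum principle against the collar foliation shows that the blow-up locus, and hence $\Sigma$, stays in the unmodified region where $(\tilde g,\tilde k)=(g,k)$. The alternative is the Perron construction of \cite{Plateau}, which builds maximal interior subsolutions $u_\tau$ lying between the two one-sided barriers without demanding classical boundary behavior, and thus uses only the stated curvature hypotheses.

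With $\{u_\tau\}$ in hand, the maximum principle gives the uniform bound $\sup_\Omega \tau|u_\tau| \leq \max\{\varepsilon, 3|k|_{\mathcal{C}(\bar\Omega)}\}$, placing us in the regime of subsection \ref{sec:regularization_limit}: in dimension at most $7$, $\graph(u_\tau)$ subconverges smoothly to an apparent horizon in $\Omega\times\IR$, with graphical components solving Jang's equation and cylindrical components based on apparent horizons of $\Omega$. The forced divergence makes $\Omega_\pm$ both non-empty, hence the common relative boundary in $\Omega_0 = \Omega\setminus(\bar\Omega_+\cup\bar\Omega_-)$ is non-empty and, together with a portion of $\partial\Omega$, bounds $\Omega_-$, producing the asserted homology class. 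The sign of the blow-up (to $+\infty$ on approach from the graphical side) pins down the cross-sections as MOTS rather than MITS with the stated normal convention $\mc_\Sigma + \tr_\Sigma(k) = 0$. Stability in the MOTS sense is read off either from the delicate barrier refinement in \cite[\S 4]{Andersson-Metzger:2009} or from the persistence of the positive Jacobi field $1/v_\tau$ across the limit, and the $C$-almost minimizing property with $C = 6|k|_{\mathcal{C}(\bar\Omega)}$ is inherited from $\graph(u_\tau)$ via the calibration $\eta = (d\vol_{g+dt^2})\lfloor \nu_\tau$ of subsection \ref{sec:minimal_graphs}, following \cite[Appendix A]{Plateau}. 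In dimension greater than $7$ I would drop the Schoen-Simon-Yau curvature estimates and rely instead on Schoen-Simon regularity for $C$-almost minimizing boundaries applied to the cylindrical cross-sections, yielding the singular set of Hausdorff codimension at least $7$.

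The main obstacle is clearly the second paragraph: the one-sided trapping hypothesis is genuinely weaker than what the classical existence theory for Jang's equation requires, and both resolutions (carefully constructing a collar deformation whose new expansion is positive while taming the change in $|\tilde k|_{\mathcal{C}}$ enough to apply the subsequent blow-up analysis, or running a Perron argument for a quasilinear equation of minimal-surface type with boundary data that need not be attained) carry the real analytic content of the theorem. Once a divergent sequence $u_\tau$ is produced, the remaining steps are applications of the blow-up theory already developed in subsection \ref{sec:regularization_limit}.
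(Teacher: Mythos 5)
Your proposal is correct and follows essentially the same route as the paper: boundary data $\pm\varepsilon/\tau$ to force blow-up of the regularized Jang equation, the one-sided barrier obstruction resolved either by the bending of the data in \cite{Andersson-Metzger:2009} or by the Perron method of \cite{Plateau}, and the MOTS, its stability, and the $C$-almost minimizing property extracted from the regularization limit as in subsections \ref{sec:existence_of_MOTS_due_to_blow_up} and \ref{sec:stability_of_MOTS}. The only loose point is the suggested alternative that stability follows from ``persistence'' of the Jacobi field $1/v_\tau$ — that field degenerates on the cylindrical components, and stability in the sense of MOTS actually requires the sub/supersolution argument of subsection \ref{sec:stability_of_MOTS} or the barrier argument of \cite[\S 4]{Andersson-Metzger:2009}, which is the option you correctly cite first.
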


The approach via the Perron method can easily be adapted to prove
existence of surfaces $\Sigma$ whose mean curvature is prescribed as a
continuous function of position and unit normal under boundary
curvature conditions analogous to those in Theorem
\ref{thm:Schoens_theorem}, see \cite{GAH}, 
recovering classical existence results for variational prescribed mean
curvature problems in special cases.  
In \cite{Plateau}, the Perron method has
been used in conjunction with the analysis of Schoen and Yau described
above to prove existence of MOTS spanning a given boundary curve, in
analogy with the classical Plateau problem for minimal surfaces. We
describe the general result for $n$-dimensional initial data sets:

\begin{theorem}
  \label{thm:Plateau_problem}
  Let $(M^n, g, k)$ be an
  initial data set and let $\Omega \subset M^n$ be a bounded open domain
  with smooth boundary $\partial \Omega$. Let $\Gamma^{n-2}
  \subset \partial \Omega$ be a non-empty, smooth, closed, embedded
  submanifold that separates this boundary in the sense that $\partial
  \Omega \setminus \Gamma^{n-2} = \partial_- \Omega \dot \cup \partial_+
  \Omega$ for disjoint, non-empty, and relatively open subsets
  $\partial_- \Omega, \partial_+ \Omega$ of $\partial \Omega$. Assume
  that $\mc_{\partial \Omega} - \tr_{\partial \Omega} k > 0$ near
  $\partial_+ \Omega$ and that $\mc_{\partial \Omega} + \tr_{\partial
    \Omega} k > 0$ near $\partial_- \Omega$ where the mean curvature
  scalar is computed as the tangential divergence of the unit normal
  pointing out of $\Omega$. Then there exists an almost minimizing
  (relative) boundary $\Sigma^{n-1} \subset \Omega$, homologous to
  $\partial_- \Omega$, with singular set strictly contained in $\Omega$
  and of Hausdorff dimension $\leq (n-8)$, so that $\Sigma^{n-1}$
  satisfies the equation $\mc_\Sigma + \tr_\Sigma(k) = 0$
  distributionally, and so that $\Sigma^{n-1}$ is a smooth hypersurface
  near $\Gamma^{n-2}$ with boundary $\Gamma^{n-2}$. In particular, if $2
  \leq n \leq 7$, then $\Sigma^{n-1}$ is a smooth embedded marginally
  outer trapped surface in $\Omega$ which spans
  $\Gamma^{n-2}$.
\end{theorem}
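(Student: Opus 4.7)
The plan is to adapt the Perron method of \cite{Plateau} (used above to prove Theorem~\ref{thm:Schoens_theorem}) to a Plateau-type boundary datum prescribed along $\Gamma^{n-2}$. Fix a small $\varepsilon > 0$ less than both boundary curvature margins in the hypotheses. For each $\tau \in (0,1]$, I would choose continuous boundary data $\phi_\tau$ on $\partial\Omega$ equal to $+\varepsilon/\tau$ on $\partial_+\Omega$ outside a shrinking collar of $\Gamma$, equal to $-\varepsilon/\tau$ on the corresponding portion of $\partial_-\Omega$, identically $0$ on $\Gamma^{n-2}$, and interpolated smoothly in the signed distance to $\Gamma$ across the collar. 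The Serrin-type barriers from Theorem~\ref{thm:existence_for_regularized_equation} are then available on $\partial_\pm\Omega \setminus \Gamma$ but degenerate to finite data near $\Gamma$, allowing us to run the Perron construction to produce a maximal interior solution $u_\tau \in \mathcal{C}^{2,\alpha}(\Omega)$ of $\mc(u_\tau) - \tr(k)(u_\tau) = \tau u_\tau$ trapped between these barriers.

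As $\tau \searrow 0$, these solutions satisfy $u_\tau \to +\infty$ on $\partial_+\Omega \setminus \Gamma$, $u_\tau \to -\infty$ on $\partial_-\Omega \setminus \Gamma$, and remain uniformly bounded in a neighborhood of $\Gamma^{n-2}$; the maximum principle yields $\sup_\Omega \tau |u_\tau| \leq \max\{\varepsilon, 3|k|_{\mathcal{C}(\bar\Omega)}\}$, and the graphs $\graph(u_\tau) \subset \Omega \times \IR$ are $C$-almost minimizing with $C = 6|k|_{\mathcal{C}(\bar\Omega)}$ by \cite[Example A.1]{Plateau}. The compactness theory for almost-minimizing boundaries, combined with the graphical-versus-cylindrical limit dichotomy of subsection~\ref{sec:regularization_limit}, then extracts a subsequential limit whose cylindrical cross-section $\Sigma^{n-1} \subset \Omega$ is itself a $C$-almost minimizing boundary satisfying the MOTS equation $\mc_\Sigma + \tr_\Sigma(k) = 0$ distributionally. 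The one-sided barriers keep the cylindrical part away from $\partial_\pm\Omega \setminus \Gamma$, while the uniform control near $\Gamma$ forces the trace of $\Sigma^{n-1}$ on $\partial\Omega$ to equal exactly $\Gamma^{n-2}$. The homology class relative to $\partial_-\Omega$ is inherited from the subgraph of $u_\tau$.

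The main obstacle is boundary regularity up to and including $\Gamma^{n-2}$. In the interior of $\Omega$, Schoen-Simon-type regularity for almost minimizers with bounded generalized mean curvature gives smoothness off a closed set of Hausdorff dimension at most $n-8$, and Schauder bootstrapping on the quasilinear MOTS equation upgrades this to smoothness. Near $\Gamma$, the two smooth pieces $\partial_\pm\Omega$ provide a matched pair of one-sided support hypersurfaces, and boundary regularity for codimension-one almost-minimizing currents at smooth prescribed boundaries (in the spirit of Hardt-Simon, extended to almost minimizers by Duzaar-Steffen) should yield $C^{1,\alpha}$ regularity of $\Sigma^{n-1}$ up to $\Gamma^{n-2}$; expressing $\Sigma^{n-1}$ locally as a graph over its tangent plane along $\Gamma$ and applying Schauder estimates to the MOTS equation then promotes this to smoothness. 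When $2 \leq n \leq 7$ the interior singular set is automatically empty, and the resulting $\Sigma^{n-1}$ is a smooth embedded marginally outer trapped surface spanning $\Gamma^{n-2}$.
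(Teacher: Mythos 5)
Your overall route---Perron solutions of the regularized equation with boundary data diverging to $\pm\infty$ along $\partial_\pm\Omega$, the $C$-almost minimizing property of the graphs, compactness and interior regularity from geometric measure theory with a Schauder bootstrap, and Hardt--Simon/Duzaar--Steffen boundary regularity at $\Gamma^{n-2}$---is exactly the strategy of \cite{Plateau}, which is the proof this survey defers to, so the architecture is right. The genuine gap is in the one step that is specific to the Plateau problem: how the limit comes to span $\Gamma^{n-2}$.

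Your claim that the $u_\tau$ ``remain uniformly bounded in a neighborhood of $\Gamma^{n-2}$,'' and that this boundedness forces the trace of $\Sigma^{n-1}$ on $\partial \Omega$ to be $\Gamma^{n-2}$, is both incorrect and in tension with your own conclusion. Since the interpolation collar shrinks with $\tau$, every fixed neighborhood of a point of $\Gamma^{n-2}$ meets portions of $\partial_+\Omega$ and $\partial_-\Omega$ where the barriers push $u_\tau \to +\infty$ and $-\infty$ respectively, so in the limit the graphs sweep out the entire vertical line over each point of $\Gamma^{n-2}$. That is in fact what is needed: the correct statement is that the limit hypersurface $\hat\Sigma \subset \Omega \times \mathbb{R}$ is an almost minimizing \emph{relative} boundary whose prescribed boundary is the full cylinder $\Gamma^{n-2} \times \mathbb{R}$, because the trace of the limiting subgraphs on $\partial \Omega \times \mathbb{R}$ is $\partial_+\Omega \times \mathbb{R}$. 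Were the solutions genuinely bounded near $\Gamma^{n-2}$, the limit would be graphical there and no cylindrical piece could attach to $\Gamma^{n-2}$, so your ``cylindrical cross-section with boundary $\Gamma^{n-2}$'' could not arise. Moreover, producing from $\hat\Sigma$ a genuinely vertical cylinder $\Sigma^{n-1} \times \mathbb{R}$ whose cross-section is nonempty, homologous to $\partial_-\Omega$, and spans $\Gamma^{n-2}$ is not delivered by the graphical-versus-cylindrical dichotomy of subsection~\ref{sec:regularization_limit}: that dichotomy was obtained from the Schoen--Yau curvature and Harnack estimates for smooth low-dimensional interior limits, and in any case concerns interior components rather than the piece attached to $\Gamma^{n-2} \times \mathbb{R}$. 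In \cite{Plateau} this is where the real work lies---one exploits the vertical translation invariance of the boundary cylinder together with almost-minimizing compactness (and the boundary trace information) to extract a vertically invariant limit and identify its cross-section---and only after that do the interior regularity theory, the boundary regularity at $\Gamma^{n-2}$, and the bootstrap you describe take over. The remaining ingredients you list (the bound on $\tau|u_\tau|$, $C = 6|k|_{\mathcal{C}(\bar\Omega)}$ from the calibration argument, Schoen--Simon/almost-minimizer regularity) do match the toolkit of the cited proof.
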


We conclude this section by noting the close relation of the features
of the regularization limit of Jang's equation from
\cite{Schoen-Yau:1981:PMTII} with the classical Jenkins-Serrin theory
\cite{JenkinsSerrin68} of finding Scherk-type (i.e. infinite boundary
value) minimal graphs in polygonal regions in $\mathbb{R}^2$ and its
obstructions, as expressed by the Jenkins-Serrin conditions, and also
the extensions of this theory to infinite boundary value constant mean
curvature graphs \cite{Serrin}, \cite{Spruck72} and \cite{HRS08} (see
also references therein) in curvilinear domains in $\mathbb{R}^2$,
$\mathbb{S}^2$ and $\mathbb{H}^2$.

\subsection{Stability of MOTS and an identity of Schoen and Yau}
\label{sec:stability_of_MOTS}

\begin{definition}[\cite{Andersson-Mars-Simon:2005},\cite{Andersson-Metzger:2005}]
  \label{def:stability_of_MOTS}
  A closed
  two-sided surface $\Sigma \subset M \subset \bar{M}$ with vanishing
  expansion $\theta_\Sigma \equiv 0$ (computed with respect to the
  future-directed null normal $l=\eta + \nu \in \Gamma (\Sigma, T
  \bar{M})$ where $\nu$ is a designated `outward' unit normal vector
  field of $\Sigma \subset M$) is said to be {\it stable in the sense of
    MOTS} if there exists a positive function $f > 0$ on $\Sigma$ so that
  $\L_\Sigma f \geq 0$, where
  \begin{equation*}
    \L_\Sigma f := - \Delta_\Sigma f + 2 \la X,
    D_\Sigma f \ra + \left( \tfrac{1}{2} \R_\Sigma - \tfrac{1}{2} |h +
      k|_\Sigma^2 - J (\nu) - \mu + \div_\Sigma X - |X|^2 \right) f
  \end{equation*}
  Here, $X$ is the tangential part of the one form dual
  to $k(\nu, \cdot)$ on $\Sigma$ and $\R_\Sigma$ is the scalar curvature
  of $\Sigma$.
\end{definition}
A few remarks are in order. First, note that $\L_\Sigma f$ here is the
linearization of the expansion $\theta^+$ for normal perturbations $f
\nu$ of $\Sigma$, cf. subsection
\ref{sec:second_variation_formulae}. When $k \equiv 0$, then this
definition is consistent with the usual strong stability condition for
closed minimal surfaces, as can be seen using the argument in
subsection \ref{sec:minimal_graphs}. Note that if we had $f>0$ with
strict inequality $\L_\Sigma f > 0$, then a neighborhood of $\Sigma$
in $M$ could be foliated by surfaces $\{\Sigma_s\}_{- \delta < s <
  \delta}$ with $\Sigma_0 = \Sigma$ and so that $\Sigma_{s}$ lies
`outside' of $\Sigma$ with respect to $\eta$ and has positive
expansion $\theta_{\Sigma_s} >0$ for $0 < s < \delta$, and such that
$\Sigma_s$ lies inside of $\Sigma$ and has negative expansion for $-
\delta < s < 0$. (For minimal surfaces this {\it strict stability
  condition} implies that the surface is minimizing in this
neighborhood.) Note that in general the operator $\L_\Sigma$ is not
self-adjoint. It was noted in \cite[Lemma
1]{Andersson-Mars-Simon:2005} that by the Krein-Rutman theorem the
eigenvalue $\lambda$ of $\L_\Sigma$ with the least real part is real,
and that there exists an eigenfunction $\L_\Sigma \Theta = \lambda
\Theta$, positive on at least one and vanishing on all other connected
components of $\Sigma$, corresponding to this \emph{principal}
eigenvalue $\lambda$. The maximum principle then implies that the
condition in Definition \ref{def:stability_of_MOTS} is equivalent to
$\lambda \geq 0$. As in subsection
\ref{sec:second_variation_formulae}, it is useful to rewrite the
pointwise condition $0 \leq \frac{1}{f} \L_\Sigma f$ of Definition
\ref{def:stability_of_MOTS} as
\begin{equation}
  \mu + {{J}}(\nu) \leq \div_\Sigma (X - D_\Sigma \log
  f) - |X - D_\Sigma \log f|_\Sigma^2 + \frac{1}{2} \R_\Sigma -
  \frac{1}{2} |h + k|_\Sigma^2.
\end{equation}
If the dominant energy condition $\mu \geq | J|$ holds, then
the left-hand side here is non-negative, and an integration by parts
exactly as in (\ref{eq:integration_by_parts}) implies that
\begin{equation}
  \label{eq:integrated_Schoen-Yau_identity} \int_\Sigma
  \frac{1}{2} |h+k|^2_\Sigma \leq \int_\Sigma \frac{1}{2} \R_\Sigma
  \phi^2 + |D_\Sigma \phi|^2 \text{ for every } \phi \in \mathcal{C}^1_c
  (\Sigma).
\end{equation}
Together with the Gauss-equation one can conclude that
\begin{equation}
  \label{eq:rough_consequence_of_integrated_Schoen_Yau_identity}
  \int_\Sigma |h|^2_\Sigma \phi^2
  \leq
  \int_\Sigma |D_\Sigma \phi|^2 + \beta \int_\Sigma (|h|_\Sigma + 1) \phi^2
\end{equation}
where $\beta$ only depends on the initial data set but not on
$\Sigma$.

The following example of stable MOTS is one of the key observations in
\cite{Schoen-Yau:1981:PMTII}. Let $u: \Omega_0 \cup \partial \Omega
\to \mathbb{R}$ be a graphical solution to Jang's equation in the
sense of subsection \ref{sec:regularization_limit}. So $\Omega_0
\subset \Omega$, the boundary of $\Omega_0$ consists of $\partial
\Omega$ together with a finite number of smooth embedded apparent
horizons, $\mc (u) - \tr(k)(u) = 0$ on $\Omega_0$, $\Sigma :=
\graph(u) \subset M \times \mathbb{R}$ is a complete submanifold with
boundary $\{(x, u(x)) : x \in \partial \Omega \}$, and $u$ diverges
uniformly on approach to the components of $\partial \Omega_0
\setminus \partial \Omega$. As discussed in subsection
\ref{sec:Jangs_equation}, $\Sigma$ has vanishing expansion in the
initial data set $(M \times \mathbb{R}, g + dt^2, - k)$ (mind the
sign!), and $\L_\Sigma f \equiv 0$ where $0 < f =
\frac{1}{\sqrt{1+|Du|^2}} = \la- \partial_t, \nu\ra$ is the normal
component of the unit vector field generating downward translation. As
above one has
\begin{equation}
  \label{eqn:Schoen-Yau_identity}
  \mu -{{J}}(\nu) =
  \div_\Sigma (X - D_\Sigma \log f) - |X - D_\Sigma \log f|_\Sigma^2 +
  \frac{1}{2} \R_\Sigma - |h - k|_\Sigma^2.
\end{equation}
where $\mu$ and ${J}$ are computed with respect to $(M
\times \mathbb{R}, g + dt^2, k)$ (sorry!) and where $X$ is the
tangential part of the one form dual to $- k(\nu, \cdot)$ (we triple
checked this sign). Note that $\mu$ does not depend on the $t$-coordinate and
coincides with the mass density of $(M, g, k)$, and that the
same holds for $ {J}$. This is equation (2.25) in
\cite{Schoen-Yau:1981:PMTII} where it was derived by direct
computation. See also equation (18) in \cite{Jang:1978} where the
identity appears in disguised form and without geometric
interpretation. 

It has been known from \cite{Schoen-Yau:1981:PMTII} that closed MOTS
$\Sigma \subset M$ that arise in the regularization limit of Jang's
equation are ``symmetrized stable,'' i.e., the operator
$\L^{\text{sym}}_\Sigma f := - \Delta_\Sigma f + \left( \frac{1}{2}
  \R_\Sigma - \frac{1}{2} |h + k|_\Sigma^2 - J (\nu) - \mu \right) f$
on $\Sigma$ has non-negative spectrum. In
\cite{Andersson-Metzger:2009} it was proven such surfaces are stable
in the sense of MOTS, which is a stronger \cite[Lemma 2.2]
{Galloway:2008} and physically more conclusive result. Here we discuss
this stability from a geometric point of view, and we also discuss the
stability of MOTS solving the Plateau problem in \cite{Plateau}.

Let $\Sigma \subset M$ be a connected \emph{closed}
two-sided unstable MOTS with respect to the unit normal $\nu$. The
Krein-Rutman theorem implies (cf. \cite{Andersson-Mars-Simon:2005})
that there is $\lambda < 0$ and a strictly positive function $\Theta
\in \mathcal{C}^{\infty}(\Sigma)$ so that $\L_\Sigma \Theta = \lambda
\Theta$. The stability operator of $\Sigma \times \mathbb{R}$ with
respect to the extended initial data set $(M \times \mathbb{R}, g +
dt^2, k)$ is $\L_{\Sigma \times \mathbb{R}} = - \frac{d^2}{dt^2} +
\L_\Sigma$ (where $\L_\Sigma$ ignores the dependence on the vertical
variable $t$). Note that if $T = T(t)$ is a function $T \in
\mathcal{C}^2(\mathbb{R})$ then $\L_{\Sigma \times \mathbb{R}} (
\Theta T) = \Theta ( \lambda T - \frac{d^2}{dt^2} T)$. Consider the
function $T(t) = - \varepsilon \left( 1 - \exp (\frac{t}{N})\right)$
where $\varepsilon > 0$ is small and $N>1$ is large. The relevant
properties of $T$ are that $T(0) = 0$, that $T'<0$, that $T(t) \to -
\varepsilon$ as $t \to - \infty$, and that $- T'' + \lambda T \geq -
\frac{\varepsilon \lambda}{2} > 0$ for all $t \in (-\infty, 1]$
provided $N$ is sufficiently large (depending only on
$\lambda$). Hence $\L_{\Sigma \times \mathbb{R}} (\Theta T) \geq -
\Theta \frac{\varepsilon \lambda}{2} \geq \eta$ in this range for a
positive constant $\eta>0$. This means that for $s >0$ sufficiently
small, the surface $\{ \exp_{(\theta, t)} \left( s \Theta(\theta) T(t)
  \nu \right) : (\theta, t) \in \Sigma \times (-\infty, 1) \}$ is a
smooth hypersurface (with boundary) in $M \times \mathbb{R}$ that has
\emph{positive} expansion everywhere. Since $T$ is monotone this
hypersurface can be written as the graph of a function $\tilde{u} : U
\to \mathbb{R}$ where $U$ is an \emph{open} neighborhood of $\Sigma$
such that $\tilde{u} > 0$ in the part of $U$ that lies to the side of
$\nu$ and so that $\tilde{u} \to - \infty$ on approach to the part of
the boundary of $U$ that lies in direction $-\nu$ as seen from
$\Sigma$. For $\bar u := - \tilde u$ we have that $\mc(\bar {u}) -
\tr(k)(\bar {u}) < 0$ is a super solution of Jang's equation. Using
$-T$ instead of $T$ one obtains a sub solution $\underline u $ of
Jang's equation with analogous properties. (The awkward sign reversal
here is due to the fact that Jang's equation is the MITS equation
rather than a MOTS equation with respect to the data set $(M \times
\mathbb{R}, g+ dt^2, k)$.) 

There are three situations in which a closed MOTS $\Sigma$
can arise in the regularization limit of Jang's equation: $\Sigma
\subset \partial \Omega_0 \cap \partial \Omega_+$, $\Sigma
\subset \partial \Omega_0 \cap \partial \Omega_-$, and $\Sigma
\subset \partial \Omega_- \cap \partial \Omega_+$. The first two
situations are the cases of \emph{graphical blow-up} and
\emph{graphical blow-down} so there exists a solution of Jang's
equation $u_0 : \Omega_0 \to \mathbb{R}$ which diverges to positive,
respectively negative infinity on approach to $\Sigma \subset \partial
\Omega_0$. The strong maximum principle rules out the possibility that
$\Sigma$ be unstable in these cases straight away in view of the sub
and super solutions constructed in the preceding paragraph. The third
situation is the case of cylindrical blow-up: concretely, there is a
family of graphs $\{u_i\}_{i=1}^{\infty} \subset
\mathcal{C}^\infty(\Omega)$ where $u_i = u_{\tau_i}$ solve the
regularized Jang's equation $\mc(u_{\tau_i}) - \tr(k)(u_{\tau_i}) =
\tau_i u_{\tau_i}$ where $\tau_i \searrow 0$, and such that $u_i \to
\pm \infty$ uniformly on compact subsets of $\Omega_\pm$. From the
analysis of Schoen and Yau in \cite[Proposition
4]{Schoen-Yau:1981:PMTII} it follows that the hypersurfaces $\graph
(u_i) \subset \Omega \times \mathbb{R}$ converge smoothly on compact
sets to the `marginally trapped cylinder' $\Sigma \times
\mathbb{R}$. Note that $\graph(u_i)$ is a super solution of Jang's
equation where $u_i < 0$ and a sub solution where $u_i > 0$. Again
using vertical translates of the sub and super solutions for Jang's
equation above we can rule out the scenario that a component of a MOTS
$\Sigma$ arising in such a cylindrical blow-up be unstable.

The next case to deal with is the Plateau problem. A MOTS $\Sigma$
with boundary is stable in the sense of MOTS if there exists a
function $f$ on $\Sigma$, positive in the interior and vanishing on
the boundary, so that $\L_\Sigma f \geq 0$. We note here that the
Krein-Rutman theorem applies as before to show the existence of a real
eigenfunction (with Dirichlet boundary data) of $\L_\Sigma$ that is
positive on at least one component of $\Sigma$ and vanishing on all
the others.  For MOTS with boundary we have the following:
\begin{lemma}[\cite{Eichmair-Metzger:2010}]
  Assumptions as in Theorem \ref{thm:Plateau_problem} in
  the smooth dimensions $2 \leq n \leq 7$. Then there exist solutions
  $\Sigma$ of the Plateau problem for MOTS in $\Omega$ with boundary
  $\Gamma$ that are stable in the sense of MOTS.
\end{lemma}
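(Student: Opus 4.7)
The plan is to argue by contradiction, adapting to the Plateau setting the analysis carried out earlier in this subsection that shows closed MOTS arising in the regularization limit of Jang's equation are stable. Let $\Sigma$ with boundary $\Gamma$ be the MOTS produced by the Perron method of \cite{Plateau} in the course of proving Theorem \ref{thm:Plateau_problem}, and assume for contradiction that $\Sigma$ is not stable in the sense of MOTS. First I would invoke the Krein-Rutman theorem for $\L_\Sigma$ with Dirichlet boundary conditions on $\Gamma$ to obtain a principal Dirichlet eigenvalue $\lambda < 0$ together with a smooth eigenfunction $\Theta$ that is strictly positive on the interior of at least one component of $\Sigma$, vanishes on the remaining components and identically along $\Gamma$, and satisfies $\L_\Sigma \Theta = \lambda \Theta$.

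Next I would mimic the closed case verbatim by forming $\Phi = \Theta T$, where $T \in C^2(\IR)$ is a monotone function with $T(0) = 0$, bounded limits at $\pm \infty$, chosen so that $\L_{\Sigma \times \IR} \Phi \geq \eta > 0$ on $\Sigma \times (-\infty, 1]$. For sufficiently small $s > 0$, the normal perturbation of $\Sigma \times \IR$ by $s \Phi$ is a smooth embedded hypersurface of $M \times \IR$ with strictly positive expansion everywhere in the extended initial data set $(M \times \IR, g + dt^2, k)$. By the monotonicity of $T$ and the positivity of $\Theta$ on the interior of $\Sigma$, this perturbed surface is the graph of a function $\underline u$ defined on a one-sided open neighborhood of the interior of $\Sigma$ in $\Omega$, diverging to $-\infty$ on approach to part of the boundary of that neighborhood; thus $\underline u$ is a strict sub solution of Jang's equation, and reversing the sign of $T$ produces in the same way a strict super solution $\bar u$ with the complementary blow-up behavior.

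The surface $\Sigma$ arises in \cite{Plateau} as a cylindrical blow-up of a sequence of Perron solutions $u_\tau$ of the regularized equation $\mc(u_\tau) - \tr(k)(u_\tau) = \tau u_\tau$, in the sense that $\graph(u_\tau)$ converges smoothly on compact subsets of the interior of $\Omega$ to $\Sigma \times \IR$, with $u_\tau \to +\infty$ on one side and $u_\tau \to -\infty$ on the other. Because Jang's equation is invariant under vertical translation, $\bar u + c$ remains a strict super solution for every $c \in \IR$; sliding $c$ upward to produce a first interior touching between $\bar u + c$ and $u_\tau$ contradicts the strong maximum principle, and the dual comparison with $\underline u$ disposes of the blow-down side. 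This is the direct analogue of the argument treating graphical and cylindrical blow-up for closed MOTS earlier in this subsection.

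The main obstacle is the boundary $\Gamma$: because $\Theta$ vanishes along $\Gamma$, the barriers $\underline u$ and $\bar u$ degenerate on approach to $\partial \Sigma$ and on their own do not prevent $u_\tau$ from escaping to infinity inside an arbitrarily thin collar of $\Gamma$. To resolve this I would supplement the interior barriers with the Serrin-type boundary barriers built from the trapping inequalities $\mc_{\partial \Omega} \pm \tr_{\partial \Omega} k > 0$ near $\partial_\pm \Omega$ as in the proof of Theorem \ref{thm:existence_for_regularized_equation}; these pin $u_\tau$ down uniformly in $\tau$ inside a fixed tubular neighborhood of $\Gamma$. Gluing the two families of barriers in a strip transverse to $\Gamma$ and running the comparison argument on a slightly shrunken subdomain that still contains the interior of $\Sigma$ yields the desired contradiction with the cylindrical blow-up, and hence shows that the MOTS $\Sigma$ of Theorem \ref{thm:Plateau_problem} is stable in the sense of MOTS.
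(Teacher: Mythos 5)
Your overall strategy---arguing by contradiction via the principal Dirichlet eigenfunction supplied by Krein--Rutman, building sub and super solutions of Jang's equation from $\Theta T$, and playing these off against the blow-up of the Perron solutions by the strong maximum principle---is the natural adaptation of the closed-case argument given earlier in this subsection, and it is exactly the framework the survey sets up before citing \cite{Eichmair-Metzger:2010} (the survey itself gives no proof beyond the Dirichlet eigenvalue setup). The genuine gap is in your treatment of the boundary $\Gamma$, which is precisely the new difficulty of the Plateau case. Your assertion that for all sufficiently small $s>0$ the normal perturbation of $\Sigma\times\IR$ by $s\,\Theta T$ has strictly positive expansion \emph{everywhere} does not follow: since $\Theta$ vanishes along $\Gamma$, the first-order gain $s\,\Theta(\lambda T - T'')$ degenerates on approach to $\Gamma\times\IR$, while the $O(s^2)$ error in the expansion does not (by the Hopf lemma the normal derivative of $\Theta$ is nonzero along $\Gamma$), and on $\Gamma\times\IR$ itself the perturbation vanishes so the expansion stays zero. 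Hence for no fixed $s$ are your barriers sub/super solutions up to the boundary; they are purely interior objects, and the first touching point in your sliding argument may occur at or arbitrarily close to $\Gamma$, where the strong maximum principle gives nothing.

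Your proposed repair does not close this gap, because it rests on a false claim: the Serrin-type barriers coming from the trapping inequalities cannot ``pin $u_\tau$ down uniformly in $\tau$ inside a fixed tubular neighborhood of $\Gamma$.'' In the construction of \cite{Plateau} the Perron solutions are deliberately forced to diverge, $u_\tau\to+\infty$ along $\partial_+\Omega$ and $u_\tau\to-\infty$ along $\partial_-\Omega$, and the diverging barriers which force this extend right up to $\Gamma$ (the hypotheses $\mc_{\partial\Omega}-\tr_{\partial\Omega}k>0$ and $\mc_{\partial\Omega}+\tr_{\partial\Omega}k>0$ hold on neighborhoods of $\partial_\pm\Omega$ whose closures meet along $\Gamma$); indeed it is exactly this divergence up to $\Gamma$ which makes the limit surface attach to $\partial\Omega$ along $\Gamma$ and span it. So every fixed neighborhood of $\Gamma$ contains points where $u_\tau$ becomes unbounded above and points where it becomes unbounded below, uniform-in-$\tau$ bounds there are impossible, and your glued barrier does not exclude the contradiction-producing touching point from escaping into a thin collar of $\Gamma$. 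Some genuinely boundary-adapted input is needed near $\Gamma$---for instance barriers for the perturbed cylinder itself built from the strict boundary inequalities, or a localization of the eigenvalue/comparison argument away from $\Gamma$---and supplying this is the actual content of the cited lemma.
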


Schoen and Yau used
\eqref{eq:rough_consequence_of_integrated_Schoen_Yau_identity} to
derive pointwise estimates for $|h|_\Sigma$ by adopting the iteration
technique of \cite{Schoen-Simon-Yau}. They obtained the area bounds
needed for this argument from a  calibration argument for solutions of Jang's
equation by comparison with extrinsic balls. These curvature estimates have been
generalized in \cite{Andersson-Metzger:2005} to stable MOTS. This
iteration technique extends to initial data sets $(M, g, k)$ of
dimension at most $6$. It is remarkable and important that---as with
minimal surfaces---in ambient dimension $3$, stable MOTS have curvature
estimates that are independent of a priori area bounds. This is uesed
crucially in section \ref{sec:existence-mots}.
\begin{theorem}[\cite{Andersson-Metzger:2005}]
  Let $\Omega \subset M$
  be a bounded open subset of an $n$-dimensional initial data set $(M,
  g, k)$ where $3 \leq n \leq 6$ and let $\Sigma \subset \Omega$ be a
  closed marginally outer trapped surface that is stable in the set of
  MOTS. Then one has the pointwise bound
  \begin{equation}
    |h|_\Sigma \leq C(\dist(\Sigma, \partial \Omega),
    |k|_{\mathcal{C}^1(\bar \Omega)}, |\Ric_M|_{\mathcal{C}(\bar \Omega)},
    \inj(\Omega, g), |\Sigma|).
  \end{equation}
  When $n=3$, then the bound
  on the right is independent of an a priori bound for the area
  $|\Sigma|$ of $\Sigma$.
\end{theorem}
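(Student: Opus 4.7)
The plan is to adapt the Schoen–Simon–Yau strategy for stable minimal hypersurfaces to the MOTS setting, using the integrated stability inequality
\begin{equation}
\int_\Sigma |h|_\Sigma^2 \phi^2 \leq \int_\Sigma |D_\Sigma \phi|^2 + \beta \int_\Sigma (|h|_\Sigma + 1) \phi^2 \label{eq:prop-stab}
\end{equation}
from \eqref{eq:rough_consequence_of_integrated_Schoen_Yau_identity} as a replacement for the ordinary minimal-surface stability inequality, where $\beta$ depends only on $|k|_{\mathcal{C}^1(\bar\Omega)}$ and $|\Ric_M|_{\mathcal{C}(\bar\Omega)}$. The bound \eqref{eq:prop-stab} is what the Krein–Rutman eigenfunction provides once $\mc_\Sigma + \tr_\Sigma k = 0$ and the stability hypothesis of Definition~\ref{def:stability_of_MOTS} are combined with integration by parts.

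First I would derive a Simons-type pointwise inequality on $\Sigma$. Starting from $\mc_\Sigma + \tr_\Sigma(k) = 0$ and differentiating the Codazzi identity twice, a computation analogous to the classical Simons identity yields
\begin{equation*}
\Delta_\Sigma |h|_\Sigma^2 \geq 2 |\nabla h|_\Sigma^2 - C\,|h|_\Sigma^4 - C \bigl(|h|_\Sigma^3 + |h|_\Sigma + 1\bigr),
\end{equation*}
where $C = C(|k|_{\mathcal{C}^1(\bar\Omega)}, |\Ric_M|_{\mathcal{C}(\bar\Omega)})$. The non-minimality $\mc_\Sigma = -\tr_\Sigma k$ and $Dk$-terms produced by commuting covariant derivatives account for the extra lower-order terms compared with the minimal case. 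I would then combine this with the refined Kato inequality $|\nabla h|_\Sigma^2 \geq (1 + \tfrac{2}{n-1})\,|\nabla |h|_\Sigma|^2$, which is insensitive to the non-vanishing of $\mc_\Sigma$ and holds on the $(n-1)$-dimensional surface $\Sigma$.

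Second, the iteration step: I multiply the Simons inequality by $|h|_\Sigma^{2q}\phi^2$ for a non-negative exponent $q$ and a compactly supported cutoff $\phi$ on $\Sigma$, integrate by parts, and apply \eqref{eq:prop-stab} to the test function $|h|_\Sigma^{q+1}\phi$. After using the refined Kato inequality to absorb the gradient term on the good side, the argument closes for $q$ in an interval $[0,q_0)$ whose non-emptiness is governed by the numerical condition $n - 1 \leq 5$, i.e., $n\leq 6$; this is where the dimension restriction enters. Bootstrapping via the Michael–Simon Sobolev inequality on $\Sigma$, whose constant depends on $|\mc_\Sigma|\leq |k|_{\mathcal{C}(\bar\Omega)}$, on $\inj(\Omega,g)$, and (through a covering argument) on $|\Sigma|$, yields $L^p$ estimates for $|h|_\Sigma$ on any subdomain separated from $\partial\Omega$ for every $p < \infty$. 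Once $|h|_\Sigma \in L^p$ for some sufficiently large $p$, I would return to the Simons inequality and view it as a subelliptic differential inequality for $|h|_\Sigma^2$ with right-hand side bounded in $L^{p/2}$; standard Moser iteration, again based on Michael–Simon and a cutoff controlling $\dist(\cdot,\partial\Omega)$, then upgrades the $L^p$ bound to the pointwise estimate on $|h|_\Sigma$.

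Finally, for $n = 3$ the surface $\Sigma$ is two-dimensional and $\R_\Sigma = 2 K_\Sigma$. I would revisit the sharper form of \eqref{eq:prop-stab} that comes directly from $0 \leq \tfrac{1}{f}\L_\Sigma f$ before rearrangement, test it with $\phi \equiv 1$, and invoke Gauss–Bonnet to replace $\int_\Sigma K_\Sigma$ by $2\pi\chi(\Sigma)$. The Galloway–Schoen rearrangement recalled in \eqref{eqn:Galloway-Schoen_rearrangement}, applied to the principal eigenfunction provided by Krein–Rutman, forces $\chi(\Sigma) > 0$, hence $\chi(\Sigma) = 2$; this bounds $\int_\Sigma |h+k|_\Sigma^2$ (and thus $\int_\Sigma |h|_\Sigma^2$) purely in terms of $|k|_{\mathcal{C}(\bar\Omega)}$ and $|\Ric_M|_{\mathcal{C}(\bar\Omega)}$. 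Feeding this back into the Michael–Simon inequality yields an a priori area bound, so the entire iteration closes without assuming $|\Sigma|$ is controlled from the outset.

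The hardest step is the second one: keeping the exponents straight so that the bound $n-1 \leq 5$ emerges cleanly, and arranging matters so that the new error terms in the Simons inequality (from $\tr_\Sigma k$ and $Dk$) together with the defect $\beta \int_\Sigma (|h|_\Sigma + 1)\phi^2$ in \eqref{eq:prop-stab} are lower order in the iteration and can be absorbed into the principal $|h|_\Sigma^4$ term rather than obstructing the closure of the estimate.
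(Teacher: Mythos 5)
Your main iteration scheme is exactly the route the paper attributes to Schoen--Yau and to \cite{Andersson-Metzger:2005}: a Simons-type inequality for $\Sigma$, the stability-derived integral inequality \eqref{eq:rough_consequence_of_integrated_Schoen_Yau_identity} in place of minimal-surface stability, the Schoen--Simon--Yau exponent bookkeeping that produces the restriction $n\leq 6$, and Michael--Simon/Moser iteration to pass from $L^p$ to pointwise bounds with the stated dependence on $\dist(\Sigma,\partial\Omega)$, $|k|_{\mathcal{C}^1}$, $|\Ric_M|$, $\inj$, and $|\Sigma|$. That part is sound in outline.

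The genuine gap is your treatment of the case $n=3$, i.e., the claim that the estimate becomes independent of $|\Sigma|$. First, the theorem assumes no energy condition, while both ingredients you invoke do: the integrated inequality \eqref{eq:integrated_Schoen-Yau_identity} is derived in the paper only under $\mu\geq|J|$, and the Galloway--Schoen argument controlling the topology likewise requires the dominant energy condition. Second, even granting the DEC, stability only yields non-negative Yamabe type; tori are not excluded for merely stable MOTS (ruling out the borderline case is Galloway's result for \emph{outermost} horizons), so $\chi(\Sigma)>0$ is unjustified here. Third, even if $\chi(\Sigma)=2$, testing with $\phi\equiv 1$ and Gauss--Bonnet bounds $\int_\Sigma|h+k|^2$, but converting this to a bound on $\int_\Sigma|h|^2$ costs a term $\int_\Sigma|k|^2\leq |k|^2_{\mathcal{C}(\bar\Omega)}|\Sigma|$ -- the area reappears, so the argument is circular; and in any case an $L^2$ curvature bound fed into Michael--Simon does not produce an upper bound on the area of a closed surface in a bounded region. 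The actual mechanism in \cite{Andersson-Metzger:2005} is different and purely local: in ambient dimension $3$ the stability inequality itself yields local area bounds for intrinsic balls of $\Sigma$ (the analogue of the quadratic area growth of stable minimal surfaces), which is exactly the input the iteration needs in place of a global area hypothesis; equivalently one can argue by rescaling at a point of near-maximal curvature, where the $k$-terms scale away and the limit is a complete stable minimal surface in $\mathbb{R}^3$, hence a plane. No global topological information and no a priori bound on $|\Sigma|$ enter, which is why the resulting estimate is independent of the area rather than being deduced from an area bound.
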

We also mention that the regularity and compactness theory developed
in \cite{Schoen-Simon:1981} for stable minimal hypersurfaces
generalizes to \emph{embedded} MOTS, as was observed and used in
\cite{GAH}. This theory has the advantage of being available in all
dimensions provided one accepts the usual singular set of Hausdorff
co-dimension $7$. This furnishes a convenient framework to carry out
analysis on MOTS in high dimensions. This theory is particularly
effective when combined with a one-sided almost minimizing property,
see \cite[Appendix A]{GAH} and subsection \ref{sec:existence-mots}.

%%% Local Variables: 
%%% mode: latex
%%% TeX-master: "master"
%%% ispell-dictionary: en_US
%%% End: 

\section{Applications to general relativity}
\label{sec:applications}
In this section we discuss the applications that motivated the
development of the mathematical theory for Jang's equation.

\subsection{The positive mass theorem}
\label{sec:posit-mass-theor}
The first place where Jang's equation is analyzed is in its
application to reduce the general version of the positive mass theorem (PMT)
to its time-symmetric form due to Schoen and Yau in
\cite{Schoen-Yau:1981:PMTII}.

The positive mass theorem is a question about asymptotically flat
initial data sets $(M,g,k)$ and its ADM-mass and ADM-momentum.
\begin{theorem}[Positive mass theorem]
  \label{thm:PMT}
  If $(M,g,k)$ is a complete, asymptotically flat initial data set
  which satisfies the
nn  dominant energy condition $\mu\geq |J|$, then $m_\text{ADM}\geq
  |P_\text{ADM}|$.  Moreover, if $m_\text{ADM}=0$, then $(M,g,k)$ is
  initial data for Minkowski space.
\end{theorem}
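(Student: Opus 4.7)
The plan is to use Jang's equation, following \cite{Schoen-Yau:1981:PMTII}, to reduce Theorem \ref{thm:PMT} to the time-symmetric positive mass theorem proved in \cite{Schoen-Yau:1979:PMTI}. I would first apply Theorem \ref{thm:existence_for_Jangs_equation} to the entire asymptotically flat initial data set $(M,g,k)$, with the boundary condition $u \to 0$ on each asymptotic end. This yields a solution $u$ of Jang's equation on an open subset $M_0 \subset M$ whose complement is bounded by finitely many (possibly no) smooth closed apparent horizons, together with the graph $\hat M = \graph(u) \subset (M \times \mathbb{R}, g + dt^2)$, which is a complete Riemannian manifold, asymptotically flat at each original end of $M$ and smoothly asymptotic to a vertical cylinder over each apparent horizon. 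A direct computation using the decay of $u$ shows that the ADM energy of $(\hat M, \hat g)$ at each asymptotic end agrees with the corresponding ADM energy of $(M,g,k)$.

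Next I would exploit the Schoen-Yau identity \eqref{eqn:Schoen-Yau_identity} on the Jang graph. Under the dominant energy condition $\mu \geq |J|$ the left-hand side is non-negative, and after multiplying by $\phi^2$ and integrating by parts exactly as in \eqref{eq:integration_by_parts}, one obtains the stability-type inequality \eqref{eqn:stability_inequality_intro} for every compactly supported Lipschitz $\phi$ on $\hat M$. Since the coefficient of $|D_{\hat M}\phi|^2$ is $2$ rather than the $8$ that would appear in the three-dimensional conformal Laplacian, this inequality is strictly stronger than non-negativity of the Yamabe invariant, and it leaves exactly the slack needed to conformally deform $\hat g$ to a metric of zero scalar curvature without increasing the ADM mass.

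I would then solve $-8\Delta_{\hat M}\varphi + \R_{\hat M}\varphi = 0$ on $\hat M$ for a positive $\varphi$ with $\varphi \to 1$ at each asymptotically flat end and decaying to zero along each cylindrical end so that the conformal distance collapses, existence being provided by \eqref{eqn:stability_inequality_intro}. Setting $\tilde g := \varphi^4 \hat g$, the resulting metric has zero scalar curvature; each cylindrical end conformally compactifies to a removable singular point; and the mass deficit between $\hat g$ and $\tilde g$ is controlled by the surplus $6|D_{\hat M}\varphi|^2$ coming from the gap between the factors $2$ and $8$ in \eqref{eqn:stability_inequality_intro}, so that $\tilde m_{\mathrm{ADM}} \leq E_{\mathrm{ADM}}(M,g,k)$ on each end. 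Applying the time-symmetric positive mass theorem of \cite{Schoen-Yau:1979:PMTI} to $(\tilde M, \tilde g)$ yields $\tilde m_{\mathrm{ADM}} \geq 0$ and hence $E_{\mathrm{ADM}}(M,g,k) \geq 0$. The full statement $m_{\mathrm{ADM}} \geq |P_{\mathrm{ADM}}|$ follows by running the same argument in every asymptotically Lorentz frame and invoking the Lorentz covariance of $P_{\mathrm{ADM}}^{\mu}$, which forces the asymptotic four-momentum to be future-directed causal. For rigidity, $E_{\mathrm{ADM}}=0$ forces equality at every step: the time-symmetric rigidity identifies $(\tilde M, \tilde g)$ with Euclidean space, \eqref{eqn:stability_inequality_intro} becomes an equality collapsing $\mu = |J|$ and making $\hat M$ flat and totally geodesic in $M \times \mathbb{R}$, and unwinding the Gauss and Codazzi equations identifies $(M,g,k)$ with a slice of Minkowski spacetime.

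The principal obstacle I foresee is the analysis of the cylindrical ends of the Jang graph. One must pin down the precise asymptotic geometry at each such end with enough accuracy so that the Yamabe-type equation admits a positive solution with the correct decay, so that the cylindrical ends conformally become genuinely removable singularities, and so that the ADM mass passes to $\tilde g$ with no loss beyond that accounted for by \eqref{eqn:stability_inequality_intro}. A second subtle point is the passage from the energy inequality to the momentum and rigidity statements, which requires uniform control of the asymptotics of the Jang graph under Lorentz boosts and a careful identification of the equality case in \eqref{eqn:Schoen-Yau_identity}.
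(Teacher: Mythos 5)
Your proposal follows essentially the same route as the paper: the Schoen--Yau reduction via Jang's equation (Theorem \ref{thm:existence_for_Jangs_equation}), the stability-type inequality \eqref{eqn:stability_inequality_intro} derived from the identity \eqref{eqn:Schoen-Yau_identity} under the dominant energy condition, the conformal deformation by the solution of $-8\Delta_{\JM}\zeta + \R_{\JM}\zeta=0$ with the factor-$2$ versus factor-$8$ slack controlling the mass change, the conformal blow-down of the cylindrical ends, the appeal to the Riemannian PMT, and the rigidity argument by working backwards to $h=k$ and the Jang criterion \eqref{eq:jang-thm1}. The technical points you defer (asymptotics at the cylindrical ends, and the passage from $E_{\text{ADM}}\geq 0$ to the full momentum inequality via boosts) are exactly the ones the paper itself leaves to \cite{Schoen-Yau:1981:PMTII} and to its stated convention of conflating the positive energy and positive mass statements.
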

In the maximal case, where $\tr_M k = 0$, the dominant energy condition implies
$\Scal_M\geq 0$. This leads to a formulation of the PMT relating only to the
Riemannian manifold $(M,g)$, called the Riemannian PMT.
\begin{theorem}
  \label{thm:PMT-Riem}
  Assume that $(M,g)$ is asymptotically flat and has $\Scal_M\geq
  0$. Then $m_\text{ADM}\geq 0$ and equality holds if and only if
  $(M,g)$ is flat $\IR^3$.
\end{theorem}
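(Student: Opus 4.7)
The plan is to adapt Schoen and Yau's original minimal-surface proof of the Riemannian positive mass theorem \cite{Schoen-Yau:1979:PMTI}. The strategy is to argue by contradiction: assuming $m_{\text{ADM}} < 0$, I would produce a complete, properly embedded, stable minimal surface $\Sigma \subset M$ with planar asymptotics, and derive a contradiction from its stability combined with $\Scal_M \geq 0$.

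As a preliminary reduction, I would carry out the standard conformal density argument. For $\tilde g = u^{4} g$ the Lichnerowicz identity gives $\Scal_{\tilde g} = u^{-5}(-8\Delta u + \Scal_M u)$; choosing $u$ harmonic outside a compact set with $u \to 1$ at infinity, together with a small localized perturbation, allows me to approximate $(M, g)$ by metrics $\tilde g$ with $\Scal_{\tilde g} > 0$ on a compact set, harmonic asymptotics $\tilde g_{ij} = (1 + \tilde m/(2|x|))^{4}\delta_{ij} + O(|x|^{-2})$, and ADM mass $\tilde m$ arbitrarily close to (and no greater than) $m_{\text{ADM}}(g)$. It then suffices to prove strict positivity of $\tilde m$ under these strengthened hypotheses.

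Assume for contradiction that $\tilde m \leq 0$. In the harmonic chart, the coordinate planes $\{x^{3} = \pm h\}$ for large $h$ have mean curvature of the correct sign to serve as barriers that trap area-minimizers spanning large circles $\Gamma_R \subset \{x^{3} = 0\}$; this barrier property is precisely what fails for positive $\tilde m$. Passing $R \to \infty$ and extracting a subsequential limit via the compactness theory for area-minimizing currents yields a complete, properly embedded, area-minimizing (hence stable) minimal surface $\Sigma \subset M$ of finite total curvature with a single end asymptotic to the plane. At $\Sigma$ the stability inequality $\int_{\Sigma}(|h|_{\Sigma}^{2} + \Ric(\nu,\nu))\phi^{2} \leq \int_{\Sigma}|D_{\Sigma}\phi|^{2}$, combined with the Gauss equation $2\Ric(\nu,\nu) = \Scal_{M} - 2 K_{\Sigma} - |h|_{\Sigma}^{2}$ at the minimal surface, yields
\begin{equation*}
    \int_{\Sigma} K_{\Sigma}\phi^{2} \geq \int_{\Sigma}\Bigl(\tfrac{1}{2}\Scal_{M} + \tfrac{1}{2}|h|_{\Sigma}^{2}\Bigr)\phi^{2} - \int_{\Sigma}|D_{\Sigma}\phi|^{2}.
\end{equation*}
Choosing logarithmic cut-offs $\phi$ adapted to the planar asymptotic structure of $\Sigma$, the Dirichlet term becomes negligible in the limit while the left-hand side is controlled from above by $2\pi\chi(\Sigma) \leq 2\pi$ via Cohn--Vossen/Huber. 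Matching the asymptotic expansion of $\Sigma$ against the harmonic asymptotics of $\tilde g$ identifies a mass-dependent defect in this upper bound; under $\tilde m \leq 0$ together with $\int_{\Sigma}\Scal_{M} > 0$, the two sides cannot be reconciled, producing the contradiction.

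For the rigidity statement, equality $m_{\text{ADM}} = 0$ forces $\Scal_M \equiv 0$ via the strict version of the inequality just derived. Varying $g$ by a compactly supported symmetric two-tensor $\dot g$ and applying the first-variation formula for the ADM mass, together with the inequality applied to the perturbed metrics, shows that $g$ is critical for the mass among metrics with $\Scal \geq 0$, which forces Ricci-flatness and hence flatness in dimension three. I expect the main obstacle to lie in the middle step: producing the complete area-minimizing surface with its planar end under merely the negative-mass barrier condition, and carrying out the fine asymptotic analysis needed to extract the mass-dependent deficit from the Gauss-curvature integral on $\Sigma$. Both the choice of Plateau data at infinity and the precise matching of $\Sigma$'s asymptotic expansion to the harmonic asymptotics of $\tilde g$ are delicate, while the density reduction and the rigidity step are standard once the strict inequality is in hand.
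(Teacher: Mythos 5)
Your proposal is essentially the same approach as the one this survey relies on: the paper offers no proof of Theorem \ref{thm:PMT-Riem} itself but cites the Schoen--Yau minimal-surface argument (``existence of certain area minimizing slices'' \cite{Schoen-Yau:1979:PMTI,Schoen-Yau:1981:energy}), which is precisely what you reconstruct — conformal reduction to harmonic asymptotics, slab barriers under the negative-mass assumption, a complete stable area-minimizing surface with a planar end, and the stability/Gauss--Bonnet contradiction, followed by the standard rigidity perturbation. The sketch is sound in outline, with the acknowledged delicate steps (the slab trapping, the asymptotic analysis of $\Sigma$, and the precise source of the final strict inequality) being exactly where the original proof concentrates its work.
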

In a first step Schoen and Yau
\cite{Schoen-Yau:1979:PMTI,Schoen-Yau:1981:energy} proved the
Riemannian PMT in dimension 3 using the existence of certain area
minimizing slices. Their method extends to dimensions $3 \leq n \leq
7$ by a dimension reduction argument, see \cite{Schoen-Yau:1979} and
also \cite{Schoen:1987}. The minimal surface argument of Schoen and
Yau to prove the Riemannian PMT are closely related to their proof of
the non-existence of metrics of positive scalar curvature on the torus
in dimensions $n \leq 7$ in \cite{Schoen-Yau:1979-pos-scalar}. In
fact, Lohkamp observed in \cite{Lohkamp:1999} that the non-existence
of such metrics and the time symmetric positive mass theorem are
essentially equivalent in all dimensions. Two independent approaches
to extend the positive mass theorem to all dimensions by addressing
singularities of minimizing hypersurfaces when $n >7$ have been given
by Lohkamp \cite{Lohkamp:2006} and by Schoen.

An independent proof of Theorem~\ref{thm:PMT} using spinor
methods was later put forward by Witten
\cite{Witten:1981,Parker-Taubes:1982}. It does not need the
reduction of the PMT to the Riemannian PMT that we describe below, and works
in all dimensions under the topological assumption that the data set
be spin.

To describe the reduction of the general form of the positive mass
theorem to the Riemannian case using Jang's equation, we follow
\cite{Schoen-Yau:1979:PRL}. The actual argument due to Schoen and Yau
can be found in \cite{Schoen-Yau:1981:PMTII}. For the time being we
assume that $(M,g,k)$ is such that there exists a global solution $u$
to Jang's equation~\eqref{eq:jang} with boundary conditions $u\to 0$ at infinity. By
Theorem \ref{thm:existence_for_Jangs_equation} we know that such
solutions always exist provided $M$ does not contain any closed apparent
horizons.  The graph $\JM$ of $u$ with the induced metric $\Jg$ is
again asymptotically flat, and has the same ADM-mass as $(M,g,k)$. The
Schoen-Yau identity~\eqref{eqn:Schoen-Yau_identity} on $\JM$ implies, in
view of the dominant energy condition and a calculation similar to the
one in section~\ref{sec:stability_of_MOTS}, that for all functions
$\phi\in C^\infty(\JM)$ with compact support
\begin{equation}
  \label{eq:4}
  \int_\JM  2|D_\JM \phi|^2 + \phi^2 \Scal_\JM
  \geq
  \int_\JM |h-k|_\JM^2. 
\end{equation}
Written in a slightly different way this implies
\begin{equation}
  \label{eq:6}
  \int_\JM 8|D_\JM \phi|^2 + \phi^2 \Scal_\JM
  \geq
  6\int_\JM |D_\JM \phi|^2 
  \geq 0.
\end{equation}
It then follows from standard methods that there exists a positive
solution $\zeta$ of the equation
\begin{equation*}
  -\Delta_{\hat M}\zeta + \tfrac18\Scal_\JM\zeta
  =0,
\end{equation*}
such that $\zeta\to 1$ at infinity.
This
implies that the conformal metric $\tilde g := \zeta^4\Jg$ has
scalar curvature $\tilde\Scal_\JM = 0$. Moreover, it can be shown that
$\zeta$ has the asymptotic expansion
\begin{equation*}
  \zeta = 1 + A/r + O(r^{-2}).
\end{equation*}
Inserting $\zeta$ as test function into equation~(\ref{eq:6}) yields that
\begin{equation*}
  A \leq -\frac{6}{32\pi} \int_\JM |D_\JM \zeta|^2  \leq 0. 
\end{equation*}
That $\zeta$ is a legitimate test function can be verified by checking
that the boundary term in the integration by parts, used to
derive~\eqref{eq:6} from~\eqref{eqn:Schoen-Yau_identity} decays sufficiently fast.

A direct calculation shows that the ADM-mass of $(\JM,\tilde g)$
satisfies
\begin{equation*}
  m_\text{ADM}(\JM,\tilde g) = m_\text{ADM}(M,g) + \tfrac12 A \leq m_\text{ADM}(M,g)
\end{equation*}
so that the resulting manifold $(\JM,\tilde g)$ has ADM-mass no more
than the initial data set $(M,g,k)$. Since the scalar curvature is
zero, the Riemannian PMT gives that 
$m_\text{ADM}(M,g)\geq 0$.
If $m_\text{ADM}(M,g)=0$ one can work backwards through this argument
to see that in this case
$\tilde g$ is flat, $\zeta$ is constant, so that $\Jg =
\tilde g$. Moreover, there is also equality in the Schoen-Yau
identity, so that $h = k$. In particular,
the criterion of Jang~\eqref{eq:jang-thm1}
is satisfied and $(M,g,k)$ is a data set for Minkowski space.
 
Recall the simplifying assumption that a solution to Jang's equation
exists on $(M,g,k)$. This is indeed a restriction, as Jang's
equation can blow-up (or down) asymptotic to cylinders over marginally
outer (or inner) trapped surfaces, cf.\ Theorem \ref{thm:existence_for_Jangs_equation}.  The resolution of the situation was achieved in
\cite{Schoen-Yau:1981:PMTII} by compactifying the resulting
cylindrical ends using a conformal transformation. While the actual
procedure is out of the scope of this article, we wish to remark that
this step is a major obstacle in the reduction of the
general Penrose conjecture to the Riemannian version, proved by
Huisken and Ilmanen \cite{Huisken-Ilmanen:2001} and Bray
\cite{Bray:2001}. A detailed discussion of this fact can be found in
\cite{Malec-OMurchadha:2004}.

This reduction has been described by Schoen and Yau
\cite{Schoen-Yau:1981:PMTII} for three dimensional initial data
sets. The technical difficulties in higher dimensions are due to the
potential singularities of apparent horizons and hence the blow-up
cylinders in the solutions of Jang's equation, and their potentially
complicated topology that prevents direct application of the arguments
from \cite{Schoen-Yau:1981:PMTII}. These technical difficulties are
resolved in \cite{Eichmair:PMT} in dimensions $4\leq n\leq 7$.

\subsection{Formation of black holes}
\label{sec:condensation-matter}
The mechanism that causes Jang's equation to possibly blow-up along
apparent horizons yields an approach to the existence of apparent
horizons in the following way. Assume that $(M,g,k)$ is an initial
data set, where $M$ is compact with non-empty boundary. In addition
suppose that the boundary geometry is such that the barriers needed to
solve Jang's equation exist. Then the condition that $(M,g,k)$ does
not contain apparent horizons implies that the Dirichlet problem to
Jang's equation is solvable without the possibility of blow-up with
arbitrary boundary data, cf. Theorem
\ref{thm:existence_for_Jangs_equation}. If one can devise conditions
that lead to a contradiction using this solution, the existence of
apparent horizons can be concluded.

The first time that this prototype was used, is in the paper by Schoen
and Yau to prove the following theorem.
\begin{theorem}[\cite{Schoen-Yau:1983:cond}]
  \label{thm:condensation}
  Let $(M,g,k)$ be a compact initial data set with non-empty boundary
  $\del M$ such that $\mc_{\del M} > |\tr_{\del 
    M}(k)|$. Let $\Omega\subset M$ such that the following conditions are
  satisfied:
  \begin{enumerate}
  \item $\mu - |J| \geq \Lambda >0$ on $\Omega$,
  \item $\Rad(\Omega) \geq \sqrt{\frac{3}{2\Lambda}}\pi$.
  \end{enumerate}
  Then $M$ contains an apparent horizon.
\end{theorem}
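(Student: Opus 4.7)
The plan is to argue by contradiction: assume $M$ contains no apparent horizon, then use the Schoen-Yau machinery of Jang's equation on the graph $\hat M$ to derive an upper bound on $\Rad(\Omega)$ strictly less than $\pi\sqrt{3/(2\Lambda)}$. Under the boundary inequality $\mc_{\partial M} > |\tr_{\partial M}(k)|$, the standard upper and lower barriers for Jang's equation are available along $\partial M$, and the assumed absence of apparent horizons in $M$ excludes the cylindrical blow-up alternative of Theorem~\ref{thm:existence_for_Jangs_equation}. Hence there is a smooth global solution $u \in \mathcal{C}^{2,\alpha}(\bar M)$ of Jang's equation for any prescribed smooth boundary data on $\partial M$. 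Let $\hat M = \graph(u)$ carry the induced metric $\hat g = g + du \otimes du$ and let $\pi: \hat M \to M$ be the canonical projection, which is a diffeomorphism.

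Next, I would multiply the Schoen-Yau identity~\eqref{eqn:Schoen-Yau_identity} by $\phi^2$ for $\phi \in \mathcal{C}^1_c(\hat M)$, integrate, and absorb the mixed term arising from the divergence via Young's inequality exactly as in~\eqref{eq:integration_by_parts}. Since $\mu$ and $J$ are translation invariant in the vertical direction and coincide with the mass and current densities of $(M,g,k)$, the hypothesis $\mu - |J| \geq \Lambda$ on $\Omega$ gives $\mu - J(\nu) \geq \Lambda$ throughout $\hat\Omega := \pi^{-1}(\Omega)$. Doubling and discarding the non-negative term $|h-k|^2$ yields the sharpened stability inequality
\begin{equation} \label{eq:stab-cond}
2\Lambda \int_{\hat M}\phi^2 \leq \int_{\hat M} 2|D_{\hat M}\phi|^2 + R_{\hat M}\phi^2
\qquad \text{for every } \phi \in \mathcal{C}^1_c(\hat\Omega).
\end{equation}
Because $\hat g$ dominates $\pi^*g$ as a quadratic form, $\pi$ is distance non-increasing, so every intrinsic geodesic ball $\hat B_r(\hat p) \subset \hat M$ projects into $B_r(\pi\hat p) \subset M$. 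Consequently, the radius hypothesis supplies an intrinsic ball $\hat B_r(\hat p) \subset \hat\Omega$ of radius $r \geq \pi\sqrt{3/(2\Lambda)}$ on which~\eqref{eq:stab-cond} holds for compactly supported test functions.

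The final step is to extract a contradiction from~\eqref{eq:stab-cond} applied on such a large ball. The inequality~\eqref{eq:stab-cond} asserts that the first Dirichlet eigenvalue of the Schr\"odinger-type operator $-2\Delta_{\hat M} + R_{\hat M}$ on $\hat B_r(\hat p)$ is at least $2\Lambda$. Adding $6 \int |D\phi|^2$ to both sides produces the Yamabe-type bound $\int 8|D\phi|^2 + R_{\hat M}\phi^2 \geq 2\Lambda \int\phi^2 + 6\int |D\phi|^2$, which (as in the PMT argument of Section~\ref{sec:posit-mass-theor}) supports a conformal change $\tilde g = \zeta^4 \hat g$ on an interior subball producing a metric with a quantitative scalar-curvature lower bound. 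A Sturm-type comparison argument against the one-dimensional model $y'' + (2\Lambda/3)y = 0$, whose first Dirichlet eigenvalue on $[0,L]$ is $\pi^2/L^2$, then bounds $r$ strictly below $\pi\sqrt{3/(2\Lambda)}$, the desired contradiction. The main obstacle is this final comparison: unlike the Ricci-based Bonnet-Myers theorem, a lower bound on scalar curvature alone does not in general bound diameter, and making~\eqref{eq:stab-cond} and the accompanying conformal change precise enough to force the sharp radius bound requires carefully combining the full spectral content of~\eqref{eq:stab-cond} with the conformal and three-dimensional geometric structure of the Jang graph. This is the technical heart of \cite{Schoen-Yau:1983:cond}.
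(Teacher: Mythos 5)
Your set-up matches the paper: argue by contradiction, use the absence of apparent horizons together with the boundary condition $\mc_{\del M}>|\tr_{\del M}(k)|$ and Theorem \ref{thm:existence_for_Jangs_equation} to get a global solution of Jang's equation, and use the Schoen--Yau identity \eqref{eqn:Schoen-Yau_identity} (integrated against $\phi^2$ with the divergence term absorbed by Young's inequality) to conclude that the first Dirichlet eigenvalue of $-\Delta_{\JM}+\tfrac12\Scal_{\JM}$ on the part of the graph over $\Omega$ is at least $\Lambda$; the observation that $\Jg\geq g$ makes the projection distance non-increasing is also the one the paper uses. The proof breaks down after that, in two ways. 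First, you silently replace the hypothesis on $\Rad(\Omega)$ by the existence of a large intrinsic geodesic ball inside $\hat\Omega$. But $\Rad$ is not an inradius: it is defined through curves $\Gamma$ that do not bound a disk inside their tubular neighborhood $T_r(\Gamma)$, and this specific notion is what makes the theorem true --- an eigenvalue bound for $-\Delta+\tfrac12\Scal$ on a large geodesic ball is \emph{not} contradictory (think of a thin round cylinder $S^2(\eps)\times\IR$, where the scalar curvature and hence the spectrum are as large as you like while geodesic balls of arbitrary radius exist, but whose Schoen--Yau radius is small because equatorial circles bound hemispherical disks in small tubular neighborhoods). So the reduction to ``large ball plus spectral bound'' cannot produce a contradiction even in principle.

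Second, the step that would have to do the work --- turning the spectral positivity into a size bound --- is exactly the part you defer (``this is the technical heart''), and the substitute you sketch (conformal change to positive scalar curvature plus a one-dimensional Sturm comparison) fails for the reason you yourself state: a scalar curvature lower bound does not control the size of balls. The paper's argument instead performs a two-step dimension reduction tailored to $\Rad$: take the first Dirichlet eigenfunction $\phi>0$ of $-\Delta_{\JM}+\tfrac12\Scal_{\JM}$, minimize the weighted area $A_\phi(\Sigma)=\int_\Sigma\phi$ among disks spanning a curve $\Gamma$ realizing (almost) the radius, use stability of the minimizing disk to obtain a second operator with non-negative Dirichlet spectrum and first eigenfunction $\psi>0$, and then minimize the weighted length $I_{\phi\psi}(\gamma)=\int_\gamma\phi\psi$ among curves in the disk running from $\Gamma$ to $\del T_r(\Gamma)$. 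Stability of the minimizing curve gives a Bonnet-type length bound by $\sqrt{3/(2\Lambda)}\,\pi$, while the definition of $\Rad(\cdot,\Gamma)$ forces the disk to reach $\del T_r(\Gamma)$ and hence forces such curves to have length at least $r$; this is where the contradiction with $\Rad(\Omega)\geq\sqrt{3/(2\Lambda)}\,\pi$ (transferred to the graph via $\Jg\geq g$) comes from. Without this weighted stable-disk/stable-geodesic mechanism --- or some replacement for it --- your argument does not establish the theorem.
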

Here $\Rad(\Omega)$ denotes the H-radius of a set $\Omega$ which is
defined as follows. Let $\Gamma\subset \Omega$ be a curve bounding a
disk in $\Omega$. The radius of $\Omega$ relative to $\Gamma$ is
defined as
\begin{equation*}
  \Rad(\Omega,\Gamma)
  :=
  \sup\{ r: \dist(\Gamma,\del\Omega)>r,\ \Gamma\ \text{does not bound a
    disk in}\ T_r(\Gamma)\}.
\end{equation*}
Here $T_r(\Gamma)$ is the tubular neighborhood of $\Gamma$ with radius
$r$. The radius of $\Omega$ then is defined as
\begin{equation*}
  \Rad(\Omega) = \sup \{ \Rad(\Omega,\Gamma) : \Gamma\ \text{bounds a
    disk in}\ \Omega\}.
\end{equation*}
Roughly speaking, $\Rad(\Omega)$ is the diameter of the largest tubular
neighborhood of a curve $\Gamma$ that does not contain a disk spanned
by $\Gamma$.

As already indicated the argument proceeds via contradiction, so
assume that there are no apparent horizons in $M$. Then the Dirichlet
problem for Jang's equation on $(M,g,k)$ is solvable with zero
boundary values, cf. Theorem \ref{thm:existence_for_Jangs_equation}. We denote the
graph of the solution by $\JM$. From
the Schoen-Yau identity~(\ref{eqn:Schoen-Yau_identity}), it
follows that on the portion $\JM_\Omega$ of $\JM$ above $\Omega$ one has
\begin{equation*}
  \Scal_\JM \geq 2\Lambda + 2 |\omega|^2 -2\div_\JM \omega,
\end{equation*}
where $\omega = X - D_\JM\log f$, $f=-\la\del_t,\Jnu\ra$, $\Jnu$ is
the downward unit normal to $\JM$, and $X$ is the
tangential part of the one form $-k(\Jnu,\cdot)$ as before.  This inequality yields
that the first Dirichlet eigenvalue $\lambda$ of the operator $\L:=
-\Delta_\JM + \tfrac12\Scal_\JM$ on $\JM_\Omega$ satisfies $\lambda \geq
\Lambda$. Furthermore, since the distances in the $\Jg$ metric are
no less than in the $g$ metric, it also follows that $\Rad(\JM_\Omega)
\geq \sqrt{\frac{3}{2\Lambda}}\pi$.

The first Dirichlet eigenfunction $\phi$ of $\L$ on $\JM_\Omega$, satisfying $-\Delta_\JM \phi
+ \tfrac12\Scal_\JM \phi = \lambda \phi$ is positive and can be used to define
the following functional on surfaces $\Sigma\subset \JM$,
\begin{equation*}
  A_\phi(\Sigma) = \int_\Sigma \phi \,d\hat\sigma,
\end{equation*}
where $d\hat\sigma$ denotes the area element induced by $\hat g$. Note,
that this functional can be interpreted as the area functional for
surfaces of the form $\Sigma\times S^1$ in $\JM\times S^1$ equipped
with the metric $\tilde g = \Jg + \phi^2 ds^2$, where $ds^2$ denotes the
standard metric on $S^1$. Note that $\tilde g$ has scalar curvature
$\tilde \Scal = \bar \Scal_\JM - 2 \phi^{-1}\Delta_\JM \phi = 2\lambda\geq 2
\Lambda$ in $\JM_\Omega$. This interpretation also implies that one
can find a minimizing disc $\Sigma$ for $A_\phi$ with boundary $\Gamma$,
where $\Gamma$ is chosen such that $\Rad(\JM_\Omega,\Gamma) \geq
\Rad(\JM_\Omega) - \varepsilon$. The minimizer $\Sigma$ satisfies the
Euler-Lagrange equation $\mc = -\la D_\JM\log \phi,\nu\ra$, where $\nu$
is the normal vector field on $\Sigma$ used to define $\mc$. More
importantly, stability of $\Sigma$ implies that the operator defined
by
\begin{equation*}
  f \mapsto -\Delta_\Sigma f - \la D_\Sigma\log \phi, D_\Sigma f\ra
  + f (\Lambda -\tfrac12 \Scal_\Sigma + \phi^{-1}\Delta_\Sigma \phi)
\end{equation*}
has non-negative Dirichlet spectrum. The form of this operator follows
for example by reduction of the stability operator in $(\JM\times
S^1,\tilde g)$ for equivariant variations on surfaces with
$S^1$-symmetry. Let $\psi>0$ be the first eigenfunction of this operator
and define a functional for curves $\gamma$ in $\Sigma$ as follows:
\begin{equation*}
  I_{\phi\psi}(\gamma)
  =
  \int_\gamma \phi\psi.
\end{equation*}
Recall Bonnet's theorem, which asserts that stable geodesics in surfaces
with scalar curvature bounded below by a positive constant have
bounded length. Here, the modification of the length functional by
introducing the weight $\phi\psi$ into $I _{\phi\psi}$ leads to a similar effect
for curves minimizing $I_{\phi\psi}$ in the sense that the stability of the
minimizer forces the minimizer to be short, in particular the length
is bounded by $\sqrt{\frac{3}{2\Lambda}}\pi$. By definition of
$\Rad(\JM,\Gamma)$ the disc $\Sigma$ intersects the boundary of a
tubular neighborhood $T_r(\Gamma)$ of radius $r<
\Rad(\JM,\Gamma)$. Thus it is always possible to find a minimizer for
$I_{\phi\psi}$ with length at least $r$, since one can minimize $I_{\phi\psi}$
among all curves with one endpoint on $\Gamma$ and one endpoint on
$\del T_r(\Gamma) \cap \Sigma$. This yields the desired estimate for
$\Rad(\JM)$, and thus for $\Rad(M)$. See \cite{Schoen-Yau:1983:cond}
for details.

There are several variations on this theme in the literature. Clarke
\cite{Clarke:1988} gave an interesting and useful observer independent
condition on the energy-momentum
tensor of a space-time that implies the trapping condition on the
boundary of the initial data set in Theorem~\ref{thm:condensation}.

A refined criterion for the existence of horizons was given by Yau
\cite{Yau:2001}.
\begin{theorem}
  Let $(M,g,k)$ be an initial data set satisfying the following
  conditions:
  \begin{enumerate}
  \item There exists $c>0$ such that $\mc_{\del M} - |\tr_{\del M}(k)| > c$.
  \item $\Rad(M) \geq \sqrt{\frac{3}{2\Lambda}}\pi$ where $\Lambda \leq
    \frac{2}{3}c^2 +\mu - |J|$ on $M$.
  \end{enumerate}
  Then M contains an apparent horizon. 
\end{theorem}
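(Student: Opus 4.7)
The plan is to adapt the proof scheme of Theorem \ref{thm:condensation} to this refined hypothesis, arguing by contradiction and upgrading the spectral inequality on the Jang graph by a boundary contribution of size $\tfrac{2}{3}c^2$. Suppose, for contradiction, that $M$ contains no apparent horizon. Since $\mc_{\partial M} > |\tr_{\partial M}(k)|$ by hypothesis~(1), Theorem \ref{thm:existence_for_Jangs_equation} supplies a smooth solution $u$ of Jang's equation with $u|_{\partial M} = 0$; let $\JM$ be its graph with induced metric $\Jg$. Then $\partial \JM = \partial M \times \{0\}$, and the projection $\JM \to M$ is distance non-decreasing, so $\Rad(\JM) \geq \Rad(M)$.

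The key analytic step is to establish the Poincar\'e-type inequality
\begin{equation*}
\int_{\JM} |D_{\JM}\phi|^2 + \tfrac{1}{2}\Scal_{\JM}\,\phi^2 \geq \Lambda \int_{\JM} \phi^2
\end{equation*}
for all $\phi \in C^{\infty}(\JM)$, \emph{without} requiring Dirichlet conditions on $\partial\JM$. Starting from the Schoen-Yau identity~\eqref{eqn:Schoen-Yau_identity}, dropping the non-negative $|h-k|^2$ term and invoking the dominant energy condition, one has $\tfrac12\Scal_{\JM} \geq (\mu - |J|) + |\omega|^2 - \div_{\JM}\omega$ with $\omega = X - D_{\JM}\log f$ and $f = -\la\partial_t,\Jnu\ra$. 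Multiplying by $\phi^2$, integrating by parts, and completing the square gives
\begin{equation*}
\int_{\JM} |D_{\JM}\phi|^2 + \tfrac{1}{2}\Scal_{\JM}\,\phi^2 \geq \int_{\JM}(\mu - |J|)\phi^2 - \int_{\partial\JM}\phi^2\,\la\omega,\nu_\partial\ra,
\end{equation*}
where $\nu_\partial$ is the outward unit normal to $\partial\JM$ in $\JM$. On $\partial M$ the Dirichlet condition $u=0$ forces $Du$ to be purely normal to $\partial M$, and a direct geometric calculation identifies $-\la\omega,\nu_\partial\ra|_{\partial\JM}$ with an explicit combination of $\mc_{\partial M}$, $\tr_{\partial M}(k)$, and the normal derivative $u_\nu$. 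Applying the strict trapping hypothesis $\mc_{\partial M} - |\tr_{\partial M}(k)| > c$, optimizing in $u_\nu$, and using a Young's inequality to absorb a controlled fraction of $\int_{\JM} |D_{\JM}\phi|^2$ against the boundary contribution produces the extra bulk term $\tfrac{2}{3}c^2\int_{\JM}\phi^2$ with sharp coefficient and yields $\lambda \geq \Lambda$.

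With the spectral estimate in hand, the remainder of the argument copies that of Theorem \ref{thm:condensation} verbatim. Let $\phi > 0$ be the resulting first eigenfunction and consider $\JM \times S^1$ with warped metric $\Jg + \phi^2 ds^2$, whose scalar curvature is at least $2\Lambda$. Choose a closed curve $\Gamma \subset \JM$ bounding a disk with $\Rad(\JM,\Gamma) > \Rad(\JM) - \varepsilon$, minimize the weighted area $A_\phi$ to obtain a stable disk $\Sigma$, let $\psi>0$ be the first eigenfunction of the induced stability operator on $\Sigma$, and minimize $I_{\phi\psi}$ among curves joining $\Gamma$ to $\partial T_r(\Gamma)\cap\Sigma$ for some $r$ close to $\Rad(\JM)$. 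Stability together with the weighted scalar curvature lower bound, through a Bonnet-type argument, forces the length of this minimizer to be at most $\sqrt{3/(2\Lambda)}\pi$; but by construction it is also at least $r$, which contradicts $\Rad(M) \geq \sqrt{3/(2\Lambda)}\pi$ once $\varepsilon$ is small enough.

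The main obstacle will be the boundary calculation for $\la\omega,\nu_\partial\ra$ on $\partial\JM$ and the precise Young's inequality that produces the coefficient $\tfrac{2}{3}$. This is the only step where Yau's refinement genuinely departs from \cite{Schoen-Yau:1983:cond}, and it is what both forces the strict quantitative trapping assumption~(1) and pins down the specific numerical form of the bound $\Lambda \leq \tfrac{2}{3}c^2 + \mu - |J|$.
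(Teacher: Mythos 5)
Your overall skeleton matches the paper's: argue by contradiction, solve the Dirichlet problem for Jang's equation with zero boundary data via Theorem \ref{thm:existence_for_Jangs_equation}, write the Schoen--Yau identity \eqref{eqn:Schoen-Yau_identity} on $\JM$, integrate against $\phi^2$ keeping the boundary term, compute that boundary term explicitly using $u=0$ on $\del M$, and then run a radius estimate in the spirit of Theorem \ref{thm:condensation}. The boundary computation you anticipate is indeed the heart of the matter, and the paper carries it out, arriving at the clean statement $\mc_{\del \JM} - \la\omega,N\ra \geq v(\mc_{\del M} - |\tr_{\del M}(k)|) \geq c$.

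However, your central analytic step is where the argument breaks. You propose to absorb the boundary integral $\int_{\del\JM}\phi^2\la\omega,N\ra$ into the bulk, producing a Poincar\'e-type inequality $\int_\JM |D_\JM\phi|^2 + \tfrac12\Scal_\JM\phi^2 \geq \Lambda\int_\JM\phi^2$ with no boundary condition on $\phi$, by ``optimizing in $u_\nu$'' and a Young's inequality, and then to run the Schoen--Yau interior argument verbatim. This cannot work as stated. First, $u_\nu$ is not at your disposal: $u$ is the given Jang solution and its normal derivative is determined, and the resulting bound is $\la\omega,N\ra \leq \mc_{\del\JM}-c$, which does not give the boundary term a favorable sign ($\mc_{\del\JM}=v^{-1}\mc_{\del M}>0$ and is not small). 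Second, no trace or Young inequality converts a boundary integral of $\phi^2$ into $\tfrac23 c^2\int_\JM\phi^2$ plus a fraction of the Dirichlet energy with constants depending only on $c$; any such trace constant depends on the geometry of $\JM$ (which is not controlled a priori), and dimensionally a boundary density cannot be traded for a bulk density at rate $c^2$ without further information. The paper's (i.e.\ Yau's) resolution is different in kind: the boundary term is \emph{not} absorbed into a bulk spectral bound. Instead the positivity $\mc_{\del\JM}-\la\omega,N\ra\geq c$ is carried as a boundary convexity condition into an \emph{extension} of the Schoen--Yau variational scheme --- the stable weighted-area disk and the $I_{\phi\psi}$-minimizing curves may meet $\del\JM$, and the free-boundary first and second variation inequalities pick up exactly this quantity; it is there, in the Bonnet-type length estimate, that the contribution $\tfrac23 c^2$ combines with $\mu-|J|$ to give $\Lambda$. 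So the argument after the boundary computation is not ``verbatim'' Theorem \ref{thm:condensation}; the step you defer as ``the main obstacle'' is precisely the step your proposed reduction skips, and without it the stated coefficient $\tfrac23 c^2$ is not obtained.
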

It is instructive to consider the case $k \equiv 0$ first: if $c>0$ is 
large, the first condition suggests that $M$ shrinks quickly from its 
boundary $\partial M$ inwards, while the second condition implies that
the interior of $M$ is large in a certain sense. The conclusion is that part of the 
interior of $M$ must be separated from the boundary by a minimal surface.

Again, Jang's equation enters prominently. Yau's argument in
\cite{Yau:2001} is by contradiction and proceeds as follows.  Assume
in virtue of Theorem~\ref{thm:existence_for_Jangs_equation} that
Jang's equation has a global solution $u$ on $M$. Denote the graph of
$u$ in $M\times\IR$ by $\JM$ and its induced scalar curvature by
$\Scal_\JM$. Then, by the Schoen-Yau
identity~(\ref{eqn:Schoen-Yau_identity}) one has that
\begin{equation*}
  2 (\mu -|J|) \leq \Scal_\JM - 2 |\omega|^2 + 2\div_\JM \omega,
\end{equation*}
where $\omega = X + D_\JM \log v$ as in
equation~(\ref{eqn:Schoen-Yau_identity}), where we use $v$ to denote $f^{-1}$.
This yields for all $\phi\in C^\infty(\JM)$ the following estimate:
\begin{equation}
  \label{eq:5}
  2\int_\JM (\mu - |J|)\phi^2 
  \leq
  \int_\JM 2|D_\JM \phi|^2 + \Scal_\JM \phi^2 
  +
  2 \int_{\del \JM} \phi^2 \la\omega,N\ra, 
\end{equation}
where $N$ denotes the outward pointing normal to $\del \JM$ in $\JM$. The
point is that the difference of the boundary term in
equation~(\ref{eq:5}) and the mean curvature of the boundary has a
positive lower bound, as one can see as follows. Recall that $
\la\omega,N\ra = \la D_\JM \log v,N\ra - k(\Jnu, N)$ where $v = f^{-1} =
\sqrt{1+|D_Mu|^2}$. Moreover, a calculation shows that the mean
curvature of $\del \JM$ in $\JM$ satisfies $\mc_{\del \JM} = v^{-1} \mc_{\del
  M}$, where the latter is calculated with respect to the metric
$g$. The normal $N$ is given by $N=v^{-1}(\eta + |D_Mu|\del_t)$, where
$\eta$ is the outward pointing normal to $\del M$ in $M$. To calculate
$\mc_{\del \JM}-\la\omega,N\ra$, note that since $u=0$ on $\del M$ we have that $D_Mu =
\sigma |D_Mu|\eta$, where $\sigma\in\{\pm 1\}$.  We let $V = \pi_* \Jnu
= v^{-1}D_Mu= \sigma v^{-1}|D_Mu|\eta$. Then the mean curvature of $\JM$
on $\del \JM$ is given by
\begin{equation*}
  \begin{split}
    \mc_\JM
    &=
    \div_M V
    =
    \sigma \div_M(v^{-1}|D_Mu| \eta)
    =
    \sigma v^{-1} |D_Mu| \mc_{\del M} + v^{-3}D_M^2u(\eta,\eta)
    \\
    &=
    \sigma v^{-1} |D_Mu| \mc_{\del M} + \sigma |D_Mu|^{-1} \la D_\JM\log v,N\ra.
  \end{split}
\end{equation*}
Note that this is also true if $D_Mu=0$, since then also $D_M\log
v=0$. By Jang's equation, $\mc_\JM = \tr_\JM(k)$, where
\begin{equation*}
  \tr_\JM(k)
  =
  \tr_M(k) - k(\Jnu,\Jnu)
  =
  \tr_{\del M}(k) + v^{-2} k(\eta,\eta).
\end{equation*}
Since furthermore $k(\Jnu,N) = \sigma v^{-2}|D_Mu| k(\eta,\eta)$ it
follows that
\begin{equation*}
  \begin{split}
    0
    &=
    \sigma |D_Mu| (\mc_\JM -\tr_\JM(k))\\
    &=
    v^{-1} |D_Mu|^2 \mc_{\del M} + \la D_M\log v,N\ra
    -
    \sigma \tr_{\del M}(k) - k(\Jnu,N)
  \end{split}
\end{equation*}
and thus
\begin{equation*}
  \la\omega,N\ra  = \sigma \tr_{\del M}(k) - |D_Mu|^2 \mc_{\del M}.
\end{equation*}
Finally, 
\begin{equation*}
  \mc_{\del \JM} - \la\omega,N\ra
  =
  v \mc_{\del M} - \sigma |D_Mu| \tr_{\del M} (k)
  \geq
  v(\mc_{\del M} - |\tr_{\del M}(k)|)
  \geq
  c.
\end{equation*}
This boundary term then has a similar effect as the $\Lambda$ in an
extension of the argument of Schoen and Yau to get an estimate on
$\Rad(M)$ contradicting the assumption as before.

Of the several different proposals to define the size of a body in an
alternative way, we want to mention specifically the suggestion of
Galloway and O'Murchadha \cite{Galloway-OMurchadha:2008}. They use the
intrinsic diameter of the largest stable MOTS bounded by curves in the
boundary of the body to define the radius of the body and show in turn
that this radius is bounded if the matter content of the body is
large.

The boundary effect discovered by Yau plays a crucial role in the
proof that the quasi-local mass defined by Liu and Yau is non-negative
\cite{liu-yau:2004,liu-yau:2006}. The common theme with
section~\ref{sec:posit-mass-theor} is that Jang's equation is used to
transform the question whether the Liu-Yau mass is non-negative to a
question in Riemannian geometry. As before the transition to Jang's
graph is followed by a conformal transformation to a metric with zero
scalar curvature. In the Riemannian setting established by this
procedure, the Liu-Yau mass is transformed to a quantity bounded below
by a modified version of the Brown-York mass. This uses the boundary
effect calculated above in a crucial way. Liu and Yau show that this
quantity is non-negative by extending an argument of Shi and Tam
\cite{shi-tam:2002}.

Eardley \cite{Eardley:1995} uses Jang's equation to give a different
criterion for the formation of black holes. To this end, for a data
set $(M,g,k)$ and a region $\Omega\subset M$ the following quantity is
introduced
\begin{equation*}
  k_\text{min}(\Omega) := \inf \{ \tr_M(k) - k(v,v) : p \in \Omega, v\in
  T_p M, |v|\leq 1\}.
\end{equation*}
Note that $k_\text{min}$ is the smallest value that $\tr k(u)$ could
take at any point of $\Omega$ for any graph $u:\Omega\to\IR$. In the
setting below, where $k$ is positive definite, it equals the minimal
value of the sum of two smallest eigenvalues of $k$ on $\Omega$.
\begin{theorem}[\cite{Eardley:1995}]
  \label{thm:eardley}
  Given compact initial data $(M,g,k)$ with non-empty boundary such
  that $\mc_{\del M}> |\tr_{\del M}(k)|$. If there is $\Omega\subset
  M$ such that 
  \begin{equation*}
    k_\text{min}(\Omega) \Vol(\Omega) > \Area(\del\Omega),
  \end{equation*}
  then there is an apparent horizon in $M$.
\end{theorem}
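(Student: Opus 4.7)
The strategy is to argue by contradiction, following the prototype described above for \cite{Schoen-Yau:1983:cond} and \cite{Yau:2001}. Suppose that $M$ contains no apparent horizon. Then by Theorem \ref{thm:existence_for_Jangs_equation} applied to $M$ itself (using the boundary curvature condition $\mc_{\del M} > |\tr_{\del M}(k)|$ to solve the Dirichlet problem with boundary data $\phi \equiv 0$), there exists a global solution $u \in \mathcal{C}^{2,\alpha}(M)$ of Jang's equation $\mc(u) - \tr(k)(u) = 0$ with $u|_{\del M} = 0$; in particular, no blow-up cylinders can occur.

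The next step is to rewrite Jang's equation in a form that exposes the zero-order term $\tr_M k - k(V,V)$. Setting
\begin{equation*}
V \;=\; \frac{D_M u}{\sqrt{1 + |D_M u|^2}},
\end{equation*}
so that $|V|_M \leq 1$ pointwise, a direct computation gives
\begin{equation*}
\tr(k)(u) \;=\; \left(g^{ij} - \frac{u^i u^j}{1+|D_M u|^2}\right) k_{ij} \;=\; \tr_M k - k(V, V).
\end{equation*}
By the very definition of $k_{\mathrm{min}}(\Omega)$, we have $\tr_M k(p) - k(v,v) \geq k_{\mathrm{min}}(\Omega)$ for every $p \in \Omega$ and every $v \in T_p M$ with $|v|\leq 1$. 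Thus $\mc(u) = \tr(k)(u) \geq k_{\mathrm{min}}(\Omega)$ on $\Omega$. On the other hand, in local coordinates $\mc(u) = \div_M V$, which is the divergence of a vector field of $g$-length at most one.

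The conclusion follows by integration and the divergence theorem on $\Omega$:
\begin{equation*}
k_{\mathrm{min}}(\Omega)\, \Vol(\Omega) \;\leq\; \int_{\Omega} \mc(u)\, d\vol_M \;=\; \int_{\del \Omega} \langle V, \eta\rangle\, d\sigma \;\leq\; \Area(\del \Omega),
\end{equation*}
where $\eta$ is the outward unit normal of $\del \Omega$ and the last inequality uses $|V|\leq 1$. This contradicts the hypothesis $k_{\mathrm{min}}(\Omega)\,\Vol(\Omega) > \Area(\del \Omega)$, so an apparent horizon must exist in $M$.

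The genuinely delicate step is the first one: the solvability of Jang's equation in the absence of apparent horizons. The nonlinearity of Jang's equation does not a priori forbid cylindrical blow-up, and ruling this out requires exactly the MOTS/MITS interpretation of the blow-up locus from Theorem \ref{thm:existence_for_Jangs_equation}. By contrast, once the global smooth graph $u$ is in hand, the remaining argument is an essentially elementary divergence-theorem comparison, in which the quantity $k_{\mathrm{min}}(\Omega)$ is naturally matched to the tangential trace of $k$ on the graph because the projected normal $V$ always satisfies $|V|\leq 1$. If $\del \Omega$ is merely Lipschitz rather than smooth, one would approximate $\Omega$ from inside by smooth subdomains before taking the limit, which is routine since $u$ is smooth and $V$ is bounded.
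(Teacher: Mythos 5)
Your proof is correct and follows essentially the same route as the paper: contradiction via the global solvability of the Dirichlet problem for Jang's equation in the absence of apparent horizons, the identity $\div_M V = \tr_M(k) - k(V,V)$ for $V$ the projected downward unit normal of the graph (with $|V|\leq 1$), and the divergence theorem on $\Omega$. The only additions are explicit verifications (the formula for $\tr(k)(u)$ and the remark on non-smooth $\del\Omega$) that the paper leaves implicit.
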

\begin{proof}
  The proof of this theorem is by contradiction. If there are no
  apparent horizons in $M$, then there exists a global solution of Jang's
  equation to the Dirichlet problem with zero boundary data. Denote the
  graph of this solution by $\JM$ and by $V = \pi_* \Jnu$, the
  orthogonal projection 
  of the downward unit normal. Then Jang's equation is equivalent to
  \begin{equation*}
    \div_M V = \tr_M(k) - k(V,V),
  \end{equation*}
  since $\div_M V$ is the mean curvature of $\JM$ with respect to
  $\Jnu$ and the right hand side is just the trace of $k$ on
  $\JM$. Integrating this on $\Omega\subset M$ yields
  \begin{equation*}
    k_\text{min}(\Omega)\Vol(\Omega)
    \leq
    \int_\Omega \tr_M(k) - k(V,V) 
    =
    \int_{\del\Omega} \la V,\nu\ra 
    \leq
    \Area(\del\Omega).
  \end{equation*}
  Here $\nu$ denotes the outward normal to $\del\Omega$ in $M$. This
  contradicts the assumptions of the theorem.
\end{proof}

\subsection{Existence and properties of outermost MOTS}
\label{sec:existence-mots}

In section~\ref{sec:posit-mass-theor} the potential blow-up of Jang's
equation at apparent horizons is an undesirable property that has to
be overcome. In section~\ref{sec:condensation-matter} the existence of
apparent horizons is a rather indirect consequence. In contrast, the
way Jang's equation is used to construct MOTS in section~\ref{sec:existence_of_MOTS_due_to_blow_up} is far more direct and can be used
to derive crucial properties of outermost MOTS.

To get started, fix a complete initial data set $(M,g,k)$, and assume
for simplicity that $M$ is compact and that $\del M$ satisfies
$\theta^+_{\del M}> 0$.  We say that a MOTS $\Sigma\subset M$ is
\emph{outermost} if it is of the form $\Sigma=\del \Omega$, where
$\Omega\subset M$, and the following holds: If $\Sigma'=\del\Omega'$
is any other MOTS, with $\Omega'\supset \Omega$, then
$\Omega'=\Omega$. In other words, if $\Sigma$ is outermost, then there
is no MOTS in the region $M\setminus\Omega$ exterior to $\Sigma$.

We expect the outermost MOTS to be the boundary of the trapped
region. To this end, we define a set $\Omega\subset M$ to be
\emph{trapped}, if $\theta^+_{\del\Omega}\leq0$. The trapped region
$\CT$ is then the union of all trapped sets \cite{Wald,HawEll},
\begin{equation*}
  \CT = \bigcup_{\Omega\ \text{is trapped}} \Omega.
\end{equation*} 
Using a slight extension of the existence Theorem
~\ref{thm:Schoens_theorem} adapted to weakly trapped boundaries,
cf. \cite[Section 5]{Andersson-Metzger:2009} and also \cite[Remark
4.1]{GAH}, it follows that a trapped region $\Omega$ as in the
definition of $\CT$ is contained in a trapped region $\Omega' \supset
\Omega$ whose boundary $\partial \Omega'$ is a MOTS, and such that
$\partial \Omega'$ is stable in the sense of MOTS and is $C$-almost
minimizing with respect to variations in $M \setminus \Omega'$.

To conclude smoothness of $\del\CT$ as for example in
\cite{Huisken-Ilmanen:2001} where the time-symmetric case $k \equiv 0$
is discussed, we need to verify three points. These are whether two
intersecting MOTS are contained inside one smooth MOTS that encloses
them, the embeddedness of $\del\CT$, and area bounds.

The question whether the union of two trapped sets is a trapped set
relates to the following problem. Given a sequence of MOTS, $\Sigma_n
= \del \Omega_n$, we wish to replace it by an increasing sequence
$\Sigma_n' = \del \Omega_n'$ so that $\Omega_m' \subset
\Omega_n'$ for all $m \leq n$, as in \cite{Huisken-Ilmanen:2001}. This
can be handled in two different ways. In \cite{Andersson-Metzger:2009}
a sewing lemma due to Kriele and Hayward \cite{Kriele-Hayward:1997}
was employed in conjunction with Theorem~\ref{thm:Schoens_theorem}
to conclude that if two sets $\Omega_1$ and $\Omega_2$ with
$\theta^+_{\del\Omega_i}= 0$ intersect, then there is $\tilde
\Omega\supset \Omega_1\cup\Omega_2$ with $\theta^+_{\del\tilde\Omega}
=0$. Alternatively, the Perron method and an
approximation argument can be used to find an enclosing MOTS
\cite[Remark 4.1]{GAH}.

To conclude embeddedness of $\del\CT$ we have to show that the limit
of such an increasing sequence of MOTS $\Sigma_n=\del\Omega_n$ is
embedded. Since all the $\Sigma_n$ are increasing and embedded, the
only crucial point is that the limit $\Sigma$ does not touch
itself on the outside. For minimal surfaces this scenario would be
ruled out by the maximum principle, which does not work for MOTS in
this situation. The problem is that locally two sheets of a MOTS may
touch, but with opposite orientation. The case of two touching spheres
in flat space illustrates this.  To show that this can be ruled out
for outermost MOTS, in \cite{Andersson-Metzger:2009} a quantity called
the \emph{outward injectivity radius} was introduced. It is then shown
that one can assume it to be bounded below along the sequence
$\Sigma_n$ as above. This bound yields a lower bound on the arc length
of a geodesic starting on $\Sigma_n$ in direction of the outer normal,
before it can intersect $\Sigma_n$ again. The argument in
\cite{Andersson-Metzger:2009} derives this property from the fact that
along a short geodesic that joins two points on $\Sigma_n$, a neck
with negative $\theta^+$ can be inserted. Then the sewing lemma can be
used to produce a barrier suitable for Theorem~\ref{thm:Schoens_theorem}. This procedure can only be applied a finite number of
times, since it can be shown that each surgery can be made at a place
where it consumes a fixed amount of volume outside of the initial
MOTS. This surgery requires curvature bounds for stable MOTS, which
have been derived in \cite{Andersson-Metzger:2005} in ambient
dimension 3.  Alternatively, one can use results from
\cite{GAH} based on the lower order properties of horizons
and the regularity theory of Schoen-Simon to conclude embeddedness of
$\Sigma$.  In fact, it is easy to see that if two sheets of the
hypersurface $\Sigma_n$ are close on the outside as above, then they
can be joined by a small catenoidal neck to save area. This would
contradict the almost minimizing property of $\Sigma_n$ with respect
to variations in the complement of $\Omega_n$. This approach also
works in higher dimensions.

This leaves as a last point the fact that the area of the surfaces
$\Sigma_n$ needs to be bounded. In \cite{GAH} these bounds are
immediate from the almost minimizing property. In
\cite{Andersson-Metzger:2009} it is shown that a lower bound on the
outward injectivity radius implies an upper bound on the area. This
follows from the observation that, given curvature estimates, the area
of the MOTS can be estimated by the volume of the outward part of an
embedded tubular neighborhood of radius $\rho$ divided by
$\rho$. Since the outward injectivity radius is bounded below for
outermost MOTS, one can take a fixed $\rho$ and conclude the area
bounds from the fact that there is only finite volume outside the
MOTS. The approach in \cite{Andersson-Metzger:2009} is specific to
ambient dimension three, since the lower bound on the outward
injectivity radius requires the surface in question to have curvature
bounded independently of the area.

Let us investigate the topology of the outermost MOTS. In three
dimensions outermost MOTS (assuming an outer untrapped barrier) are
unions of topological spheres. This is well known in the time
symmetric case in three dimensions, where MOTS are minimal surfaces,
for example \cite{Galloway:1993} uses minimal surfaces techniques from
\cite{Meeks-Simon-Yau:1982} or \cite[Lemma 4.1]{Huisken-Ilmanen:2001}
where this is proven without curvature restriction.

For MOTS the question of topology was answered by Galloway and Schoen
\cite{Galloway-Schoen:2006}, who showed that any stable MOTS must be
of non-negative Yamabe type, provided the dominant energy condition
holds. The argument is based on the Schoen-Yau identity, which follows
for stable MOTS and a calculation similar to
section~\ref{sec:posit-mass-theor}. Galloway \cite{Galloway:2008} was
able to exclude the marginal case for smooth outermost MOTS. The
argument is based on an observation that in case of $\tr k \leq 0$ a
stable MOTS $\Sigma$ with Yamabe type 0 has an integrable Jacobi field
that leads to a local foliation by MOTS on the outside of $\Sigma$,
which contradicts the condition that $\Sigma$ be outermost. The case
of a general $\tr k$ can be reduced to this case by bending the data
$(M,g,k)$ in its ambient space-time to the past, and using the
Raychaudhuri equation to show that the foliation of MOTS in this new
slice gives rise to trapped surfaces outside of $\Sigma$ in the
original data set.

Collecting these results, we arrive at the following comprehensive
theorem about the existence, regularity, and properties of the trapped
region
\cite{Andersson-Metzger:2005,Andersson-Metzger:2009,GAH,Plateau,Galloway-Schoen:2006,Galloway:2008}.
\begin{theorem}
  Assume that $(M,g,k)$ is an asymptotically flat initial data set of
  dimension $2 \leq n\leq 7$. There is an explicit constant $C>0$
  depending only on the geometry of $(M,g,k)$ such that the following
  hold:

  If the trapped region $\CT$ of $(M, g, k)$ is non-empty, then
  $\del\CT$ is a smooth, embedded, outermost and stable MOTS. The area
  and the second fundamental form of $\del \CT$ are bounded by $C$ and its
  outward injectivity radius is bounded below by
  $\frac{1}{C}$. Furthermore, $\del\CT$ is $C$-almost minimizing with
  respect to variations in $M \setminus \CT$.
  
  If $(M,g,k)$ satisfies the dominant energy condition, then $\del\CT$
  is the union of components with non-negative Yamabe-type. If
  $(M,g,k)$ is a slice of a space-time satisfying the dominant energy
  condition, then the components of $\del \CT$ have positive
  Yamabe-type.
\end{theorem}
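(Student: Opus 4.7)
The plan is to realize $\del\CT$ as the limit of a monotone sequence of MOTS produced by Theorem \ref{thm:Schoens_theorem} and to extract its regularity and topology from uniform a priori estimates together with the Galloway-Schoen rearrangement. First, I would reduce the problem to constructing a monotone exhausting sequence of trapped sets with MOTS boundaries. By definition $\CT$ is the union of trapped sets $\Omega$ with $\theta^+_{\del \Omega}\le 0$, and any such $\Omega$ is enclosed in a larger trapped set $\Omega' \supset \Omega$ whose boundary is a MOTS that is stable in the sense of MOTS and $C$-almost minimizing with respect to exterior variations; this uses a weakly trapped version of Theorem \ref{thm:Schoens_theorem} together with the outer untrapped barrier on $\del M$. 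To enforce monotonicity, whenever two such MOTS-bounded trapped sets intersect I would use either the Kriele-Hayward sewing construction of \cite{Andersson-Metzger:2009} or the Perron-method approximation of \cite{GAH} to produce a single enclosing trapped set with MOTS boundary. Exhausting $\CT$ by trapped sets and iterating yields an increasing sequence with MOTS boundaries $\Sigma_n$, each stable and $C$-almost minimizing in the exterior, for a constant $C$ depending only on $|k|_{\mathcal{C}(\bar M)}$ and the geometry of $(M,g)$.

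The next step is to extract uniform geometric bounds on $\Sigma_n$ and pass to a smooth embedded limit $\del\CT$. In ambient dimension $3$ the curvature bound follows from the stability estimates of \cite{Andersson-Metzger:2005}, which crucially do not require a priori area bounds; in dimensions $4 \le n \le 7$ one instead combines the almost minimizing property with the Schoen-Simon regularity theory \cite{Schoen-Simon:1981} as in \cite{GAH}, with area bounds coming from comparison with $\del M$. \textbf{The main technical obstacle is embeddedness of the limit at points where two sheets of $\Sigma_n$ come together on the outside} --- a configuration (illustrated by two externally tangent spheres) that the MOTS maximum principle does not directly preclude. In dimension $3$ I would follow \cite{Andersson-Metzger:2009} in bounding the outward injectivity radius of $\Sigma_n$ from below: if two sheets were externally close, a short neck with $\theta^+<0$ could be inserted along a geodesic joining them, and the sewing lemma would yield a trapped region strictly outside $\Sigma_n$, contradicting its outermost construction; the curvature estimates ensure each such surgery consumes a definite amount of exterior volume, bounding the number of possible surgeries. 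Alternatively, in any dimension a catenoidal neck inserted between nearly touching external sheets would strictly decrease area modulo a bulk term controlled by $C|E\Delta F|$, contradicting the almost minimizing property of $\Sigma_n$ in $M \setminus \Omega_n$ \cite{GAH}. The lower bound on outward injectivity in turn yields the area bound in dimension $3$ by comparing the area of $\Sigma_n$ to the volume of an embedded exterior tubular neighborhood of controlled radius.

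Finally, the Yamabe-type statements follow from the Galloway-Schoen \cite{Galloway-Schoen:2006} rearrangement \eqref{eqn:Galloway-Schoen_rearrangement}: inserting the principal eigenfunction $\Theta$ of $\L_{\del \CT}$ into the stability inequality and invoking the dominant energy condition gives an integral inequality on $\del\CT$ that is strictly stronger than non-negativity of the Yamabe operator spectrum, so each component of $\del\CT$ has non-negative Yamabe type. To upgrade this to strict positivity under the spacetime dominant energy condition, I would appeal to \cite{Galloway:2008}: in the case $\tr_M k \le 0$, a Yamabe-type zero component of $\del \CT$ would admit an integrable positive Jacobi field producing a local MOTS foliation on its outside, contradicting outermostness; the general case reduces to $\tr_M k \le 0$ by bending the slice $M$ backwards in the ambient spacetime and transporting the resulting foliation through the Raychaudhuri equation to produce genuinely trapped surfaces strictly exterior to $\del\CT$ in the original data, again contradicting the definition of $\CT$.
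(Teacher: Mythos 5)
Your proposal follows essentially the same route as the paper's own assembly of the result: a monotone sequence of MOTS-bounded trapped sets via the weakly trapped extension of Theorem \ref{thm:Schoens_theorem}, enclosure of intersecting trapped sets via the Kriele--Hayward sewing lemma or the Perron method, curvature/area control via the stability estimates of \cite{Andersson-Metzger:2005} in dimension $3$ and the almost-minimizing property with Schoen--Simon theory in dimensions $4 \leq n \leq 7$, embeddedness via the outward injectivity radius or the catenoidal-neck comparison, and the Yamabe-type statements via \cite{Galloway-Schoen:2006} and \cite{Galloway:2008}. This matches the paper's argument in structure and in all key lemmas, so there is nothing further to add.
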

For the explicit dependence of the constant, see the original
references \cite{Andersson-Metzger:2005,Andersson-Metzger:2009,GAH,Plateau}.

To conclude, we wish to point out that the existence of the trapped
region in $(M,g,k)$ allows the construction of blow-up solutions to
Jang's equation. These are nontrivial solutions to Jang's equation
which are defined on $M\setminus (\CT\cup \Omega_-)$ where
$\Omega_-\subset M$ is such that the boundary components of $\Omega_-$
disjoint from $\del\CT$ are MITS. The construction uses the techniques
discussed in section~\ref{sec:existence_of_MOTS_due_to_blow_up} and
is described in \cite{Metzger:2010}. A catch however is that some or
all of the components of $\del\CT$ may lie in the interior of
$\Omega_-$ if they are enclosed by surfaces $\Sigma$ satisfying $\mc_\Sigma -
\tr_\Sigma(k) =0$.
%%% Local Variables: 
%%% mode: latex
%%% TeX-master: "master"
%%% ispell-local-dictionary: "en_US"
%%% End: 

\section{Outlook}\label{sec:outlook}
In this section we indicate a couple of directions for further research 
related to the ideas discussed
in this survey. 
\subsection{Generalizations of Jang's equation} 
The Penrose inequality 
\begin{equation*}
  m_{\text{ADM}} \geq \sqrt{\frac{A(\Sigma)}{16\pi}}  
\end{equation*}
is an equality only for slices in the Schwarzschild spacetime. As we have seen, Jang's
equation was motivated by the idea of ``detecting'' data sets which generate
a Minkowski geometry $-dt^2 + g^{\text{flat}}$. Based on this observation, it
appears reasonable that 
any approach to proving the general Penrose inequality must utilize a setting
which is sensitive to the 
Schwarzschild geometry. 
Motivated by this line of thought, 
Bray and Khuri \cite{Bray-Khuri:2009A,Bray-Khuri:2009B}
recently 
extended Jang's equation to a system of equations which is designed to identify 
slices of the Schwarzschild space-time. 

Recall that the Schwarzschild spacetime in isotropic coordinates 
can be written as a warped product 
\newcommand{\Schw}{\text{Schw}}
with metric 
$
g^{\Schw} -dt^2 \phi^2 
$
where 
\begin{equation*}
  \phi = \frac{1-\frac{2m}{|x|}}{1+\frac{2m}{|x|}} 
  , \quad g^{\Schw} = \left ( 1 + \frac{m}{2|x|} \right) ^4 \delta_{ij} dx^i dx^j.
\end{equation*}
The condition that an initial data set $(M,g,k)$ can be represented as
a graph $(x, u(x))$ in the Schwarzschild spacetime is then that
\begin{equation*}
  g_{ab} = g^{\Schw}_{ab} - \phi^2 D_a u D_b u , \quad
  k_{ab} = \pi_{ab},
\end{equation*}
where $\pi_{ab}$ is the second fundamental form of $\graph{u}$ in the
Schwarzschild spacetime. As shown by Bray and Khuri, one may also in this
more general situation introduce defects in terms of which the
condition that $(M,g,k)$ is the data induced on $\graph(u)$ in the
Schwarzschild spacetime can be characterized. 
As in the classical setup, these data can be calculated in terms of a 
related Riemannian spacetime, which is a product
over $(M,g)$. In the generalization this is a
warped product over $M$ with warping function $\phi^2$, i.e.  
$(M \times \mathbb R, g + \phi^2 dt^2)$. This spacetime is additionally 
endowed with a symmetric
2-tensor $K$ which is a lift of $k$, the second fundamental from of $M$ in
the spacetime, to the warped product. Recall that in the classical Jang
equation, the lift of $k$ is simply $\pi^* k$, where $\pi$ is the vertical
projection. For the generalized Jang's equation, the lift $K$ is defined as 
\begin{equation*}
  K = \pi^* k + \phi d\phi(N)  dt^2 
\end{equation*}
where $N$ is the downward pointing normal of $\graph(u)$ in the warped product.

The generalized Jang's equation now takes the form 
\begin{equation}\label{eq:Jang-gen} 
\mc_{\JM} - \tr_{\JM} K = 0,
\end{equation} 
cf. \cite[section 2]{Bray-Khuri:2009B}. 
Due to the lack of symmetry in the warped product, it is necessary to
consider the warping function $\phi$ as an unknown and add an equation for
this as well. 

Bray and Khuri \cite{Bray-Khuri:2009A,Bray-Khuri:2009B} propose three
different systems of equations 
incorporating the generalized Jang's equation together with
equations for $\phi$, which have the potential for yielding a proof of a
Penrose inequality. As shown by the counter-example of Carrasco and Mars
\cite{carrasco:mars}, one version of the Penrose inequality proposed by Bray
and Khuri, in terms of generalized apparent horizons, is not
valid. However, in spite of this counter-example, the approach introduced by
Bray and Khuri may still be applicable to other versions of the 
Penrose inequality, see the survey paper \cite{mars:survey:PI} for further
discussion. 

The analysis of the systems proposed by Bray and Khuri is made more difficult
by the fact that $\phi$ tends to zero at the horizon and as a consequence the
generalized Jang's equation is degenerate there. 
Bray and Khuri have been able to carry out the necessary analysis in the
spherically symmetric case, providing a new proof of the general PI in this
restricted case. 

\newcommand{\HH}{\mathcal H} 

\subsection{Evolution of MOTS}
Consider a spacetime which is the
maximal development of asymptotically flat data on $M$ 
for an Einstein-matter system
satisfying the DEC. 
Supposing that the Cauchy surface contains a stable MOTS $\Sigma$, which we
without loss of generality can assume to be outermost, the spacetime
contains a black hole, and under some weak genericity conditions
the MOTS lies on a spacelike marginally outer
trapped tube (MOTT) $\HH$. The MOTT is determined by a choice of 
Cauchy foliation of the spacetime. 

The outermost MOTT is, with the exception of jump-times
(see below) space-like in the generic case. 
Thus, the MOTT is an outflow
boundary for causal equations in its exterior, and the maximal development of
the restriction of the 
Cauchy data on $M$ contains the exterior to $\HH$. 

\begin{figure}[!hbt]
\centering 
\resizebox{!}{2in}{\input{pic/outflow}} 
\caption{}
\end{figure}

This leads to an
exterior Cauchy problem for eg. the Einstein equations in spacetime harmonic coordinates. Let a Cauchy
surface $M$ be given, containing an outermost MOTS. The exterior Cauchy
problem is the initial-boundary value problem for the evolution of this
system in the closed exterior of the MOTT, including the MOTS boundary, 
evolving from the outermost MOTS. 
This problem can be expected to be relevant for the problem of Kerr
stability, and in particular it is interesting to prove a useful continuation
criterion for it.

If
we consider the maximal extension of the MOTT to the future in a spacetime
with a regular Cauchy foliation, one expects that
after a finite sequence of jumps \cite{Andersson-Metzger:2009}
this eventually approaches the event horizon. 
It is an interesting question to understand the details of this scenario. In
particular, in terms of the Kerr stability problem, one expects to have Price
law decay of the matter and gravitational energy flux across the event
horizon. It is reasonable to speculate that the corresponding statement holds
for the fluxes across the (weakly) spacelike MOTT. 
As the strength of the flux decreases this has the effect of turning the MOTT 
null.

This leads to the expectation that
the MOTT asymptotically approaches the event horizon and terminates at future
timelike infinity. Since the  MOTT is expected to rapidly turn null, one
expects the distance along the MOTT to its boundary at future timelike
infinity to be finite. This behavior was verified in the spherically
symmetric case by Williams
\cite{williams:2008} who showed that for an Einstein-scalar field spacetime
with decay along the event horizon of the form $v^{-2-\eps}$, the MOTT has
its boundary at a finite distance. As pointed out by Williams, the required 
decay is
weaker than the expected Price law decay of $v^{-3}$. If this scenario is
correct, it is likely 
there is a relation between the decay of fluxes across the MOTT and the
regularity at the boundary of the MOTT at future timelike infinity. 
We mention here also the work of Ashtekar and Krishnan
\cite{ashtekar:krishnan:2002,ashtekar:krishnan:2003} in
the dynamical horizon (DH) 
setting\footnote{A MOTT is a dynamical horizon if it is
spacelike and foliated by marginally trapped surfaces, i.e. MOTS which also have
negative expansion with respect to the ingoing null normal, see
\cite[section 2.2]{ashtekar:galloway:2005} for details.} 
\providecommand{\EH}{\text{event horizon}}
\providecommand{\Scri}{\mathcal{I}}
\begin{figure}[!hbt]
\centering 
\resizebox{!}{2in}{\input{pic/iplus}} 
\caption{}
\end{figure}
which shows that the
area of the cross sections of a DH is increasing (a quasi-local 
version of Hawking's area law for event horizons), and gives expressions for
the rate of increase of area in terms of the flux across the DH.

As discussed in section \ref{sec:existence-mots}, once a MOTS is created in
an evolving spacetime on a Cauchy surface $M_0$
then, if the spacetime satisfies the null
energy condition, each Cauchy slice in the future of $M_0$ contains
an outermost MOTS. Further, each time a MOTS $\Sigma_0$ 
is created, it is through a
bifurcation process which leads to an inner and an outer branch of the MOTT
originating at $\Sigma_0$. The outer
branch may jump but remains stable, while one expects that the inner branch
eventually becomes unstable. 

It is of interest to understand in more detail the space-time track
of the MOTS. 
The generalized maximum principle for MOTS, cf. section
\ref{sec:existence-mots} 
implies that two locally outermost MOTS which approach sufficiently closely
must eventually  be surrounded by a MOTS. 
In terms of the evolution of binary
black hole data this means that two black holes (as determined by their
apparent horizons) which approach sufficiently closely, eventually are
swallowed by a larger black hole surrounding the two. 

Ashtekar and Galloway
\cite{ashtekar:galloway:2005} proved a
uniqueness result which gives further information on the spacetime geometry
of dynamical horizons, a special case of MOTTs. This result states that in a
spacetime satisfying the null energy condition, 
the past domain of dependence of a DH
cannot contain a marginally trapped surface, see \cite[Theorem
  4.1]{ashtekar:galloway:2005}. It would be interesting to understand better
whether results of this type hold for MOTTs and MOTSs. 

If one considers two BH's, one of which is small relative to the other, it is
natural to consider a scenario where the small BH falls into the larger
one. In this case, the generalized maximum principle for MOTS does not give
any information about the small BH crossing the horizon of the large one, but
the classical maximum principle prevents one MOTS from ``sliding'' inside
another. In particular, the configuration shown in fig. \ref{fig:twoBH},
corresponding to the moment when the small BH moves inside the larger BH is
ruled out by the maximum principle. 
\renewcommand{\EH}{\text{event horizon}}
\begin{figure}[!hbt]
\centering 
\resizebox{!}{2in}{\input{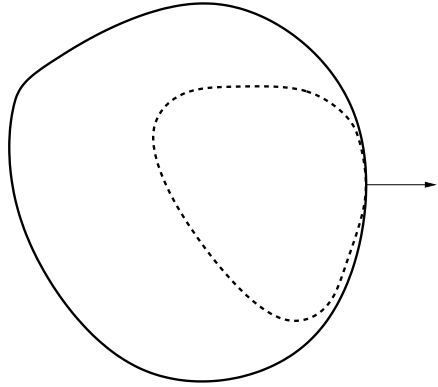}} 
\label{fig:twoBH}
\caption{This situation is ruled out by the maximum principle}
\end{figure}
Therefore one expects that as the BH's coalesce, the two apparent horizons
will eventually approach each other and merge. It is interesting to speculate
whether the MOTS in such a situation form a
continuous spacetime track, with one branch connecting the merging horizons
with the outermost, surrounding, MOTT. See \cite{Jeff} for details. 

\section{Concluding remarks} 
In this paper we have given a survey of the state of the art concerning
Jang's equation, MOTS, implications on the existence of black holes and related issues. 
The main motivation for considering
these issues has so far been in the asymptotically flat case. However, it is
important to recall that also in considering the Cauchy problem for the
Einstein equations in strong field situations, analogues of MOTS and trapped
regions can be expected to play an important role, and therefore some of the
topics discussed in this survey may have applications in future
approaches to global evolution problems and the cosmic censorship problem.

\subsection*{Acknowledgements} 
LA and ME are grateful to the organizers of CADS IV for their support
and hospitality during the conference in Nahariya.  We wish
to thank Robert Bartnik, Hubert Bray, Markus Khuri, Marc Mars, Pengzi Miao, Todd
Oliynyk, Richard Schoen, and Walter Simon for helpful conversations on
topics related to Jang's equation and the Penrose inequality.

%%% Local Variables: 
%%% mode: latex
%%% TeX-master: "master"
%%% ispell-dictionary: en_US
%%% End: 

\bibliographystyle{abbrv}
\bibliography{references}

\begin{thebibliography}{10}

\bibitem{Andersson-Mars-Simon:2005}
L.~Andersson, M.~Mars, and W.~Simon.
\newblock Local existence of dynamical and trapping horizons.
\newblock {\em Phys. Rev. Lett.}, 95:111102, 2005.
\newblock arXiv:gr-qc/0506013.

\bibitem{Andersson-Metzger:2005}
L.~Andersson and J.~Metzger.
\newblock Curvature estimates for stable marginally trapped surfaces.
\newblock arXiv:gr-qc/0512106, 2005.

\bibitem{Andersson-Metzger:2009}
L.~Andersson and J.~Metzger.
\newblock The area of horizons and the trapped region.
\newblock {\em Comm. Math. Phys.}, 290(3):941--972, 2009.

\bibitem{ADM:1961}
R.~Arnowitt, S.~Deser, and C.~W. Misner.
\newblock The dynamics of general relativity.
\newblock In {\em Gravitation: {A}n introduction to current research}, pages
  227--265. Wiley, New York, 1962.

\bibitem{ashtekar:galloway:2005}
A.~Ashtekar and G.~J. Galloway.
\newblock Some uniqueness results for dynamical horizons.
\newblock {\em Adv. Theor. Math. Phys.}, 9(1):1--30, 2005.

\bibitem{ashtekar:krishnan:2002}
A.~Ashtekar and B.~Krishnan.
\newblock Dynamical horizons: energy, angular momentum, fluxes, and balance
  laws.
\newblock {\em Phys. Rev. Lett.}, 89(26):261101, 4, 2002.

\bibitem{ashtekar:krishnan:2003}
A.~Ashtekar and B.~Krishnan.
\newblock Dynamical horizons and their properties.
\newblock {\em Phys. Rev. D (3)}, 68(10):104030, 25, 2003.

\bibitem{Bartnik:1986}
R.~Bartnik.
\newblock The mass of an asymptotically flat manifold.
\newblock {\em Comm. Pure Appl. Math.}, 39(5):661--693, 1986.

\bibitem{Bray:2001}
H.~L. Bray.
\newblock Proof of the {R}iemannian {P}enrose inequality using the positive
  mass theorem.
\newblock {\em J. Differential Geom.}, 59(2):177--267, 2001.

\bibitem{Bray-Khuri:2009B}
H.~L. Bray and M.~A. Khuri.
\newblock A jang equation approach to the penrose inequality.
\newblock arXiv.org:0910.4785, 2009.

\bibitem{Bray-Khuri:2009A}
H.~L. Bray and M.~A. Khuri.
\newblock P.d.e.'s which imply the penrose conjecture.
\newblock arXiv.org:0905.2622, 2009.

\bibitem{carrasco:mars}
A.~{Carrasco} and M.~{Mars}.
\newblock {FAST TRACK COMMUNICATION: A counterexample to a recent version of
  the Penrose conjecture}.
\newblock {\em Classical and Quantum Gravity}, 27(6):062001--+, Mar. 2010.

\bibitem{Clarke:1988}
C.~J.~S. Clarke.
\newblock A condition for forming trapped surfaces.
\newblock {\em Classical Quantum Gravity}, 5(7):1029--1032, 1988.

\bibitem{DuzaarSteffen}
F.~Duzaar and K.~Steffen.
\newblock {$\lambda$} minimizing currents.
\newblock {\em Manuscripta Math.}, 80(4):403--447, 1993.

\bibitem{Eardley:1995}
D.~M. Eardley.
\newblock Gravitational collapse of vacuum gravitational field configurations.
\newblock {\em J. Math. Phys.}, 36(6):3004--3011, 1995.

\bibitem{Eichmair:PMT}
M.~Eichmair.
\newblock The space-time positive mass theorem in dimensions $3\leq n\leq 7$.
\newblock forthcoming.

\bibitem{Plateau}
M.~Eichmair.
\newblock The plateau problem for marginally trapped surfaces.
\newblock {\em J. Differential Geom.}, 83(3):551--584, 2009.

\bibitem{GAH}
M.~Eichmair.
\newblock Existence, regularity, and properties of generalized apparent
  horizons.
\newblock {\em Comm. Math. Phys.}, 294(3):745--760, 2010.

\bibitem{Eichmair-Metzger:2010}
M.~Eichmair and J.~Metzger.
\newblock Remarks on the blow-up of jang's equation, 2010.
\newblock in preparation.

\bibitem{Galloway:1993}
G.~J. Galloway.
\newblock On the topology of black holes.
\newblock {\em Comm. Math. Phys.}, 151(1):53--66, 1993.

\bibitem{Galloway:2008}
G.~J. Galloway.
\newblock Rigidity of marginally trapped surfaces and the topology of black
  holes.
\newblock {\em Comm. Anal. Geom.}, 16(1):217--229, 2008.

\bibitem{Galloway-OMurchadha:2008}
G.~J. Galloway and N.~{\'O}~Murchadha.
\newblock Some remarks on the size of bodies and black holes.
\newblock {\em Classical Quantum Gravity}, 25(10):105009, 9, 2008.

\bibitem{Galloway-Schoen:2006}
G.~J. Galloway and R.~Schoen.
\newblock A generalization of {H}awking's black hole topology theorem to higher
  dimensions.
\newblock {\em Comm. Math. Phys.}, 266(2):571--576, 2006.

\bibitem{Geroch:1973}
R.~{Geroch}.
\newblock {Energy Extraction}.
\newblock In {D.~J.~Hegyi}, editor, {\em Sixth Texas Symposium on Relativistic
  Astrophysics}, volume 224 of {\em New York Academy Sciences Annals}, pages
  108--+, 1973.

\bibitem{GT}
D.~Gilbarg and N.~S. Trudinger.
\newblock {\em Elliptic partial differential equations of second order}.
\newblock Classics in Mathematics. Springer-Verlag, Berlin, 2001.
\newblock Reprint of the 1998 edition.

\bibitem{Giusti}
E.~Giusti.
\newblock {\em Minimal surfaces and functions of bounded variation}, volume~80
  of {\em Monographs in Mathematics}.
\newblock Birkh\"auser Verlag, Basel, 1984.

\bibitem{HRS08}
L.~Hauswirth, H.~Rosenberg, and J.~Spruck.
\newblock Infinite boundary value problems for constant mean curvature graphs
  in {$\mathbb H^2\times\mathbb R$} and {$\mathbb S^2\times\mathbb R$}.
\newblock {\em Amer. J. Math.}, 131(1):195--226, 2009.

\bibitem{hawking:1968}
S.~W. {Hawking}.
\newblock {Gravitational Radiation in an Expanding Universe}.
\newblock {\em Journal of Mathematical Physics}, 9:598--604, Apr. 1968.

\bibitem{HawEll}
S.~W. Hawking and G.~F.~R. Ellis.
\newblock {\em The large scale structure of space-time}.
\newblock Cambridge University Press, London, 1973.
\newblock Cambridge Monographs on Mathematical Physics, No. 1.

\bibitem{Huisken-Ilmanen:2001}
G.~Huisken and T.~Ilmanen.
\newblock The inverse mean curvature flow and the {R}iemannian {P}enrose
  inequality.
\newblock {\em J. Differential Geom.}, 59(3):353--437, 2001.

\bibitem{Jang:1978}
P.~S. Jang.
\newblock On the positivity of energy in general relativity.
\newblock {\em J. Math. Phys.}, 19(5):1152--1155, 1978.

\bibitem{jang:wald:1977}
P.~S. {Jang} and R.~M. {Wald}.
\newblock {The positive energy conjecture and the cosmic censor hypothesis}.
\newblock {\em Journal of Mathematical Physics}, 18:41--44, Jan. 1977.

\bibitem{JenkinsSerrin68}
H.~Jenkins and J.~Serrin.
\newblock Variational problems of minimal surface type. {III}. {T}he
  {D}irichlet problem with infinite data.
\newblock {\em Arch. Rational Mech. Anal.}, 29:304--322, 1968.

\bibitem{Kriele-Hayward:1997}
M.~Kriele and S.~A. Hayward.
\newblock Outer trapped surfaces and their apparent horizon.
\newblock {\em J. Math. Phys.}, 38(3):1593--1604, 1997.

\bibitem{liu-yau:2004}
C.-C.~M. Liu and S.-T. Yau.
\newblock Positivity of quasilocal mass.
\newblock {\em Phys. Rev. Lett.}, 90(23):231102, 4, 2003.

\bibitem{liu-yau:2006}
C.-C.~M. Liu and S.-T. Yau.
\newblock Positivity of quasi-local mass. {II}.
\newblock {\em J. Amer. Math. Soc.}, 19(1):181--204 (electronic), 2006.

\bibitem{Lohkamp:1999}
J.~Lohkamp.
\newblock Scalar curvature and hammocks.
\newblock {\em Math. Ann.}, 313(3):385--407, 1999.

\bibitem{Lohkamp:2006}
J.~Lohkamp.
\newblock The higher dimensional positive mass theorem i.
\newblock arXiv.org:math/0608795, 2006.

\bibitem{Malec-OMurchadha:2004}
E.~Malec and N.~{\'O}~Murchadha.
\newblock The {J}ang equation, apparent horizons and the {P}enrose inequality.
\newblock {\em Classical Quantum Gravity}, 21(24):5777--5787, 2004.

\bibitem{mars:survey:PI}
M.~{Mars}.
\newblock {TOPICAL REVIEW: Present status of the Penrose inequality}.
\newblock {\em Classical and Quantum Gravity}, 26(19):193001--+, Oct. 2009.

\bibitem{MM}
U.~Massari and M.~Miranda.
\newblock {\em Minimal surfaces of codimension one}, volume~91 of {\em
  North-Holland Mathematics Studies}.
\newblock North-Holland Publishing Co., Amsterdam, 1984.
\newblock Notas de Matem{\'a}tica [Mathematical Notes], 95.

\bibitem{Meeks-Simon-Yau:1982}
W.~Meeks, III, L.~Simon, and S.~T. Yau.
\newblock Embedded minimal surfaces, exotic spheres, and manifolds with
  positive {R}icci curvature.
\newblock {\em Ann. of Math. (2)}, 116(3):621--659, 1982.

\bibitem{Metzger:2010}
J.~Metzger.
\newblock Blowup of jang’s equation at outermost marginally trapped surfaces.
\newblock {\em Comm. Math. Phys.}, 294:61--72, 2010.

\bibitem{Jeff}
P.~M\"osta, L.~Andersson, J.~Metzger, B.~Szilagyi, and J.~Winicour.
\newblock The merger of small and large black holes, 2010.
\newblock in preparation.

\bibitem{omurchadha:1986}
N.~{\'O}~Murchadha.
\newblock Total energy momentum in general relativity.
\newblock {\em J. Math. Phys.}, 27(8):2111--2128, 1986.

\bibitem{Parker-Taubes:1982}
T.~Parker and C.~H. Taubes.
\newblock On {W}itten's proof of the positive energy theorem.
\newblock {\em Comm. Math. Phys.}, 84(2):223--238, 1982.

\bibitem{Schoen:2004}
R.~Schoen.
\newblock {Talk given at the Miami Waves conference}, January 2004.

\bibitem{Schoen-Simon:1981}
R.~Schoen and L.~Simon.
\newblock Regularity of stable minimal hypersurfaces.
\newblock {\em Comm. Pure Appl. Math.}, 34(6):741--797, 1981.

\bibitem{Schoen-Simon-Yau}
R.~Schoen, L.~Simon, and S.~T. Yau.
\newblock Curvature estimates for minimal hypersurfaces.
\newblock {\em Acta Math.}, 134(3-4):275--288, 1975.

\bibitem{Schoen-Yau:1979-pos-scalar}
R.~Schoen and S.~T. Yau.
\newblock Existence of incompressible minimal surfaces and the topology of
  three-dimensional manifolds with nonnegative scalar curvature.
\newblock {\em Ann. of Math. (2)}, 110(1):127--142, 1979.

\bibitem{Schoen-Yau:1979:PMTI}
R.~Schoen and S.~T. Yau.
\newblock On the proof of the positive mass conjecture in general relativity.
\newblock {\em Comm. Math. Phys.}, 65(1):45--76, 1979.

\bibitem{Schoen-Yau:1979}
R.~Schoen and S.~T. Yau.
\newblock On the structure of manifolds with positive scalar curvature.
\newblock {\em Manuscripta Math.}, 28(1-3):159--183, 1979.

\bibitem{Schoen-Yau:1979:PRL}
R.~Schoen and S.~T. Yau.
\newblock Positivity of the total mass of a general space-time.
\newblock {\em Phys. Rev. Lett.}, 43(20):1457--1459, 1979.

\bibitem{Schoen-Yau:1981:energy}
R.~Schoen and S.~T. Yau.
\newblock The energy and the linear momentum of space-times in general
  relativity.
\newblock {\em Comm. Math. Phys.}, 79(1):47--51, 1981.

\bibitem{Schoen-Yau:1981:PMTII}
R.~Schoen and S.~T. Yau.
\newblock Proof of the positive mass theorem. {II}.
\newblock {\em Comm. Math. Phys.}, 79(2):231--260, 1981.

\bibitem{Schoen-Yau:1983:cond}
R.~Schoen and S.~T. Yau.
\newblock The existence of a black hole due to condensation of matter.
\newblock {\em Comm. Math. Phys.}, 90(4):575--579, 1983.

\bibitem{Schoen:1987}
R.~M. Schoen.
\newblock Variational theory for the total scalar curvature functional for
  {R}iemannian metrics and related topics.
\newblock In {\em Topics in calculus of variations ({M}ontecatini {T}erme,
  1987)}, volume 1365 of {\em Lecture Notes in Math.}, pages 120--154.
  Springer, Berlin, 1989.

\bibitem{Serrin}
J.~Serrin.
\newblock The {D}irichlet problem for surfaces of constant mean curvature.
\newblock {\em Proc. London Math. Soc. (3)}, 21:361--384, 1970.

\bibitem{shi-tam:2002}
Y.~Shi and L.-F. Tam.
\newblock Positive mass theorem and the boundary behaviors of compact manifolds
  with nonnegative scalar curvature.
\newblock {\em J. Differential Geom.}, 62(1):79--125, 2002.

\bibitem{Spruck72}
J.~Spruck.
\newblock Infinite boundary value problems for surfaces of constant mean
  curvature.
\newblock {\em Arch. Rational Mech. Anal.}, 49:1--31, 1972/73.

\bibitem{Wald}
R.~M. Wald.
\newblock {\em General relativity}.
\newblock University of Chicago Press, Chicago, IL, 1984.

\bibitem{williams:2008}
C.~Williams.
\newblock Asymptotic behavior of spherically symmetric marginally trapped
  tubes.
\newblock {\em Ann. Henri Poincar\'e}, 9(6):1029--1067, 2008.

\bibitem{Witten:1981}
E.~Witten.
\newblock A new proof of the positive energy theorem.
\newblock {\em Comm. Math. Phys.}, 80(3):381--402, 1981.

\bibitem{Yau:2001}
S.~T. Yau.
\newblock Geometry of three manifolds and existence of black hole due to
  boundary effect.
\newblock {\em Adv. Theor. Math. Phys.}, 5(4):755--767, 2001.

\end{thebibliography}
\end{document}